\newtheorem{theorem}{Theorem}
\newtheorem{lemma}{Lemma}
\newtheorem{definition}{Definition}
\newtheorem{proposition}{Proposition}
\newtheorem{assumption}{Assumption}
\newtheorem{example}{Example}
\newcommand{\rev}[1]{{\color{blue}#1}}
\newcommand{\com}[1]{\textbf{\color{red} (Comment: #1) }}
\newcommand{\comg}[1]{\textbf{\color{blue} (COMMENT: #1)}}
\newcommand{\response}[1]{\textbf{\color{blue} (RESPONSE: #1)}}
\newcommand{\rev}[1]{#1}
\newcommand{\com}[1]{}
\newcommand{\comg}[1]{}
\newcommand{\response}[1]{}
\newenvironment{conditions}
{\par\vspace{\abovedisplayskip}\noindent\begin{tabular}{>{$}l<{$} @{${}{}$} l}}
	{\end{tabular}\par\vspace{\belowdisplayskip}}
\def\N{\mathcal{N}}                 
\def\K{\mathcal{K}}                 
\def\E{\mathcal{E}}                 
\def\S{\mathcal{S}}
\def\KSize{L}
\def\Mcache{M^{ca}}
\def\Q{\boldsymbol{Q}}
\def\Qdown{Q^{down}}
\def\Qcpu{Q^{cpu}}
\def\Qup{Q^{up}}
\def\Qstore{\boldsymbol{Q}^{ca}}
\def\qstore{Q^{ca}}
\def\Qd2d{Q^{d2d}}
\def\edown{c^{down}}
\def\ecpu{c^{cpu}}
\def\eup{c^{up}}
\def\ed2d{c^{d2d}}
\def\D{\boldsymbol{D}}
\def\device{u}
\def\Din{\boldsymbol{D}^{in}}
\def\din{{D}^{in}}
\def\Dcpu{D^{cpu}}
\def\Dout{\boldsymbol{D}^{out}}
\def\Dup{\boldsymbol{D}^{up}}
\def\dup{{D}^{up}}
\def\Dstore{\boldsymbol{D}^{ca}}
\def\dstore{{D}^{ca}}
\def\xin{x^{in}}
\def\xdown{x^{down}}
\def\xup{x^{up}}
\def\xcpu{x^{cpu}}
\def\zin{z^{in}}
\def\zup{z^{up}}
\def\zstore{z^{ca}}
\def\x{x}
\def\Edown{E^{down}}
\def\Ecpu{E^{cpu}}
\def\Eup{E^{up}}
\def\Ed2d{E^{d2d}}
\def\Tcpu{T^{cpu}}
\def\Tup{T^{up}}
\def\Td2d{T^{d2d}}
\def\Tdown{T^{down}}
\def\TUdown{{\tau}^{down}}
\def\TUcpu{\tau^{cpu}}
\def\TUup{\tau^{up}}
\def\TUd2d{\tau^{d2d}}
\begin{document}

\title{Enabling Edge Cooperation in Tactile Internet via 3C Resource Sharing\vspace{-1mm}}

\author{Ming Tang, Lin Gao, and Jianwei Huang\vspace{-8mm}
		\IEEEcompsocitemizethanks{
			\IEEEcompsocthanksitem		
			M. Tang and J. Huang (co-corresponding author) are with the Chinese University of Hong Kong, Hong Kong,
			E-mail: \{tm014, jwhuang\}@ie.cuhk.edu.hk;
			L. Gao (co-corresponding author) is with Harbin Institute of Technology, Shenzhen, China,
			E-mail: gaol@hit.edu.cn. 
			
Part of this work has been published at IEEE GLOBECOM \cite{globecom}.
			This work is supported by the General Research Funds (Project Number CUHK 14206315 and CUHK 14219016) established under the University Grant Committee of the Hong Kong Special Administrative Region, China, and the National Natural Science Foundation of China (Grant No. 61771162).}}

\maketitle

\maketitle
\begin{abstract}
Tactile Internet often requires (i) the ultra-reliable and ultra-responsive network connection and (ii) the proactive and intelligent actuation at edge devices. A promising approach to address these two requirements is to enable mobile edge devices to share their {communication}, {computation}, and {caching} (3C) resources via device-to-device (D2D) connections. In this paper, we propose a general 3C resource sharing framework, which includes many existing 1C/2C sharing  models in the literature as special cases. Comparing with 1C/2C models, the proposed 3C framework can further improve the resource utilization efficiency by offering more flexibilities in the device cooperation and resource scheduling. As a typical example, we focus on the energy utilization under the proposed 3C framework. Specifically, we formulate an energy consumption minimization problem under the 3C framework, which is an integer non-convex optimization problem. To solve the problem, we first transform it into an equivalent integer linear programming problem that is much easier to solve. Then, we propose a heuristic algorithm based on linear programming, which can further reduce the computation time and produce a result that is empirically close to the optimal solution. Moreover, we evaluate the energy reduction due to the 3C   sharing both analytically and numerically. Numerical results show that, comparing with the existing 1C/2C approaches, the proposed 3C sharing framework can reduce the total energy consumption by 83.8\% when the D2D energy is negligible. The energy reduction is still 27.5\% when the D2D transmission energy per unit time is twice as large as the cellular transmission energy per unit time.

\end{abstract}

\IEEEpeerreviewmaketitle


\vspace{-3mm}
\section{Introduction}\label{sec:intro}

\vspace{-1mm}

\subsection{Background and Motivation}
With the fast development of mobile communication and information technologies, Tactile Internet \cite{tactile2} has been recently proposed to support humans to control edge devices \rev{(e.g., robots, smart-phones, and virtual/augmented reality devices)} remotely in real time. By enabling tactile sensations, Tactile Internet can enrich the human-to-machine interaction and revolutionize the interaction among edge devices. Hence, Tactile Internet has attracted various applications such as automation industrial, real-time gaming, and virtual/augmented reality.

Many applications of Tactile Internet have two main requirements \cite{tactile2}: 
(i) the ultra-reliable and ultra-responsive network connection, and (ii) the proactive and intelligent actuation at edges. 
To address these two requirements from the perspective  of  mobile edge devices, a promising approach  is to enable  device-to-device (D2D) based resource sharing among edge devices, 
where the resources include communication, computation, and caching (3C) resources. \rev{Such a resource sharing can be realized by D2D communication technologies\cite{D2D}, including the ad hoc mode of the IEEE 802.11 standards, WiFi direct, and Bluetooth. Some business instances also support such a sharing, such as Open Garden (\url{https://www.opengarden.com/}).}

For example, to improve the reliability of network connections, edge devices can share their communication resources, in order to better utilize the devices' heterogeneous and fluctuating network capacities to satisfy their quality-of-service (QoS) requirements. To promote the intelligent actuation, edge devices can also share their computation and caching resources, so as to efficiently utilize their resources to satisfy their intensive task requirements.

Many of the existing works focused on the sharing of one  resource \cite{upn1,upn2,cloudlet1,cloudlet2,cache1,cache2}. For example, the user-provided networking models in \cite{upn1,upn2} focus on the sharing of communication resource, the ad hoc computation offloading models in \cite{cloudlet1,cloudlet2} focus on the sharing of computation resource, and the ad hoc content sharing models in \cite{cache1,cache2} focus on the sharing of caching resource. We refer to these models as 1C sharing models, since each of them focuses on one type of the 3C resources. Some other recent works further considered the sharing of two types of the 3C resources, which we call the 2C sharing models. Typical examples of 2C sharing  include the distributed data analysis models in \cite{fog1,fog3}, which focus on the sharing of  computation and caching resources.

\begin{figure} 
	\centering
	\includegraphics[height=3.4cm]{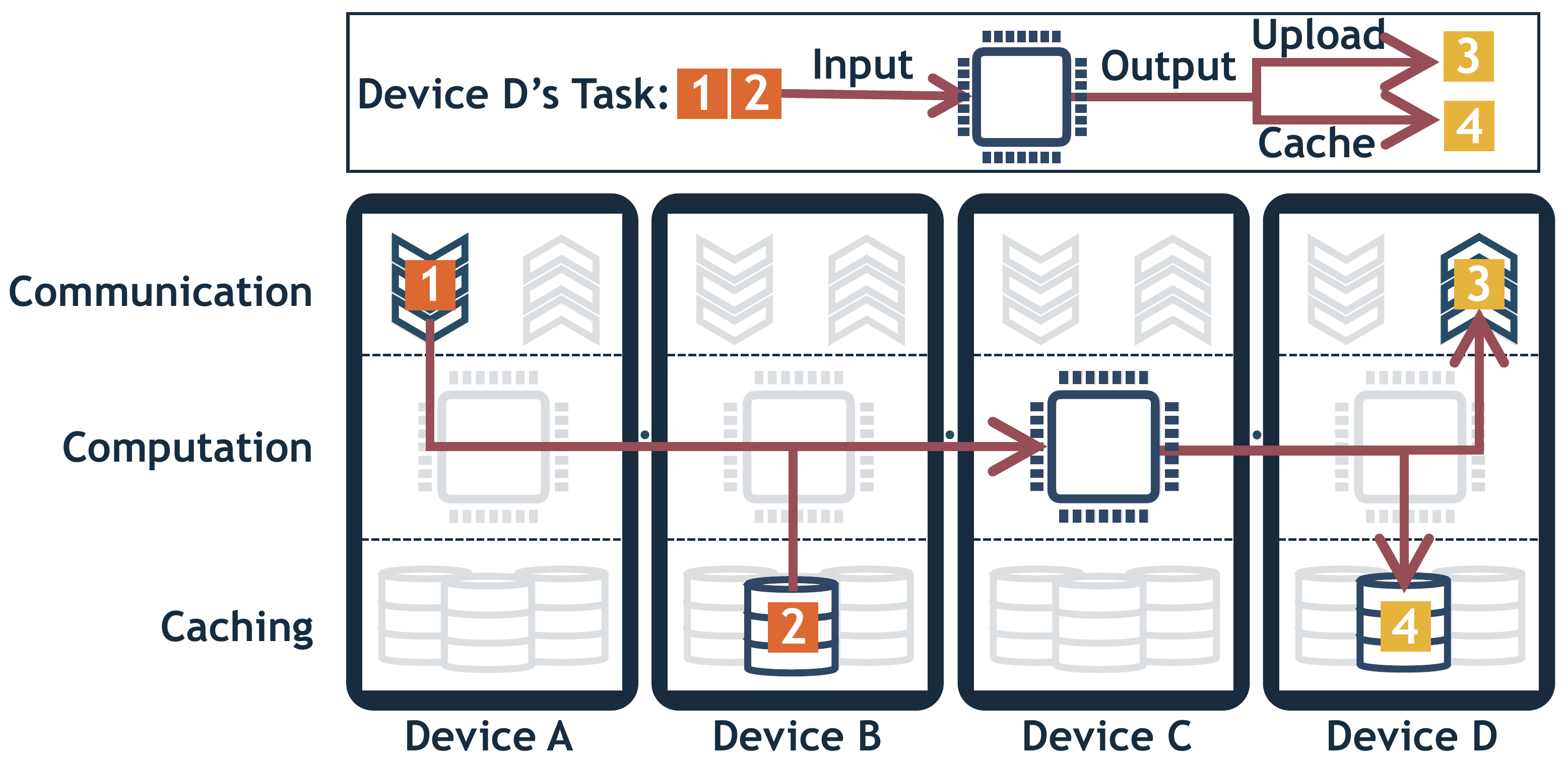}\vspace{-1mm}
	\caption{An example of the general 3C framework.}\label{fig:intro}
	\vspace{-4mm}
\end{figure}

Despite the success of the earlier 1C/2C resource sharing models, there are still significant potential benefits of exploiting the joint 3C resource sharing framework. Such a 3C sharing framework can further improve the resource utilization efficiency, by offering more  flexibilities in terms of device cooperation and resource scheduling.
Regarding the device cooperation,  a joint 3C framework can enable devices performing different tasks to cooperate with each other, which leads to an increased number of participating devices and hence more cooperation opportunities. Regarding the resource scheduling, the joint optimization of 3C resources can lead to a more efficient resource allocation.

\vspace{-2mm}
\subsection{Solution Approach and Contribution}

In this paper, we present the first study regarding the general 3C resource sharing framework
, which aims to  generalize existing edge device resource sharing (1C/2C) models and provide additional network design and optimization flexibilities. A key feature of this new 3C sharing framework is that it is centered around the characterization of the resource requirements of the tasks initialized by mobile devices, i.e., ``resource-centric", instead of emphasizing on the classification of these tasks, i.e., ``task-centric". In other words, any of the tasks (e.g., content retrieving, data analysis, or uploading) is modeled by the resources that  it requests, so that various types of  tasks requesting any combination of the 3C resources can coexist in the same framework.

Figure \ref{fig:intro} illustrates a simple example of the proposed 3C framework, where four devices \{A, B, C, D\} connect with each other via D2D and share their 3C resources to complete tasks. 
In this example, device D initializes a task that involves the following procedures: (i) retrieving contents ``1" and ``2" (either downloading from the Internet or fetching from some devices' caches), (ii) performing computation, and (iii) outputting contents ``3" (to the Internet) and ``4" (to device D's local cache). With the 3C framework, devices can share  their communication (downlink and uplink), computation (CPU), and caching resources. In this example, devices A and B are responsible for obtaining  the inputting contents and delivering them to device C for computation, then device C performs the computation and sends the outputting contents to device D, and finally device D further uploads content ``3" and caches content ``4" in its local cache.

To show the benefits of the 3C framework concretely, we focus on the energy consumption of mobile edge devices, and solve an energy consumption minimization problem under the 3C framework. Note that the proposed methodology can also be applied to other system optimization objectives (e.g., delay minimization or QoS maximization). A common feature of these optimization problems under the 3C framework is the introduction of integer variables due to the proposed  content-based framework (i.e, specifying the requested contents). 
The existence of the integer variables introduces difficulties in analyzing and solving the proposed problem. Moreover, tasks are often correlated with each other (e.g., due to the delays generated by the resource sharing), which further complicates the problem solving. We solve the energy minimization problem systematically and discuss the energy reduction due to the 3C sharing both analytically and numerically. Our key contributions are summarized as follows: 
\begin{itemize}
	\item \emph{General 3C Resource Sharing Framework:} We propose a general 3C sharing framework and the corresponding  ``resource-centric" mathematical formulation. This framework	 generalizes many existing 1C/2C resource sharing models, and improves the  resource utilization efficiency by encouraging more devices participating and more flexible resource scheduling. 
	\item \emph{Energy Efficiency Optimization:} We focus on the energy consumption of mobile edge devices under the 3C framework, and formulate and solve an energy consumption minimization problem. The problem is difficult as it is an integer non-convex optimization problem. We first transform it to an integer linear programming (ILP) problem, and then proposed a linear programming (LP) heuristic algorithm, whose output is empirically  close to the optimal solution.
	\item \emph{Theoretical Performance Analysis:} We analyze the energy consumption reduction due to the 3C resource sharing analytically.	We show that if the 3C framework can double the number of cooperative devices (comparing with 1C models), it can reduce the energy by a maximum of about 20\% of the energy consumed in noncooperation case (where devices do not cooperate)
	\item \emph{Simulation and Performance Evaluation:}	Comparing with existing 1C/2C sharing approaches, 3C sharing reduces the total energy by 83.8\% when  the D2D energy  consumption is negligible, and the energy  reduction is still  27.5\% when the  D2D energy per unit time is twice as large as the cellular energy per unit time. As for the computational complexity, when the network size is moderate (e.g., 27 devices), the heuristic algorithm reduces the computation time by 78.6\% at the cost of an optimality gap of 11.2\%.
\end{itemize}

The rest of this paper is organized as follows. Section \ref{sec:liter} reviews the related work, and Section \ref{sec:model} presents the  3C framework. In Section \ref{sec:algorithm}, we present the energy minimization problem transformation and heuristic algorithm design. We analyze the energy reduction due to the 3C framework theoretically and numerically in Section \ref{sec:theory} and Section  \ref{sec:experiment}, respectively. 
We conclude in Section \ref{sec:conclude}.

\section{Literature Review}\label{sec:liter}

There are extensive studies working on the 1C/2C sharing models. Due to the limited space, we only briefly discuss some representative works that are most closely related to this study.

Most of the existing works considered 1C models. For example, Iosifidis \emph{et al.} \cite{upn1} and Syrivelis \emph{et al.} \cite{upn2} proposed user-provided network models for communication resource sharing, where nearby devices share their Internet connectivity for cooperative downloading. Militano  \emph{et al.} \cite{upload1} proposed an uploading resource sharing model, where devices form effective coalitions to share their uploading resources. Chi \emph{et al.} \cite{cloudlet1} and Chen \emph{et al.} \cite{cloudlet2} proposed ad hoc computation offloading models for computation resource sharing, where nearby mobile devices share their computation resources for data processing. Jiang \emph{et al.} \cite{cache1} and  Chen \emph{et al.} \cite{cache2} proposed ad hoc content sharing models for cached content sharing, where devices share their cached contents through D2D connections. 

Some recent works further considered 2C models. For example,  Stojmenovic \emph{et al.} \cite{fog1} and Destounis, \emph{et al.} \cite{fog3} considered distributed data analysis models, where some mobile devices share their cached data and other devices share their computation resources to process the shared data.

There are two limitations of the existing  1C/2C models: i) due to commonly adopted ``task-centric" approach, devices with different types  of tasks cannot cooperate (e.g., devices in user-provided network cannot cooperate with devices in ad hoc computation offloading), which restricts the pool of cooperative devices; and (ii) some tasks may request all of the 3C resources, which cannot be handled by   these existing 1C/2C models. In comparison, our proposed 3C framework addresses the above two limitations and improve  resource utilization efficiency  by providing more device cooperation and resource scheduling flexibilities.

\section{A General 3C Sharing Framework}\label{sec:model}

\subsection{System Model}\label{subsec:model-model} 
We consider three key elements in the 3C framework:  devices, tasks, and contents. 
\begin{itemize}
	\item \textbf{Device set} $\N=\{1,2,...,{N}\}$: The  devices form a mesh network (through D2D connections) for  cooperative task execution. For any device $n\in\mathcal{N}$, let $\E(n)$ denote the set of devices connected with device $n$ through D2D connections. Note that  $n\in\E(n)$.
	\item \textbf{Task  set} $\S=\{1,2,...,{S}\}$: The devices initialize these tasks, where a device may initialize one or more tasks.
	\item \textbf{Content  set} $\K=\{1,2,...,{K}\}$: The  contents can be the inputting or outputting of the tasks. 
	A content $k\in\mathcal{K}$  has a size of $L_k$ (in bits). 
\end{itemize}
Next we provide detailed explanations of  devices  and tasks.
\subsubsection{\textbf{Device Model}} A device is characterized as follows, \rev{where the corresponding parameters are constants in a particular resource scheduling operation.}
\begin{definition}[Device Model] Each device $n\in\N$ corresponds to a collection of tasks and resources, denoted by
	\begin{equation}\Q_n = (\boldsymbol{s}_n,\Qdown_n,\Qcpu_n,\Qup_n,\Qstore_n),
	\end{equation}
	where each notation  refers to a feature  of the device:
	\begin{conditions}
		\boldsymbol{s}_n    &-  the vector of  its initializing tasks' indexes, where\\
		&~~the dimension is the number of tasks it initializes,\\
		\Qdown_n&- downloading capacity (in bits per second),\\
		\Qcpu_n &-   computation capacity (in CPU cycles per second),\\
		\Qup_n    &-   uploading capacity (in bits per second),\\  
		\Qstore_n &- the vector of cached contents of dimension $K$, \\
		&~~where for any content $k\in\mathcal{K}$, 
		$\qstore_{nk}=1$ if  $n$ has  \\
		&~~cached content $k$, and $\qstore_{nk}=0$ otherwise.
	\end{conditions}
	\noindent Let $\edown_n$, $\ecpu_n$, and $\eup_n$ denote the energy consumption of the downloading, computation, and uploading operations per unit second, respectively. 
\end{definition}

Next we define the model of the D2D connections.
\begin{definition}[Device-to-Device Model]
	For any two different devices $n,m\in\mathcal{N}$, let $\Qd2d_{n\rightarrow m}$ denote the D2D transmission capacity (bits per second) from device $n$ to device $m$, and let  $\ed2d_{n\rightarrow m}$ denote the D2D transmission energy per second.
\end{definition}

\subsubsection{\textbf{Task Model}}  Each task $s\in\S$ is represented by the task model shown in Figure \ref{fig:taskmodel}. Specifically, each task has a \emph{computation} module (which can have a zero computation requirement if the task does not involve any computation). \rev{The computation module requests some inputting contents, which can be downloaded from the Internet or fetched from devices' caches. The computation module then produces  some outputting contents, which can be uploaded to the Internet or cached at the task owner's device.} 
\begin{definition}[Task Model]
	Each task $s\in\S$ is denoted by 
	\begin{equation}\D_s = (\device_s, \Din_{s}, \Dcpu_s,\Dup_s, \Dstore_s),\end{equation}
	where each notation refers to a feature of the task:
	\begin{conditions}
		\device_s &- task owner (i.e., the device initializes this task),\\
		\Din_{s}&- the vector of the inputting contents,\\
		\Dcpu_s&- computation requirement (in total CPU cycles),\\
		\Dup_s&- the vector of the uploading contents,\\
		\Dstore_s&- the vector of the caching contents.\\	
	\end{conditions}
	\noindent All three  $\Din_{s}$, $\Dup_s$, and $\Dstore_s$ have the same size of $K$. For any $k\in\mathcal{K}$, $D_{sk}^{X}=1$ ($X\in in,up,ca$)  if   content $k$ is requested by task $s$ for inputting, uploading, or caching, respectively, and $D_{sk}^{X}=0$ otherwise.
\end{definition}

Each task can be divided into several subtasks:
\begin{definition}[Subtask]
	Each task consists of three subtasks: inputting, computation, and uploading\footnote{The contents to be cached, i.e., $\Dstore$, are cached at the task owner, so we do not regard it as a separate subtask.} 
	subtasks.
\end{definition}
\rev{Our proposed  task model is content-based (i.e., specifying the requested contents) instead of data-based (i.e., only specifying the amount data needed). The content-based formulation makes the framework more flexible, as it only does not limit where to obtain the requested contents.} \rev{As a result, our task model can be used for modeling various applications \rev{whose  tasks can be broken down to input contents, computation, and output contents (or a subset of these three types of subtasks)}, such as  computation task offloading, communication task offloading, and cached content sharing. This is achieved by properly specifying the parameters in the task model, where the details are in Section \ref{subsec:generalization}.} 

\begin{figure}[t]
	\centering
	\includegraphics[height=1cm]{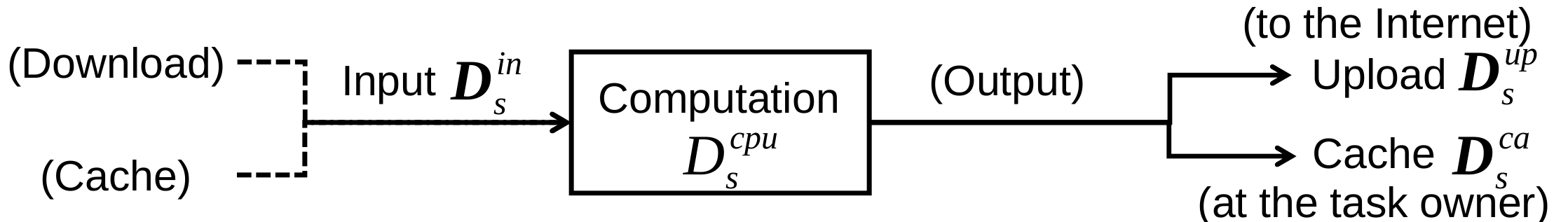}\vspace{-2mm}
	\caption{The task model.}\label{fig:taskmodel}\vspace{-2mm}
\end{figure}

\vspace{-2mm}
\subsection{Problem Statement}\label{sec:model-problem}
\rev{To demonstrate the proposed 3C framework, we focus on an energy minimization problem, as  mobile devices always concern about their energy consumption. Our  mathematical formulation is also  applicable to other optimization objectives (e.g., delay minimization or QoS maximization).}  

We first introduce decision variables, constraints, and  energy calculations. We then show the energy minimization problem.

\subsubsection{\textbf{Decision Variables}}
We consider a set of \emph{binary} decision variables with possible values from $\{0,1\}$ as follows. A variable equals  $1$ if its corresponding description is true, and equals $0$ otherwise.
\begin{conditions}
	\xin_{s, k\rightarrow n} &- device $n$ inputs task $s$' content $k$, \\
	\xdown_{s,k\rightarrow n} &- device $n$ downloads task $s$' content $k$, \\
	\xcpu_{s\rightarrow n} &- device $n$ performs task $s$' computation,\\
	\xup_{s, k\rightarrow n} &- device $n$ uploads task $s$' content $k$, \\
	\zin_{s, k\rightarrow i, j}&- task $s$' inputting content $k$ is delivered from $i$ to $j$,\\
	\zup_{s, k\rightarrow i, j}&- task $s$' uploading content $k$ is delivered from  $i$ to $j$,\\
	\zstore_{s, k\rightarrow i, j}&- task $s$' caching content $k$ is delivered from  $i$ to $j$.\\
\end{conditions}
\noindent Here $i$ and $j$ refer to  devices from set $\mathcal{N}$.

\subsubsection{\textbf{Constraints}} The system should satisfy the constraints as follows. 

\textbf{\emph{Allocation constraints:}} Any task $s$' computation subtask should be allocated to only  one device:
\begin{equation}\label{constr:allo-cpu}
\textstyle{\sum_{n\in\mathcal{N}}\xcpu_{s\rightarrow n} = 1, ~\forall s\in\mathcal{S}.}\end{equation}
The inputting  of a requested content should be allocated to one device, i.e.,
\begin{equation}\label{constr:allo-in}
\textstyle{\sum_{n\in\mathcal{N}}\xin_{s,k\rightarrow n} =  \din_{sk},~\forall s\in\mathcal{S}, k\in\mathcal{K}.}\end{equation}
The  uploading of a requested content should be allocated to one device, i.e.,
\begin{equation}\label{constr:allo-up}
\textstyle{\sum_{n\in\mathcal{N}}\xup_{s,k\rightarrow n} =  \dup_{sk},~\forall s\in\mathcal{S}, k\in\mathcal{K}.}\end{equation}

\textbf{\emph{Capacity constraints:}} 
The device that is responsible for inputting a content should either has the content in its local cache or will download it from the Internet:
\begin{equation}\label{constr:cap-in}
\textstyle{\xin_{s,k\rightarrow n} \leq \qstore_{nk}+\sum_{s\in\mathcal{S}}\xdown_{s,k\rightarrow n},~\forall s\in\mathcal{S},n\in\mathcal{N}, k\in\mathcal{K}.}
\end{equation}
We consider the sum of the downloading subtask indicators, i.e., $\sum_{s\in\mathcal{S}}\xdown_{s,k\rightarrow n}$, as the device will own the content once it downloads content $k$  for any of the tasks $s\in\mathcal{S}$.

\textbf{\emph{Network flow balancing constraints:}} These constraints are related to the content delivery through multi-hop transmissions. \rev{For a content requested by a task, at any device, the incoming number of times that the device receives/generates the content should be the same as the number of times that the device transmits/consumes the content.} 

Taking the inputting content transmission variable $\zin_{s,k\rightarrow i, j}$ as an example. For any task $s\in\mathcal{S}$ and content $k\in\mathcal{K}$, the network flow balancing constraint at a device $i\in\mathcal{N}$ is 
\begin{equation}\label{constr:flow-in}
\sum_{j\in \E(i)} \zin_{s,k\rightarrow j,i} + \xin_{s,k\rightarrow i}\din_{sk} = \sum_{j\in \E(i)} \zin_{s,k\rightarrow i, j} + \xcpu_{s\rightarrow i}\din_{sk}.
\end{equation}
The left-hand side of \eqref{constr:flow-in} is the incoming number of times of task $s$' inputting content $k$, including the number of times  that device $i$ receives from its nearby devices and the number of times it generates for inputting (which equals one if $\xin_{s,k\rightarrow i}=1$ and $\din_{sk}=1$). The right-hand side of \eqref{constr:flow-in} is the outgoing number of times of task $s$' inputting  content $k$, including the number of times that device $i$ transmits to its nearby devices and the number of times it consumes to perform computation (which equals one if $\xcpu_{s\rightarrow i}=1$ and $\din_{sk}=1$).

Applying similar arguments to caching and uploading, we obtain the following constraints: 
\begin{equation}\label{constr:flow-cache}
\sum_{j\in \E(i)} \zstore_{s,k\rightarrow j,i} + \xcpu_{s\rightarrow i}\dstore_{sk}= 	\sum_{j\in \E(i)} \zstore_{s,k\rightarrow i,j} + \boldsymbol{1}_{i=u_s} \dstore_{sk},
\end{equation}
\begin{equation}\label{constr:flow-up}
\sum_{j\in \E(i)} \zup_{s,k\rightarrow j,i} + \xcpu_{s\rightarrow i}\dup_{sk} = 	\sum_{j\in \E(i)} \zup_{s,k\rightarrow i,j} + \xup_{s,k\rightarrow i}\dup_{sk}.
\end{equation}
Operator $\boldsymbol{1}_{i=u_s}=1$ if $i=u_s$, and $\boldsymbol{1}_{i=u_s}=0$ if $i\neq u_s$.

\textbf{\emph{Worst Case Delay Constraints:}} The worst case delay is the maximum delay that may happen due to the resource sharing. We first explain  the worst case delay constraints, and then compute the worst delays. 

First, to execute a task $s$, the \emph{worst} (maximum) delay of downloading\footnote{Inputting contents may come from downloading from the Internet or fetching from caches. For simplicity, we assume that fetching from caches is instantaneous, so we can focus on the delay caused by downloading.}, computation, and uploading subtasks, denoted by $T_s^{X}$ ($X\in\{down,cpu,up\}$), should be smaller than the corresponding delay bounds, respectively:
\begin{equation}\label{eq:delay}
\Tdown_s \leq \bar{T}^{down}_s, \Tcpu_s \leq \bar{T}^{cpu}_s, \Tup_s\leq \bar{T}^{up}_s,~\forall s\in\mathcal{S}.
\end{equation}
\rev{By using these separate delay constraints, we want to characterize the delay of each of the subtasks of a task.}  \rev{In addition, we ignore the D2D transmission delay  for simplification. In reality, such a delay is usually relatively small (e.g., Wi-Fi Direct has a transmission speed of up to 250Mbps\cite{WiFi-Direct}).} 

Then, we show how  to calculate  the worst case delays ($\Tdown_s$, $\Tcpu_s$, and $\Tup_s$). The first step is to  calculate a device's time spending on completing all the subtasks allocated to it. Specifically, let $\TUdown_n$, $\TUcpu_n$, and $\TUup_n$ denote device $n$'s total time spending on completing all the downloading, computation, and uploading subtasks allocated to it, respectively:
\begin{equation}\label{eq:perform1}
\tau^{down}_n=\frac{\sum_{s=1}^{S}\sum_{k=1}^{K}x^{down}_{s,k\rightarrow n}\KSize_k}{Q^{down}_n},
\end{equation}
\begin{equation}\label{eq:perform2}
\tau^{up}_n=\frac{\sum_{s=1}^{S}\sum_{k=1}^{K}x^{up}_{s,k\rightarrow n}\KSize_k}{Q^{up}_n},
\TUcpu_n=\frac{ \sum_{s=1}^{S}\xcpu_{s\rightarrow n}\Dcpu_s}{\Qcpu_n}.
\end{equation}

{The second step is to calculate the worst case delays. Using \eqref{eq:perform1} as an example, we explain how to compute the  worst downloading delay $T_s^{down}$. We will first discuss the maximum downloading time that a device $n$ can impose on a task that it downloads for, and  then discuss how a task $s$ (requesting downloading from multiple devices) computes its worst downloading delay $T_s^{down}$. 
	
	For any device $n$, it may  download contents for multiple tasks, and the multiple tasks may be ready for downloading at different times.\footnote{The case of different ready times is more obvious for computation and uploading subtasks. Specifically, the computation of a task  is ready for execution only when the corresponding downloading has finished, and this time could be different for different tasks. Similar for uploading subtasks.}	 For simplicity, we assume that, if a device is downloading for multiple tasks at the same time, the device divides its {total downloading capacity} among the multiple tasks according to their total  downloading volumes.\footnote{\rev{We assume that when a subtask arrives at a device, it is executed without queuing, and it shares the capacity of the device with the other subtasks.}} For example, a device $n$ is downloading two contents (with sizes 1Mb and 2Mb, respectively) for task A, and one content (with a size 6Mb) for task B. Then, the device $n$ allocates $(1+2)/(1+2+6)=1/3$ and $6/(1+2+6)=2/3$ of its downloading capacity to task A and task B, respectively. Under this, the maximum downloading time that device $n$ can impose on a task is its total time spending on downloading, i.e., $\TUdown_n$. This happens when all the  tasks (allocated to device $n$) is ready for downloading at the same time, under which these tasks will share the device $n$'s capacity during the entire downloading process.

	For any task $s$, it  can obtain multiple contents from different devices. The worst downloading delay that task $s$ experiences  is the maximum downloading time $\TUdown_n$ among the device set $\{n|\sum_{k=1}^{K}\xin_{s,k\rightarrow n}(1-\qstore_{nk})> 0\}$. This set refers to the set of  devices that are responsible for task $s$' inputting but have not cached the contents yet (so they have to download the contents). Formally,
	\begin{equation}\textstyle{\label{eq:delay1}\Tdown_s = \max_{\{n|\sum_{k=1}^{K}\xin_{s,k\rightarrow n}(1-\qstore_{nk})> 0\}}\TUdown_n.} \end{equation}
	
	A similar idea applies to the calculation of the worst computation and uploading delays. Specifically, the worst computation delay  is the computation time of the device who performs task $s$' computation:
	\begin{equation}\label{eq:delay2}
	\textstyle{\Tcpu_s = \sum_{n=1}^{N}\xcpu_{s\rightarrow n} \TUcpu_n.}\end{equation}
	The worst uploading delay is the maximum uploading time  $\TUup_n$ among all the devices who perform task $s$' uploading: 
	\begin{equation}\label{eq:delay3}
	\textstyle{\Tup_s = \max_{\{n|\sum_{k=1}^{K}\xup_{s,k\rightarrow n}> 0\}}\TUup_n.} \end{equation}
	
	To clarify, these delay constraints are non-convex, since they contain quadratic forms that are not positive semidefinite, which makes it difficult to solve the  energy minimization problem (to be presented in Section \ref{subsubsec:problem}).

	\subsubsection{\textbf{Energy Calculations}}\label{subsubsec:energy} The energy for executing a task $s$ consists of the energy for downloading, computation, uploading, and D2D transmission. Formally,
	\begin{equation}E_s = \Edown_s  + \Ecpu_s +  \Eup_s+ \Ed2d_s.\end{equation}
	Each of these four terms is linear with the time that the devices spend on executing the corresponding operations:
	\begin{equation*} \Edown_s = \sum_{n=1}^{N} \edown_n  \frac{\sum_{k=1}^{K}\xdown_{s, k\rightarrow n}\KSize_k}{\Qdown_n},\end{equation*}
	\begin{equation*}  \Ecpu_s = \sum_{n=1}^{N} \ecpu_n \frac{\xcpu_{s\rightarrow n}\Dcpu_s}{\Qcpu_n},\Eup_s = \sum_{n=1}^{N} \eup_n \frac{ \sum\limits_{k=1}^{K}\xup_{s,k\rightarrow n}\KSize_k}{\Qup_n},\end{equation*}
	\begin{equation*}\Ed2d_s = \sum_{i=1}^{N}\sum_{j=1}^{N}\ed2d_{i\rightarrow j} \frac{\sum\limits_{k=1}^{K}(\zin_{s,k\rightarrow i,j}+ \zup_{s,k\rightarrow i,j}+ \zstore_{s,k\rightarrow i, j})\KSize_k}{\Qd2d_{i\rightarrow j}}.
	\end{equation*}
	As an example, we explain task $s$' downloading energy  $\Edown_s$. It is the sum of the energy consumed by various devices for  downloading for task $s$, where each device's downloading energy  equals to the product of its energy coefficient and the downloading time.
	
	\subsubsection{\textbf{Problem Formulation}}\label{subsubsec:problem} We want to  minimize the energy consumption of the 3C framework under the proposed constraints. Formally, 
	\begin{equation*}
	\begin{aligned}
	& \underset{\boldsymbol{x},\boldsymbol{z}\in\{0,1\}}{\text{minimize}} & & \sum_{s\in\mathcal{S}} E_s\\
	& \text{subject to} & & \eqref{constr:allo-cpu}\sim\eqref{eq:delay}\\
	\end{aligned}\eqno{\text{(OPT)}}\label{eq:problem1}
	\end{equation*}
	Problem $\text{(OPT)}$  is  non-convex due to the delay constraints. In Section \ref{sec:algorithm}, we transform it into an ILP problem, which can be solved by standard optimizers. We further propose a heuristic algorithm with a lower computational complexity.
	
	\begin{table}[t]
		\begin{center}
			\caption{Existing models that the 3C framework can generalize.}\label{table:model}\vspace{-2mm}
			\begin{tabular}{@{}ll@{}}
				\toprule
				Shared Resource & Examples and Task Models $\D_s$\\
				\hline
				\multirow{2}{*}{Downloading} & (a) User-provided networks (e.g., \cite{upn1,upn2})\\
				&~~~~$  (\device_s, \boldsymbol{D}^{data}_{s}, {0},\boldsymbol{0}, \boldsymbol{D}^{data}_{s})$ \\
				
				\multirow{2}{*}{Uploading}&(b) {Ad hoc content uploading} (e.g., \cite{upload1})\\
				&~~~~$ (\device_s,  \boldsymbol{D}^{data}_s, 0,\boldsymbol{D}^{data}_s,\boldsymbol{0})$\\
				
				\multirow{2}{*}{Content}& (c) {Ad hoc content sharing} (e.g., \cite{cache1,cache2})\\
				&~~~~$(\device_s, \boldsymbol{D}^{data}_s, 0,\boldsymbol{0}, \boldsymbol{D}^{data}_s)$\\
				
				\multirow{2}{*}{Computation}&(d) {Ad hoc computation offloading} (e.g.,  \cite{cloudlet1,cloudlet2})\\
				&~~~~$(\device_s,  \Din_{s}, \Dcpu_s, \boldsymbol{0},  \Dout_s)$\\
				
				\multirow{2}{*}{Hybrid}
				&(e) {Distributed data analysis}  (e.g., \cite{fog1,fog3})\\
				&~~~~$(\device_s, \Din_s, \Dcpu_s, \boldsymbol{0}, \Dout_s)$\\
				\bottomrule		
			\end{tabular}
		\end{center}
	\end{table}
	
	\subsection{Generalization of Existing Models in the Literature}\label{subsec:generalization}
	
	Through properly choosing  various parameters, the proposed 3C framework can generalize many of the existing 1C and 2C models. Examples are illustrated in Table \ref{table:model}. Among these models, the notation $\boldsymbol{D}^{data}_{s}$ (in (a), (b), and (c)) denotes the contents that are requested by the corresponding operations. 
	
	Figure \ref{fig:exampl0} illustrates the distributed data analysis model (e) as a special case  of our proposed 3C framework. Figure \ref{fig:exampl0}(a) corresponds to the model  in \cite{fog3}: two data source nodes S1 and S2 forward data to a computation node C for data analysis, then node C forwards the computation outputting to a destination D. Through specifying the task model as in Figure \ref{fig:exampl0} (b), our proposed model generalizes the model in \ref{fig:exampl0} (a), and can  achieve the optimal  resource allocation by solving  Problem $\text{(OPT)}$.

	\begin{figure}[t]
		\centering
		\centering
		\includegraphics[height=2.4cm]{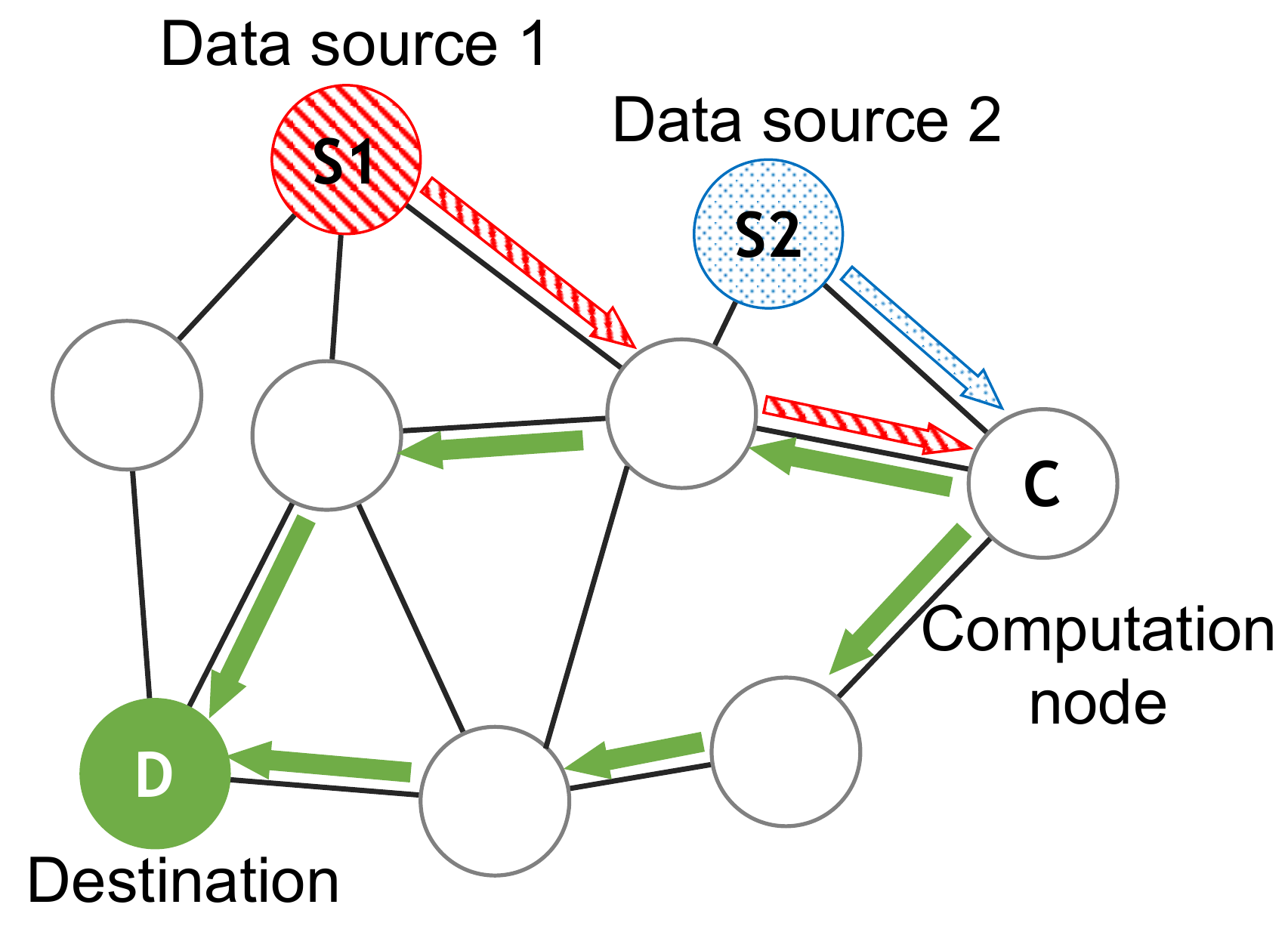}~~\includegraphics[height=2.4cm]{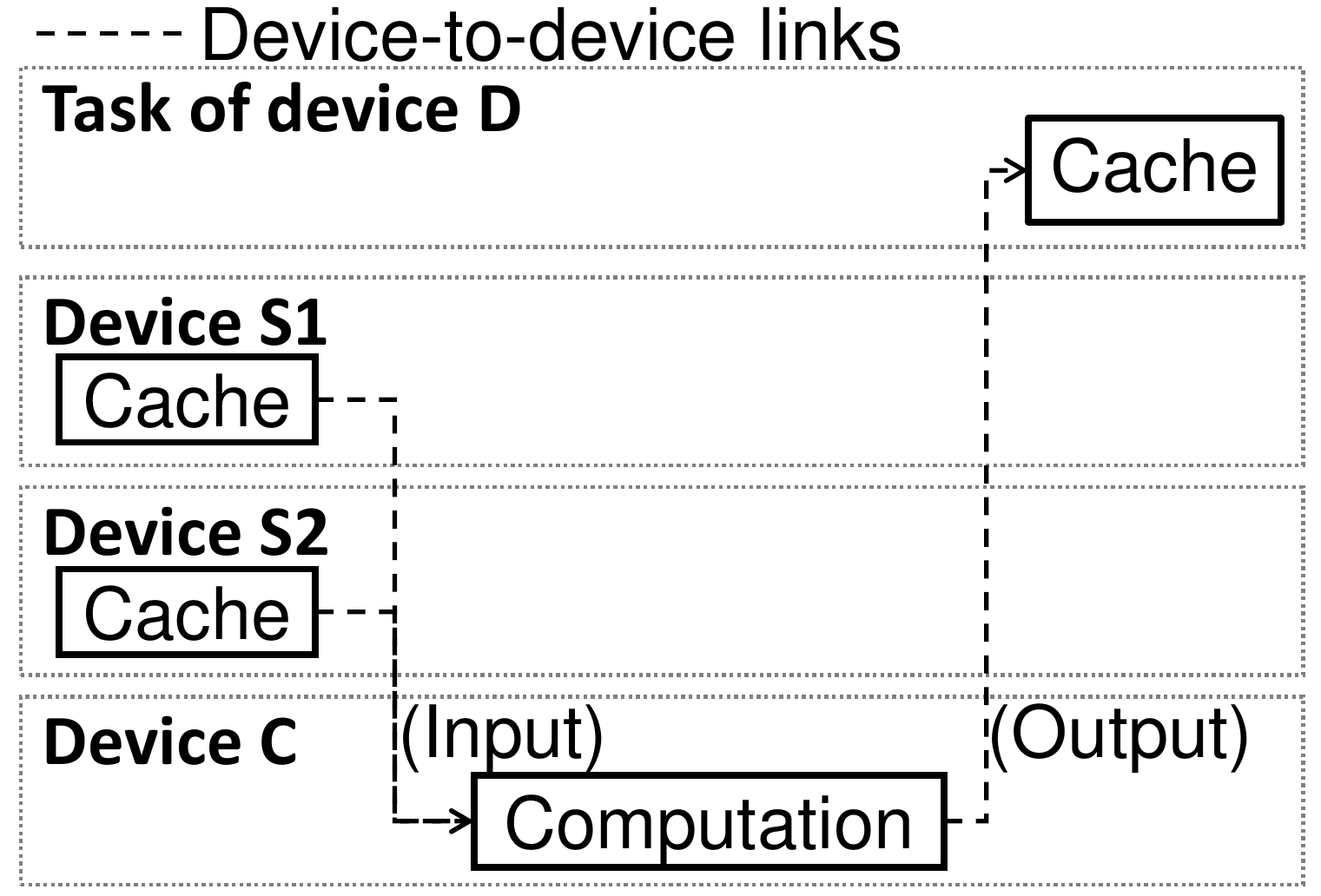}\\
		(a)~~~~~~~~~~~~~~~~~~~~~~~~~~~~~~~(b)\vspace{-2mm}\\
		\caption{An example of distributed data analysis: (a) existing model \cite{fog3}; (b)  the generalization by the 3C framework.}\label{fig:exampl0}\vspace{-2mm}
	\end{figure}
	
\section{Energy Minimization with  3C Sharing}\label{sec:algorithm}
In this section, we focus on the energy minimization problem $\text{(OPT)}$, which is an   integer non-convex optimization problem. To solve this problem, we first transform it into an ILP problem, which can be solved by standard optimizers. However, an ILP problem is an NP-complete problem (Theorem 18.1 in \cite{linearprogramming}), so  its computation time dramatically increases as the number of devices and tasks increases. Hence, we further propose a heuristic algorithm, which solves  a series of problems, each of which  is a LP relaxation (relaxing integer variables to continuous ones) of the original problem $\text{(OPT)}$. 
\rev{To clarify, the solutions in this section are based on a centralized scheduling, i.e., a centralized entity is needed to collect the information of devices and send control signals.\footnote{\rev{The centralized entity can be a  mobile device, an access point, or a base station. This entity can send the control messages to the rest of the network through both D2D connections and traditional uplink/downlink transmissions.}}} 

\subsection{Linear Transformation of Problem $\text{(OPT)}$}
We first  transform the integer non-convex  problem $\text{(OPT)}$ to an ILP problem. Since the non-convexity is mainly due to the delay constraints, the key focus will be how to transform the delay constraints to linear ones, after which Problem $\text{(OPT)}$  becomes an ILP.

The key transforming idea is as follows. Consider  a constraint in the form of $ \tau \times y\leq \bar{T}$, where the continuous variable $\tau \geq 0$, the discrete variable $y \in\{0,1\}$, and the fixed parameter $\bar{T} \geq 0$. We can transform the constraint into an equivalent linear form $ \tau - (1-y)\times M\leq \bar{T}$, where $M$ is {any  number that satisfies} $\tau-M\leq 0$. To see the equivalence of the two constraints, we consider two possible values of $y$. If $y=1$, both the original and the transformed constraints are $ \tau \leq \bar{T}$; if $y=0$, both constraints are true for any value of $\tau$. Hence, the two constraints are equivalent.

Next we use this key idea to explain the transformation of downloading delay constraints. The transformations of  computation and uploading delay constraints are similar. Finally, we present the transformed problem.

\subsubsection{\textbf{Transforming  downloading delay constraints}} We will transform the delay constraint ${T}^{down}_s\leq \bar{T}_s^{down}$ in \eqref{eq:delay} to a linear one, where ${T}^{down}_s$ is defined in $\eqref{eq:delay1}$. More specifically, this constraint can be written as 
\begin{equation}\label{eq:delay2-modi}
\TUdown_n  y_{s\rightarrow n}^{down}\leq \bar{T}_s^{down}, \forall s\in\mathcal{S}, n\in\mathcal{N},
\end{equation}
where $\tau^{down}_n$ is a linear function  of variables ${x}^{down}_{s,k\rightarrow n}$ as in \eqref{eq:perform1}. The variable  $y_{s\rightarrow n}^{down}\in\{0,1\}$ indicates  whether a task $s$' downloading is being allocated to device $n$,\footnote{If $y^{down}_{s\rightarrow n}=1$ (i.e., device $n$ downloads for task $s$), device $n$'s downloading time $\TUdown_n$ should satisfy the delay constraint, i.e., $\TUdown_n \leq \bar{T}_s^{down}$; if $y^{down}_{s\rightarrow n}=0$, device $n$'s downloading time  does not need to.} and satisfies the following conditions:  
\begin{equation}\label{eq:y_down1}
\textstyle{y^{down}_{s\rightarrow n} \leq \min\{\sum_{k=1}^{K} \xin_{s,k\rightarrow n}(1-\qstore_{nk}), 1\},}
\end{equation}
\begin{equation}\label{eq:y_down2}
y^{down}_{s\rightarrow n} \geq  \xin_{s,k\rightarrow n}(1-\qstore_{nk}),\forall k\in\mathcal{K}.
\end{equation}
Specifically, when  $\xin_{s,k\rightarrow n}(1-\qstore_{nk})=0$ for all $k$ (i.e., device $n$ does not download any content for task $s$), we have $0\leq y^{down}_{s\rightarrow n}\leq0$, i.e., $y^{down}_{s\rightarrow n}=0$; otherwise, when there exists a  $k$ such that $\xin_{s,k\rightarrow n}(1-\qstore_{nk})=1$ (i.e., device $n$  downloads contents for task $s$), we have $0<\xin_{s,k\rightarrow n}(1-\qstore_{nk})\leq y^{down}_{s\rightarrow n}\leq 1$, i.e., $y^{down}_{s\rightarrow n}=1$.

Then, we transform constraints \eqref{eq:delay2-modi} based on the previously  mentioned transforming idea. We will choose a parameter $M_{n}^{down} $ that satisfies $\tau^{down}_n-M_{n}^{down} \leq 0$,\footnote{Note that $\tau^{down}_n$ is a linear function  of variables ${x}^{down}_{s,k\rightarrow n}$, which are unknown before solving the optimization problem. To ensure  the inequality $\tau^{down}_n-M_{n}^{down} \leq 0$ holds, we can set  $M_{n}^{down}=\max\{\bar{T}_s^{down},\forall s\},\forall n\in\mathcal{N}$. Similar ideas apply for the computation and uploading constraints.} and the constraint is given by
\begin{equation}\label{eq:M_down}
\tau^{down}_n- (1-y^{down}_{s\rightarrow n}) M_{n}^{down} \leq \bar{T}^{down}_s,\forall s\in\mathcal{S},n\in\mathcal{N},
\end{equation}
where $\tau^{down}_n\geq0$ is a linear function of ${x}^{down}_{s,k\rightarrow n}$ as in \eqref{eq:perform1}.%

\subsubsection{\textbf{Transforming computation and uploading delay constraints}} Based on similar ideas, for the computation delay constraint, we will choose a parameter $M_n^{cpu}$ that satisfies $\tau^{cpu}_n-M_n^{cpu} \leq 0$, and the equivalent constraint is given by
\begin{equation}\label{eq:M_cpu}
\tau^{cpu}_n - (1-\xcpu_{s\rightarrow n}) M_n^{cpu} \leq \bar{T}^{cpu}_{s}, \forall s\in\mathcal{S},n\in\mathcal{N},
\end{equation}
where $\xcpu_{s\rightarrow n}$ denotes whether device $n$ computes for task $s$.

For the uploading delay constraint, we will choose a parameter $M_n^{up}$ that satisfies $\tau^{up}_n-M_n^{up} \leq 0$, and the corresponding equivalent constraint is given by
\begin{equation}\label{eq:M_up}
\tau^{up}_n- (1-y^{up}_{s\rightarrow n})  M_n^{up} \leq \bar{T}^{up}_{s}, \forall s\in\mathcal{S},n\in\mathcal{N},
\end{equation}
where $	y^{cpu}_{s\rightarrow n} $ denotes whether device $n$ uploads for task $s$:
\begin{equation}\label{eq:y_up1}
\textstyle{y^{up}_{s\rightarrow n} \leq \min\left\{\sum_{k=1}^{K}\xup_{s,k\rightarrow n}, 1\right\}, y^{up}_{s\rightarrow n} \geq \xup_{s,k\rightarrow n},\forall k\in\mathcal{K}.}
\end{equation}

\subsubsection{\textbf{The Linear Transformation of Problem $\text{(OPT)}$}} Once replacing the delay constraint \eqref{eq:delay} with  \eqref{eq:y_down1}$\sim$\eqref{eq:y_up1}, we transform Problem $\text{(OPT)}$ into the following equivalent problem:
\begin{equation*}\label{eq:problem3}
\begin{aligned}
& \underset{\boldsymbol{x},\boldsymbol{z},\boldsymbol{y}\in\{0,1\}}{\text{minimize}} & & \sum_{s\in\mathcal{S}} E_s\\
& \text{subject to} & & \eqref{constr:allo-cpu}\sim\eqref{constr:flow-up}, \eqref{eq:y_down1}\sim\eqref{eq:y_up1}\\
\end{aligned}\eqno{(\text{OPT-LINEAR})}
\end{equation*}
Problem $\text{(OPT-LINEAR)}$  is an ILP, which can be solved by standard optimizers, e.g., Gurobi (\url{http://www.gurobi.com})}. 



However, directly solving Problem $\text{(OPT-LINEAR)}$ using standard optimizers works well when the network size (e.g., the number of devices and tasks) is reasonably small. 
As the system size increases, the computation time dramatically increases, as Problem $\text{(OPT-LINEAR)}$ (an ILP)  is   NP-complete\cite{linearprogramming}. To address this complexity issue, we propose a  heuristic algorithm based on the original Problem $\text{(OPT)}$. 

\subsection{A Heuristic Algorithm of Solving Problem $\text{(OPT)}$}

The key idea is to iteratively solve a series of modified versions of Problem $\text{(OPT)}$, where we remove  the delay constraints and relax the integer variables to continuous ones (i.e., LP relaxation\cite{linearprogramming}, so that the modified problems are LP problems). At the end of each iteration, the algorithm will check whether the removed delay constraints  are satisfied. If not, the algorithm will prevent  some tasks from being allocated to certain devices (in order to address the violated delay constraints), and solve a new version of  modified problem.  The algorithm iterates  until all the delay constraints are satisfied. Note that we do not check whether the variables satisfy the integer constraints in the algorithm. Later we will show that, however, the algorithm is guaranteed to produce integer solutions that are feasible for Problem $\text{(OPT)}$. 

Next we first describe the modified  problem, then we propose the heuristic algorithm.

\subsubsection{\textbf{A Modified Problem of Problem $\text{(OPT)}$}} 
Comparing with the original Problem $\text{(OPT)}$, the modification involves removing delay constraints, relaxing integer variables, and adding control parameters that can prevent certain tasks from being allocated to certain devices. We first introduce the control parameters, then propose the modified problem.

In order to prevent particular subtasks from being allocated to certain devices, we introduce the following binary control parameters $\tilde{N}^{in}_{s,k\rightarrow n}$, $ \tilde{N}^{cpu}_{s\rightarrow n}$, and $ \tilde{N}^{up}_{s,k\rightarrow n}$:
\begin{equation}\label{constr:control-in}
x^{in}_{s,k\rightarrow n}\leq\tilde{N}^{in}_{s,k\rightarrow n},~
\x^{cpu}_{s\rightarrow n}\leq \tilde{N}^{cpu}_{s\rightarrow n}, ~x^{up}_{s,k\rightarrow n}\leq \tilde{N}^{up}_{s,k\rightarrow n}.
\end{equation}
Take $\tilde{N}^{in}_{s,k\rightarrow n}$ as an example: if it equals zero, then \eqref{constr:control-in} indicates that $ x^{in}_{s,k\rightarrow n}$ can only be zero, so the content $k$ of task $s$ cannot be allocated to device $n$; if it equals one, then $ x^{in}_{s,k\rightarrow n}$ can be either zero or one, so the allocation is not prevented. The same idea applies to $\tilde{N}^{cpu}_{s\rightarrow n}$ and $\tilde{N}^{up}_{s,k\rightarrow n}$.

Then, we introduce the modified problem of Problem $\text{(OPT)}$ by removing  delay constraints \eqref{eq:delay}, relaxing integer variables (i.e., relaxing $\boldsymbol{x},\boldsymbol{z}\in \{0,1\}$ to be $\boldsymbol{x},\boldsymbol{z}\in [0,1]$),  and adding control constraints $\eqref{constr:control-in}$:
\begin{equation*}\label{eq:problem4}
\begin{aligned}
& \underset{\boldsymbol{x},\boldsymbol{z}\in [0,1]}{\text{minimize}} & & \sum_{s\in\mathcal{S}} E_s\\
& \text{subject to} & & \eqref{constr:allo-cpu}\sim\eqref{constr:flow-up},\eqref{constr:control-in}
\\
\end{aligned}\eqno{\text{(OPT-RELAX)}}
\end{equation*}
To clarify, there are many versions of Problem $\text{(OPT-RELAX)}$, each of which corresponds to a set of parameter choices of $\tilde{N}^{in}_{s,k\rightarrow n}$, $ \tilde{N}^{cpu}_{s\rightarrow n}$, and $ \tilde{N}^{up}_{s,k\rightarrow n}$. Moreover, Problem $\text{(OPT-RELAX)}$ is an LP problem, which can be solved using various methods such as Simplex method\cite{simplex}.

\begin{algorithm}[t]
\caption{Heuristic Algorithm}
\label{alg:suboptimal}
\begin{algorithmic}[1]
	\State \textbf{Initialization:} $\tilde{\boldsymbol{N}}^{in},\tilde{\boldsymbol{N}}^{cpu},\tilde{\boldsymbol{N}}^{up}\leftarrow 1$
	\State $\boldsymbol{x},\boldsymbol{z}\leftarrow$Solve Problem $\text{(OPT-RELAX)}$ (e.g., using Simplex method \cite{simplex})
	\State Calculate $T_s^{down}, T_s^{cpu}, T_s^{up}$ using \eqref{eq:delay1}, \eqref{eq:delay2}, \eqref{eq:delay3}
	\While{delay constraints \eqref{eq:delay} are not fully  satisfied}
	\For{each task $s\in\S$}
	\If{$\Tdown_s > \bar{T}^{down}_s$}
	\State  $\hat{n}\leftarrow \max_n\{\tau^{down}_ny^{down}_{s\rightarrow n}\}$ \Comment{select a device} 
	\State $\boldsymbol{I}\leftarrow\{s|\!\sum  \limits_{\hat{s}\in\boldsymbol{s}_{\hat{n}}}\sum \limits_{k}\xin_{\hat{s},k\rightarrow \hat{n}}\xin_{s,k\rightarrow \hat{n}}\geq1\}$
	
	\Comment{find device $\hat{n}$'s \emph{non-preventable} task set}
	\State $\bar{s} \leftarrow \arg\min_{s}\{\bar{T}^{down}_s| y^{down}_{s\rightarrow \hat{n}}\neq 0 , {s}\notin\boldsymbol{I}\}$ 
	
	\Comment{select a task $\bar{s}$ that is preventable}
	\State $\bar{\mathcal{K}}\leftarrow \{k|\xin_{\bar{s},k\rightarrow \hat{n}}(1-\Qstore_{\hat{n}k})=1\}$
	
	\Comment{select contents}
	\State $\tilde{N}^{in}_{{s},k\rightarrow\hat{n}} \leftarrow 0, \forall\{s,k|\xin_{s,k\rightarrow\hat{n}}=1,k\in\bar{\mathcal{K}}\}$
	
	\Comment{prevent the allocations}
	\EndIf
	\If{$\Tcpu_s > \bar{T}^{cpu}_s$}
	\State  $\hat{n}\leftarrow \max_n\{\tau^{cpu}_ny^{cpu}_{s\rightarrow n}\}$ \Comment{select a device}
	\State $\bar{s} \leftarrow \arg\min_{s}\{\bar{T}^{cpu}_s| y^{cpu}_{s\rightarrow \hat{n}}\neq 0 , s\notin\boldsymbol{s}_{\hat{n}}\}$
	
	\Comment{select a  task}
	\State $\tilde{N}^{cpu}_{\bar{s}\rightarrow\hat{n}} \leftarrow 0$\Comment{prevent the allocation}
	\EndIf
	\If{$\Tup_s > \bar{T}^{up}_s$}
	\State  $\hat{n}\leftarrow \max_n\{\tau^{up}_ny^{up}_{s\rightarrow n}\}$\Comment{select a device}
	\State $\bar{s} \leftarrow \arg\min_{s}\{\bar{T}^{up}_s| y^{up}_{s\rightarrow \hat{n}}\neq 0 , s\notin\boldsymbol{s}_{\hat{n}}\}$
	
	\Comment{select a  task}
	\State $\tilde{N}^{up}_{\bar{s},k\rightarrow\hat{n}} \leftarrow 0$,$\forall k$\Comment{prevent the allocations}
	\EndIf
	\EndFor
	\State $\boldsymbol{x},\boldsymbol{z}\leftarrow$Solve Problem  $\text{(OPT-RELAX)}$ (e.g., using Simplex method \cite{simplex}) 
	\State Calculate $T_s^{down}, T_s^{cpu}, T_s^{up}$ using \eqref{eq:delay1}, \eqref{eq:delay2}, \eqref{eq:delay3}
	\EndWhile
	\State \Return $\boldsymbol{x},\boldsymbol{z}$
\end{algorithmic}
\end{algorithm}

\subsubsection{\textbf{A Heuristic Algorithm  to solve Problem $\text{(OPT)}$}}
The heuristic algorithm will iteratively solve multiple versions of Problem $\text{(OPT-RELAX)}$ as follows. At the beginning, no allocation is prevented, i.e., $\tilde{N}^{in}_{s,k\rightarrow n} = \tilde{N}^{cpu}_{s\rightarrow n} =  \tilde{N}^{up}_{s,k\rightarrow n} =1$ for all $s$, $k$, and $n$. We first optimize the corresponding Problem  $\text{(OPT-RELAX)}$ (e.g., using Simplex method \cite{simplex}), and check whether the optimal solution satisfies the delay constraints in \eqref{eq:delay}. If yes, then the obtained optimal solution of solving Problem  $\text{(OPT-RELAX)}$ is the optimal solution of Problem $\text{(OPT)}$; if not, we need to revise the  control parameters in Problem $\text{(OPT-RELAX)}$ (i.e., setting some $\tilde{N}^{in}_{s,k\rightarrow n}$, $\tilde{N}^{cpu}_{s\rightarrow n}$, or $\tilde{N}^{up}_{s,k\rightarrow n}$ to be zeros), with details discussed in the next paragraph. We optimize  Problem $\text{(OPT-RELAX)}$ iteratively until obtaining a solution that satisfies  the delay constraints in \eqref{eq:delay}. The  algorithm is given in Algorithm \ref{alg:suboptimal}. 

We now discuss how the algorithm chooses the proper version of Problem $\text{(OPT-RELAX)}$ to solve by setting $\tilde{N}^{in}_{s,k\rightarrow n}$, $\tilde{N}^{cpu}_{s\rightarrow n}$, or $\tilde{N}^{up}_{s,k\rightarrow n}$ for inputting (in lines 6-12  of Algorithm \ref{alg:suboptimal}),\footnote{The inputting subtask prevention corresponds to  downloading delays, because the downloading is the operation inducing inputting delays.} computation (in lines  13-17), and uploading (in lines  18-22) subtasks, respectively. We first introduce the general idea, then explain a special setting  of the inputting subtask. 

The general idea of preventing some allocations for the inputting, computation, and uploading subtasks is as follows. For a particular subtask of a task $s$, if its corresponding delay (after solving a version of Problem $\text{(OPT-RELAX)}$) is larger than the delay bound,  then  the algorithm will (i)  find the device $\hat{n}$ that induces the maximum delay, (ii) at device $\hat{n}$, find the task $\bar{s}$ with the tightest delay bound among all the tasks that are allocated to device $\hat{n}$ (excluding device $\hat{n}$'s tasks), 
(iii) prevent task $\bar{s}$ from being allocated to device $\hat{n}$ by setting the corresponding  $\tilde{N}^{in}_{s,k\rightarrow n}$, $\tilde{N}^{cpu}_{s\rightarrow n}$, or $\tilde{N}^{up}_{s,k\rightarrow n}$ to be zeros.

Next we discuss a special setting of the inputting (downloading) subtask. Specifically, device $\hat{n}$ may download the same content for both itself and other devices, but it  should not prevent  the content downloading of its own tasks. So we define a \emph{non-preventable} set $\boldsymbol{I}$ (in line  8), which contains the tasks that request a same content as device $\hat{n}$ does. Only the tasks outside the non-preventable set can be prevented (in line 9). In addition, only the contents that have not be cached (have to be downloaded) are prevented (in lines 10 and 11), because cached contents do not induce delays.

\subsubsection{\textbf{Properties of Algorithm \ref{alg:suboptimal}}}
We first make  an assumption  that is often satisfied in practice, and then characterize several properties of the proposed heuristic algorithm, including its feasible solution output guarantee, its performance guarantee, and its computation complexity.

\begin{assumption}[Feasible Noncooperation Case]\label{ass:noncooperation}
Noncooperation (i.e., each of the devices performs its tasks on its own) is within the feasible region of Problem $\text{(OPT)}$.
\end{assumption}
This assumption implies that each device is capable of executing its tasks on its own. If Assumption \ref{alg:suboptimal} is violated, some tasks may become infeasible to complete, as cooperation is not always guaranteed in practice. 

Under Assumption \ref{ass:noncooperation}, \rev{Algorithm \ref{alg:suboptimal} is guaranteed  to converge and output an integer feasible solution of Problem $\text{(OPT)}$.}
\begin{proposition}[Guarantee of Feasible Output]\label{prp:1}  Algorithm \ref{alg:suboptimal} is guaranteed to \rev{converge and} produce an integer solution that is within the feasible region of Problem $\text{(OPT)}$.
\end{proposition}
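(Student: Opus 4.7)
My plan is to establish the proposition via two essentially separate arguments: (i) finite termination of the while-loop in Algorithm \ref{alg:suboptimal}, and (ii) integrality plus feasibility of the returned solution. The overall strategy is a monotone-progress argument for termination, combined with an invariant that the noncooperation point remains feasible for every instance of Problem $\text{(OPT-RELAX)}$ solved by the algorithm.

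For finite termination, I would argue as follows. At each pass through the while-loop (lines 4--25), if the delay constraints \eqref{eq:delay} are already satisfied the algorithm exits. Otherwise, at least one of the three if-blocks (lines 6--12, 13--17, 18--22) triggers, and each such trigger sets at least one previously-unit entry of $\tilde{\boldsymbol{N}}^{in}$, $\tilde{\boldsymbol{N}}^{cpu}$, or $\tilde{\boldsymbol{N}}^{up}$ to zero. Since these control arrays are monotonically non-increasing and contain at most $O(SKN)$ binary entries, the number of non-exiting iterations is finite. The delicate part is to ensure that Problem $\text{(OPT-RELAX)}$ remains feasible throughout, so that line 23 is well-defined. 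Here I would invoke Assumption \ref{ass:noncooperation}: by direct inspection of lines 8--9, 15, and 20, the algorithm never zeroes any $\tilde{N}$ entry corresponding to a task being handled by its own owner (the non-preventable set $\boldsymbol{I}$ in line 8 and the condition $s\notin\boldsymbol{s}_{\hat{n}}$ in lines 15 and 20 guarantee this). Consequently, the self-execution (noncooperation) assignment—which is feasible for Problem $\text{(OPT)}$ by Assumption \ref{ass:noncooperation} and therefore feasible for the relaxation $\text{(OPT-RELAX)}$—lies in the feasible region of every relaxed problem the algorithm forms, so the LP is never infeasible.

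For integrality and feasibility of the final output, I would exploit the network-flow-like structure of $\text{(OPT-RELAX)}$: the allocation constraints \eqref{constr:allo-cpu}--\eqref{constr:allo-up}, the cache/download coupling constraint \eqref{constr:cap-in}, and the flow-conservation constraints \eqref{constr:flow-in}--\eqref{constr:flow-up} all have $\{0,1\}$ right-hand sides and $\{0,\pm 1\}$ coefficients, as do the control constraints \eqref{constr:control-in}. Decoupled per commodity $(s,k)$, each block is a standard single-commodity flow system with a totally unimodular constraint matrix, yielding integer basic optimal solutions. I would then argue that the only cross-commodity coupling (the $\sum_{s}\xdown_{s,k\rightarrow n}$ term in \eqref{constr:cap-in}) can be handled by treating the download variables per content as a shared ``arc capacity,'' so that basic optima of the Simplex method \cite{simplex} remain integer. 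Combined with the explicit verification of \eqref{eq:delay} in the exit condition of the while-loop, this establishes that the returned $(\boldsymbol{x},\boldsymbol{z})$ is integer-valued and satisfies \eqref{constr:allo-cpu}--\eqref{eq:delay}, i.e., is feasible for Problem $\text{(OPT)}$.

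The main obstacle will be the integrality claim. Multi-commodity flow LPs are in general \emph{not} integral, so a direct appeal to total unimodularity of the full constraint matrix is unlikely to go through verbatim. The most robust fallback argument is the terminal-state argument backed by Assumption \ref{ass:noncooperation}: because self-allocation is never preventable, in the worst case the iterations drive all $\tilde{N}$ entries for non-self allocations to zero, at which point the allocation constraints \eqref{constr:allo-cpu}--\eqref{constr:allo-up} reduce to a single admissible device per subtask and the $\boldsymbol{x}$ components are forced integer; the companion $\boldsymbol{z}$ components then collapse to trivial single-hop flows, which are also integer. Since the algorithm exits no later than this terminal state, and the delay constraints are satisfied by the noncooperation fallback (again by Assumption \ref{ass:noncooperation}), the conclusion follows.
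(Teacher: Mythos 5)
Your termination argument and your invariant that the noncooperation assignment is never excluded from Problem $\text{(OPT-RELAX)}$ (via the non-preventable set $\boldsymbol{I}$ and the conditions $s\notin\boldsymbol{s}_{\hat{n}}$) match the paper's reasoning for convergence and for well-definedness of each LP solve. The gap is in the integrality claim, and you have correctly sensed where it is but not closed it. Your fallback --- that in the worst case all non-self allocations are eventually prevented, at which point the solution is forced to be the integer noncooperation point --- only covers the \emph{terminal} state. Algorithm \ref{alg:suboptimal} exits the while-loop as soon as the delay constraints \eqref{eq:delay} are satisfied, which can happen at the very first LP solve with all control parameters equal to one; at such an intermediate exit nothing in your argument prevents the returned $(\boldsymbol{x},\boldsymbol{z})$ from being fractional, so the proposition is not established. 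You need integrality of the LP optimum at \emph{every} iteration, not just the last possible one.

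The paper supplies exactly this missing piece. Its Lemma 1 first eliminates the flow-balancing constraints \eqref{constr:flow-in}--\eqref{constr:flow-up} by substituting them into the objective (so the cross-commodity multi-commodity-flow worry you raise does not arise: given $\boldsymbol{x}$, each commodity $(s,k)$ routes independently), and then shows that the remaining constraint matrix on $\boldsymbol{x}$ --- built from \eqref{constr:allo-cpu}--\eqref{constr:cap-in} and \eqref{constr:control-in} --- is totally unimodular via the Heller--Tompkins criterion: entries in $\{0,\pm1\}$, at most two nonzeros per column (only $x^{in}_{s,k\rightarrow n}$ appears in both \eqref{constr:allo-in} and \eqref{constr:cap-in}), and those two rows separable into disjoint classes. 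Its Lemma 2 then notes that the Simplex method returns a vertex of a feasible bounded polytope, hence an integer point, at every iteration. If you want to repair your proof, you must either reproduce this decomposition-plus-TU argument or find another reason why every LP optimum returned in line 23 is integral; the terminal-state argument alone cannot do the job.
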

The proof is given  in Appendix \ref{app:prp-1}.  \rev{Specifically, to prove the convergence, we have to show that the noncooperation solution will never be excluded from the feasible region of Problem $\text{(OPT-RELAX)}$, so the algorithm will definitely converge when reaching the noncooperation solution (if it has not converged earlier).  To prove the integer feasible solution, we have to prove that the optimal solution of any version of Problem $\text{(OPT-RELAX)}$ is always integer (as its matrix of constraint coefficients is totally unimodular\cite{unimodular}) and within the feasible region of Problem $\text{(OPT)}$.} 

We now show the performance guarantee of Algorithm \ref{alg:suboptimal}.
\begin{proposition}[Performance Guarantee]\label{prp:2}  \rev{(i)} The energy consumption of the heuristic algorithm output is no larger than that of the noncooperation case \rev{(i.e., each of the devices performs its tasks on its own)}. \rev{(ii)} When there is no delay constraint,  the heuristic algorithm output is an optimal solution of the original problem $\text{(OPT)}$.
\end{proposition}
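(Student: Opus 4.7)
The plan is to establish the two parts of the proposition separately, leveraging the structural properties of Algorithm \ref{alg:suboptimal} together with the integer-feasibility machinery already invoked in Proposition \ref{prp:1}.

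For part (i), the crucial observation is that the noncooperation allocation, in which every task $s$ has all of its inputting, computation, and uploading subtasks assigned to its own owner device $\device_s$ with no D2D traffic, remains feasible for \emph{every} instance of Problem $\text{(OPT-RELAX)}$ that the algorithm encounters. Under Assumption \ref{ass:noncooperation}, this allocation satisfies \eqref{constr:allo-cpu}--\eqref{constr:flow-up}, and it is trivially binary-valued, hence in $[0,1]$. The only remaining risk is that the control constraints \eqref{constr:control-in} might exclude it after some iteration. To rule this out, I would inspect the update rules in lines 9, 15, and 20 of Algorithm \ref{alg:suboptimal}: a task $\bar{s}$ is only preempted from device $\hat{n}$ under the explicit guard ${s}\notin\boldsymbol{s}_{\hat{n}}$ for the computation and uploading cases, and under ${s}\notin\boldsymbol{I}$ for the inputting case, where $\boldsymbol{I}$ contains every task of $\hat{n}$ that requests a content already being downloaded at $\hat{n}$. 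Consequently, no $\tilde{N}^{in}_{s,k\rightarrow \device_s}$, $\tilde{N}^{cpu}_{s\rightarrow \device_s}$, or $\tilde{N}^{up}_{s,k\rightarrow \device_s}$ is ever set to zero.

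Given that the noncooperation point is feasible for every $\text{(OPT-RELAX)}$ the algorithm solves, the optimal value returned at each iteration is no larger than the noncooperation energy. Since the algorithm's output equals the optimum of the \emph{final} $\text{(OPT-RELAX)}$ solved, this establishes part (i).

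For part (ii), if no delay constraint is present then the condition of the \textbf{while} loop in line 4 of Algorithm \ref{alg:suboptimal} fails immediately after the first call to solve $\text{(OPT-RELAX)}$, so the algorithm exits after a single LP solve with all $\tilde{N}$'s equal to one. In this single instance, $\text{(OPT-RELAX)}$ is obtained from $\text{(OPT)}$ purely by dropping the integer constraints (since the delay constraints are assumed absent), so its optimal value is a valid lower bound on the optimal value of $\text{(OPT)}$. By the totally unimodular argument underlying Proposition \ref{prp:1}, this LP optimum is attained at an integer vertex, which is therefore simultaneously (a) feasible for $\text{(OPT)}$ and (b) achieves a lower bound on its optimum, hence it is optimal for $\text{(OPT)}$.

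The main obstacle I anticipate is the careful bookkeeping in part (i): one must verify unambiguously that lines 8--11 (and in particular the definition of the non-preventable set $\boldsymbol{I}$) shield not only the self-allocation of computation and uploading subtasks but also the self-inputting of \emph{every} content of $\device_s$'s own tasks, including contents already fetched by $\hat{n}$ on behalf of another task. Once this invariant is confirmed, parts (i) and (ii) follow from standard LP relaxation reasoning and do not require additional machinery beyond what Proposition \ref{prp:1} already supplies.
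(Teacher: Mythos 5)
Your proposal is correct and follows essentially the same route as the paper's proof: part (i) rests on the noncooperation point remaining feasible for every version of Problem $\text{(OPT-RELAX)}$ the algorithm solves, and part (ii) on the algorithm terminating after a single LP solve whose integer (totally unimodular) optimum coincides with the optimum of Problem $\text{(OPT)}$. Your explicit verification that the guards in lines 8--11, 15, and 20 of Algorithm \ref{alg:suboptimal} never zero a self-allocation control parameter is a welcome elaboration of a step the paper only asserts, but it does not change the argument.
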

The proof is given  in Appendix \ref{app:prp-2}. Specifically, \rev{statement (i) is proved by showing that the noncooperation case is always within the feasible region of Problem (OPT-RELAX), so the optimal solution of  Problem (OPT-RELAX) is always no worse than that is achieved under the noncooperation case.} \rev{Statement (ii) is proved by showing that when there is no delay constraint,  Problem (OPT) is equivalent to the initial version of Problem (OPT-RELAX) without eliminating any feasible solution. In this case, Algorithm \ref{alg:suboptimal} will terminate at the first iteration and output the optimal solution to Problem (OPT).}  We will further evaluate the performance of this heuristic algorithm under the settings with delay constraints in Section \ref{subsec:compare}.

Regarding the  complexity of Algorithm \ref{alg:suboptimal}, its maximum iteration time  is as follows:
\begin{proposition}[Maximum Iteration Time]\label{prp:3} The maximum iteration time of this heuristic algorithm is $S\times (N-1)$, where $S$ is the task number and $N$ is device number. 
\end{proposition}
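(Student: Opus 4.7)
The plan is to bound the iteration count by tracking the monotone shrinking of the feasible region of Problem $\text{(OPT-RELAX)}$. There are two pieces to establish: (a) every non-terminating iteration of the while loop (line 4) strictly decreases the number of control parameters in $\tilde{\boldsymbol{N}}^{in},\tilde{\boldsymbol{N}}^{cpu},\tilde{\boldsymbol{N}}^{up}$ that are still equal to $1$; and (b) the total count of such decrements is bounded by $S(N-1)$.

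For (a), I would observe that the while loop continues only if some delay constraint in \eqref{eq:delay} is violated. For each violation, the corresponding IF branch (lines 6--12, 13--17, or 18--22) is entered and at least one $\tilde{N}$ is set from $1$ to $0$. Since the $\tilde{N}$ parameters are initialized to $1$ and their values are non-increasing throughout the algorithm, every non-terminating iteration strictly shrinks the feasible set of Problem $\text{(OPT-RELAX)}$ by excluding at least one previously allowed allocation.

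For (b), the key structural observation is that, for each task $s\in\S$, the algorithm never prevents its owner $u_s$ from executing any of $s$'s subtasks. This is enforced in three places: Assumption \ref{ass:noncooperation} guarantees that the noncooperation assignment is feasible at every iteration; the non-preventable set $\boldsymbol{I}$ constructed in line 8 protects tasks whose contents coincide with the selected device's own tasks for inputting; and the conditions $s\notin\boldsymbol{s}_{\hat n}$ in lines 15 and 20 do the same for computation and uploading. Consequently, for each task $s$ only the $N-1$ devices in $\mathcal{N}\setminus\{u_s\}$ can ever be targets of a new prevention. Summing over the $S$ tasks then bounds the total number of distinct (task, device) preventions by $S(N-1)$, which combined with (a) yields the stated bound on the number of iterations.

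The main obstacle I anticipate is that, for the inputting and uploading subtasks, preventions are recorded at the finer (task, device, content) granularity rather than the (task, device) level, which raises the worry of a $K$-factor blow-up in the count. To reconcile this with the bound $S(N-1)$, I would argue that within each pass through the for-loop, for a single violated subtask of task $s$ the algorithm selects exactly one offending device $\hat n$ (via the $\max_n$ rule) and one target task $\bar s$; any batch of contents zeroed out at lines 10--11 all live at the same $(\bar s,\hat n)$ pair. Hence the essential bookkeeping can be collapsed to (task, device) pairs, and the finer-grained per-content preventions do not produce additional iterations beyond the $S(N-1)$ bound.
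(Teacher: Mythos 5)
Your argument is essentially the paper's own proof: the paper phrases it as a bipartite graph of task-to-device allocation links that starts complete, loses at least one link per non-terminating iteration, and can never lose the $S$ links from tasks to their owners, yielding the $S\times(N-1)$ bound. Your extra discussion of the per-content granularity of $\tilde{N}^{in}$ and $\tilde{N}^{up}$ flags a subtlety the paper's bipartite-graph argument silently elides, and your resolution of it is at the same level of rigor as the published proof.
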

The proof is given in Appendix \ref{app:prp-3}. \rev{The key idea is as follows. There are $S\times (N-1)$ possible allocations from a task to a device other than the task owner. The algorithm terminates no later than the iteration where all possible allocations are removed from the feasible region of Problem (OPT-RELAX). This is because  we have shown that the algorithm would definitely converge if reaching the noncooperation solution. Since each iteration removes at least one of such allocations from the feasible region, the algorithm terminates with a maximum $S\times (N-1)$ iterations.} 

Recall that  Problem $\text{(OPT)}$ is NP-complete, which cannot be solved in polynomial time in general. In comparison, \rev{Proposition \ref{prp:3} shows that the heuristic algorithm has a maximum of $S\times (N-1)$ iterations, each of which solves an LP problem  $\text{(OPT-RELAX)}$ that is a P-complete problem. This implies that the heuristic algorithm can terminate in polynomial time.}

\section{Energy Reduction Due to 3C Sharing}\label{sec:theory}

The proposed 3C framework is ``resource-centric" instead of ``task-centric", so that it provides additional flexibilities in terms of device cooperation. More specifically, it promotes cooperation opportunities through enabling devices performing different tasks to cooperate. In this section, we study how much a 3C framework can reduce the energy consumption  through a specific problem setting, comparing with 1C models. We first introduce system settings, then discuss the energy reduction due to  the 3C framework.

\subsection{System Settings}\label{subsec:analysis-setting}
In order to  derive the closed-form solutions of the energy reduction, we consider specific device and task models as follows. We consider a random graph model $G(N, p)$\cite{networkscience}, where there are $N$ devices in the graph and every two devices  are connected randomly and independently  with a probability $p$. Suppose that the network is large and sparse, so that $N$ approaches infinite with $Np$ being a constant. 
These devices initialize a set of tasks. Since we focus on the comparison between 1C models and 3C framework, we assume that each task only needs one of the 3C resources.

The devices are heterogeneous in terms of their owned resources. Specifically, each device $n$ owns some resources $\Qdown_n$, $\Qcpu_n$, and $\Qup_n$. The capacities $Q^{X}_n$ ($X\in\{down,cpu,up\}$) is independent and identically distributed (i.i.d.) with the cumulative distribution function (cdf) $F_Q^{X}(x)$ and the probability density function (pdf) $f_Q^{X}(x)$. The support of capacity $Q_n^X$ is  $(\underline{Q}^{X},  \overline{Q}^{X})$, hence  $F_Q^{X}(\underline{Q}^{X}) =0$  and $F_Q^{X}(\overline{Q}^{X}) =1$. For the convenience of analysis, we assume that the energy coefficients of the devices are homogeneous, i.e., $c^X_n = c^X,X \in \{down, cpu, up\}$, $\forall n$. In addition, each device $n$ uniformly and randomly caches $\Mcache$ contents in its cache, 
i.e., $\sum_{k=1}^{K}\qstore_{nk}=\Mcache$ for all $n$. For  simplification, we assume that all the contents have the same size that is  normalized to one, i.e., $\KSize_k=1$, for each $k$, \rev{similar as in existing caching studies on performance analysis (e.g., \cite{caching-size}).}

We aim to study the system under the general distribution function, which is quite challenging to do. Hence, we further make the following simplifying assumption for the rest of Section \ref{sec:theory}. 
A more realistic case (with these assumptions relaxed) is evaluated empirically in Section \ref{sec:experiment}.
\begin{assumption}\label{ass:analysis}
	1) The D2D transmission energy is relatively small and can be ignored; 
	2) there is no delay constraint; 3) devices can only cooperate with their one-hop neighbors.
\end{assumption}

Under Assumption \ref{ass:analysis}, for any device $m$, the optimal allocation of any of its tasks $s\in\boldsymbol{s}_m$ is as follows. Regarding the inputting subtask, for a content requested by task $s$, if any device $n\in\mathcal{E}(m)$ has cached it, then the content will be inputted from device $n$. If none of the devices in set $\mathcal{E}(m)$ has cached it, then the device  who has the highest downloading capacity among set $\mathcal{E}(m)$  downloads the content. Regarding the computation and uploading subtasks, they will be allocated to the devices with the highest computation and uploading capacities among the devices  $\mathcal{E}(m)$, respectively. 

\subsection{Energy Reduction Due to  the 3C Framework}
In this subsection, we study how much a 3C framework can reduce the energy consumption through providing more cooperation opportunities. 

We explain the basic analysis idea using the following simple example. Suppose among the entire device population, $\alpha N$ devices initialize downloading tasks and participate in user-provided network, and another $\alpha N$ devices initialize computation tasks and participate in ad hoc computation offloading, where $\alpha$ refers to a fraction of devices, and we assume that these two set of devices do not overlap with each other. Under  1C models (e.g., \cite{upn1} and \cite{cloudlet1}), only devices with the same type of tasks (hence requesting the same type of resource) can cooperate; in our proposed 3C framework, all $2 \alpha N$ devices can cooperate with each other, so that the number of devices sharing each of the downloading and computation resources is  doubled, respectively. We are interested in calculating the energy gap between these two cooperative scenarios, showing the energy reduction due to the 3C framework. To clarify, the above scenario is only a simple example, our analysis will cover not only the communication and computation resources but also the caching resource.

Since that each of the tasks only requests one kinds of the 3C resources, we can analyze the tasks requesting each of the 3C resources separately. Specifically, as in the above example, we can calculate the energy reduction of the tasks requesting downloading resource and the tasks requesting computation resource separately, and the entire energy reduction will be the sum of the two energy reductions. 

Next we will compute the  energy reduction of the tasks requesting each of the 3C resources one by one. We will first discuss the tasks requesting communication/computation resource (both of which are capacity-based resources), and then discuss the tasks requesting caching resource.

\subsubsection{\textbf{Communication/Computation}}\label{subsec:size-1}
In the following analysis, we focus on the tasks requesting a particular  resource (i.e., downloading, uploading, or computation). Hence, for presentation simplicity, the term ``resource" in Section \ref{subsec:size-1} only refers to the particular resource, and we omit the corresponding resource-specific super-scripts and sub-scripts. Without the loss of generality, we normalize the energy coefficient of the particular resource to be one, i.e., $c = 1$.

We will first formulate  the expected energy consumption, then introduce the analysis idea. In the random graph $G(N,p)$, suppose each device joins the cooperative system with a probability $\alpha\in[0,1]$. Under these, each task requesting the particular resource will have an expected energy consumption denoted by $W(\alpha,Np)$.\footnote{Under the homogeneous distribution settings in Section \ref{subsec:analysis-setting}, all the tasks will have the same expected energy consumption, so we only need to study the expected energy consumption of a task.}  Under the 1C model, let us denote the probability that  each device joins the corresponding 1C model (that shares the particular resource) as $\alpha^{1C}\in[0,1]$; under the 3C framework, the corresponding probability  is $\alpha^{3C}=\min\{r\alpha^{1C},1\}$, where $r\geq 1$ is a coefficient reflecting the ratio of the increased cooperation opportunities. We will compute the energy reduction $\Delta W(r,\alpha^{1C},Np) \triangleq W(\alpha^{1C},Np)-W(\alpha^{3C},Np)$ for $r\geq1$.

First, we calculate the expected energy, i.e., $W(\alpha,Np)$, under particular $\alpha$ and $Np$. As we have explained, under Assumption \ref{ass:analysis}, any device's task will be allocated to its neighbor who has the highest capacity. By using the order statistic result \cite{orderstatistics}, the probability density function of the highest capacity among a total of $n$ devices is given by
\begin{equation}
f_{(n)}	(x) =  n(F(x))^{n-1}f(x) .
\end{equation}
In the random graph, for a device with a degree $m$, the probability that $\hat{N}$ of its neighbors  join in the cooperative system  is $P(\hat{N}|m) = C_{m}^{\hat{N}}\alpha^{\hat{N}}(1-\alpha)^{m-\hat{N}}$, and the corresponding distribution of the highest capacity among these $\hat{N}$ devices and itself is $f_{(\hat{N}+1)}(x)$. Taking the expectation over $\hat{N}=\{0,...,m\}$, the expected energy consumption of this device's task is given by
\begin{equation}\label{eq:order}
\hat{W}_m(\alpha,Np)= \sum_{\hat{N}=0}^{m} P(\hat{N}|m) \int_{\underline{Q}}^{\overline{Q}} \frac{1}{x} 	f_{(\hat{N}+1)}(x) dx.
\end{equation}
Taking the expectation of $\hat{W}_m(\alpha,Np)$ over all degrees $m=\{0,...,\infty\}$\cite{networkscience}, {the expected energy} of a task is 
\begin{equation}\label{eq:CPU-N-lower}
{W(\alpha,Np)} = \frac{1}{\overline{Q}}+ \int^{\overline{Q}}_{\underline{Q}}e^{ N p(F(x)-1)\alpha}F(x)x^{-2}dx,
\end{equation}
with the detailed proof  given in Appendix \ref{app:eq27}.

Then, we discuss how much the 3C framework can reduce the energy consumption under a coefficient $r\geq 1$. We are interested in the best (the maximum energy reduction) that  the 3C framework can achieve for any $\alpha^{1C}$ and $p$ under an $r$, i.e., $\max_{\alpha^{1C},p} \Delta W(r,\alpha^{1C},Np)$. 
The maximum energy reduction that is caused by the 3C framework is as follows:
\begin{theorem}[Maximum Energy Reduction of Communication/Computation]\label{thm:maxenergy}
	Under a coefficient $r\geq 1$, the maximum energy reduction due to the 3C framework is given by
	\begin{multline}
	\max_{\alpha^{1C},p} \Delta W(r,{\alpha^{1C},p}) 
	\\
	= \int^{\overline{Q}}_{\underline{Q}}\left(e^{\frac{N\tilde{p}(F(x)-1)}{r}}-e^{N\tilde{p}(F(x)-1)}\right)F(x)x^{-2}dx,
	\end{multline}
	where $\tilde{p}$ satisfies 
	\begin{equation}
	\int^{\overline{Q}}_{\underline{Q}}(F(x)-1)F(x)\left(e^{\frac{N\tilde{p}(F(x)-1)}{r}}- re^{N\tilde{p}(F(x)-1)}\right)dx = 0.
	\end{equation}
\end{theorem}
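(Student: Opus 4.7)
The plan is to substitute the closed-form expression for $W(\alpha, Np)$ from \eqref{eq:CPU-N-lower} into the definition $\Delta W(r,\alpha^{1C},Np) = W(\alpha^{1C},Np) - W(\alpha^{3C},Np)$ and then maximize over the two free parameters. Since the constant term $1/\overline{Q}$ appears in both terms it cancels, giving
\[
\Delta W = \int_{\underline{Q}}^{\overline{Q}} \left(e^{Np(F(x)-1)\alpha^{1C}} - e^{Np(F(x)-1)r\alpha^{1C}}\right) F(x)\, x^{-2}\, dx,
\]
where the second exponent uses $\alpha^{3C}=r\alpha^{1C}$ (we will verify below that the constraint $r\alpha^{1C}\le 1$ is non-binding at the optimum, so the $\min$ in the definition of $\alpha^{3C}$ can be ignored).

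The key simplifying observation is that $\alpha^{1C}$ and $p$ enter only as a single effective product. Introducing the reparametrization $\tilde{p} \triangleq r p \alpha^{1C}$ rewrites the two exponents as $\frac{N\tilde{p}(F(x)-1)}{r}$ and $N\tilde{p}(F(x)-1)$, respectively, which collapses the bivariate maximization over $(\alpha^{1C},p)$ into a univariate maximization over $\tilde{p}\geq 0$. Since for any target value of $\tilde{p}$ we can take $p$ large and $\alpha^{1C}$ correspondingly small, the constraint $r\alpha^{1C}\le 1$ is not binding and can be discharged at the end.

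Next, I would differentiate $\Delta W(\tilde{p})$ with respect to $\tilde{p}$, with the exchange of differentiation and integration justified by dominated convergence on the compact support $[\underline{Q},\overline{Q}]$. Setting the derivative to zero and factoring out the positive constant $N/r$ yields the stationarity condition stated in the theorem, after which substituting the resulting $\tilde{p}$ back into the reduced expression for $\Delta W$ produces the claimed maximum.

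The hard part will be confirming that the interior critical point is indeed the global maximizer. I would first use a boundary argument: at $\tilde{p}=0$ both exponentials equal one so $\Delta W=0$, while as $\tilde{p}\to\infty$ both exponentials vanish on the interior of the support (where $F(x)-1<0$), so again $\Delta W\to 0$. Since $\Delta W\geq 0$ throughout (the 3C framework weakly dominates 1C because $W$ is monotone non-increasing in $\alpha$), an interior maximizer must exist. Uniqueness of the critical point, which is what pins the optimum to the first-order condition, is the most delicate step; I expect it to follow from a sign analysis of the second derivative or from a log-concavity argument for $\Delta W$ in $\tilde{p}$, but making this rigorous for arbitrary distributions $F$ is where the proof would require the most care.
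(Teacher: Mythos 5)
Your route is genuinely different from the paper's and is, in its skeleton, cleaner. The paper does a nested two-stage optimization: Lemma \ref{lemma:optimal-alpha} first characterizes the optimal $\alpha^{1C}$ for each fixed $p$ (splitting into the cases $p\le\tilde{p}$ and $p>\tilde{p}$, precisely because of the saturation $\alpha^{3C}=\min\{r\alpha^{1C},1\}$), and Lemma \ref{lemma:optimal-alpha-p} then optimizes over $p$, showing $\Delta W$ is increasing in $Np$ below $\tilde p$ and constant above it. Your observation that, once the $\min$ is inactive, $\alpha^{1C}$ and $p$ enter only through the product $Np\,\alpha^{1C}$ collapses all of this into a single univariate problem, and it also explains why the paper's reported optimizer sits exactly on the boundary $r\alpha^{1C}=1$: the maximum value is attained along an entire level curve of the product. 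Your discharge of the $\min$ constraint is essentially sound, though you should state the one missing inequality explicitly: when $r\alpha^{1C}>1$ the true objective $W(\alpha^{1C},Np)-W(1,Np)$ is dominated by the relaxed one $W(\alpha^{1C},Np)-W(r\alpha^{1C},Np)$ because $W$ is nonincreasing in $\alpha$, so the relaxed maximum is both an upper bound and achievable in the feasible region.

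There are two places where your write-up would stall. First, the step you flag as delicate is a genuine gap, and it is exactly the step the paper spends most of its effort on: showing the stationary point is unique. The paper's device is to compute the second derivative, apply the first mean value theorem for integrals to pull out a factor $Np(F(\xi)-1)\le 0$, and then substitute the first-order condition into what remains; this shows the second derivative is nonpositive at \emph{every} stationary point, which rules out a second critical point. Your proposed ``sign analysis of the second derivative'' is the right instinct, but without the substitution of the first-order condition it does not close (the second derivative is not globally signed), so you would need to reproduce this argument rather than hope for log-concavity. Second, your claim that the first-order condition of the reduced problem ``yields the stationarity condition stated in the theorem'' after factoring out $N/r$ is not literally true: differentiating under the integral leaves the weight $x^{-2}$ inside, i.e., $\int(F(x)-1)F(x)x^{-2}\bigl(e^{N\tilde{p}(F(x)-1)/r}-re^{N\tilde{p}(F(x)-1)}\bigr)dx=0$, whereas the theorem's displayed condition has no $x^{-2}$. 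The paper bridges this by another mean-value-theorem step; you would either have to reproduce that step or state the condition with the $x^{-2}$ weight, and you should be aware that the two conditions do not define the same $\tilde p$ in general.
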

The proof is given in Appendix \ref{app:thm-maxenergy}. \rev{The key idea is to show that the non-concave energy reduction $\Delta W(r,\alpha^{1C},Np)$ has a unique maximizer, which satisfies the first order condition.} 

\rev{Theorem \ref{thm:maxenergy} shows the maximum energy reduction under a general capacity distribution $F(x)$. In order to reveal practical insights, we show a concrete example.}
\begin{figure}[t]
	\centering
	\includegraphics[height=2.5cm]{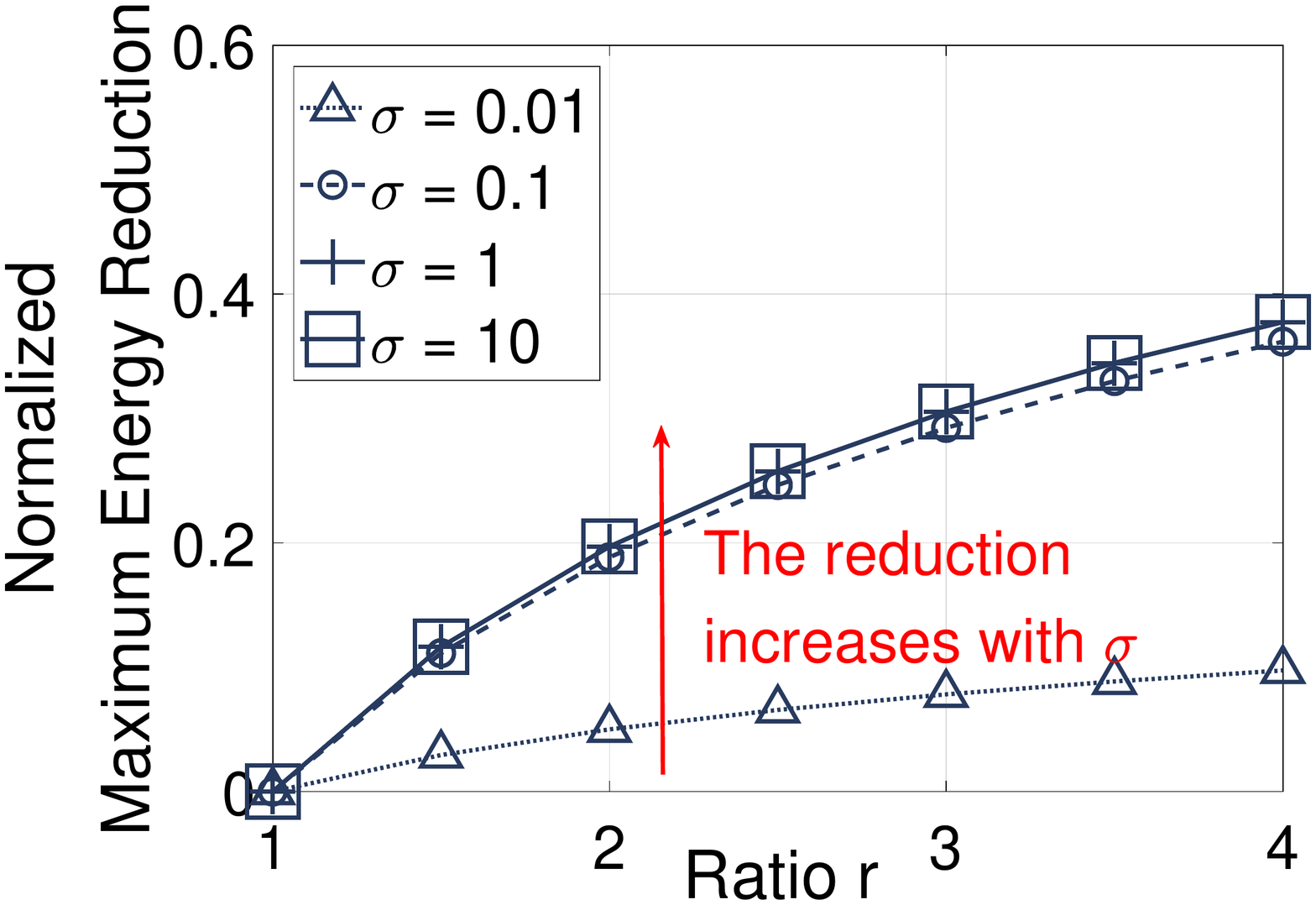}
	\includegraphics[height=2.5cm]{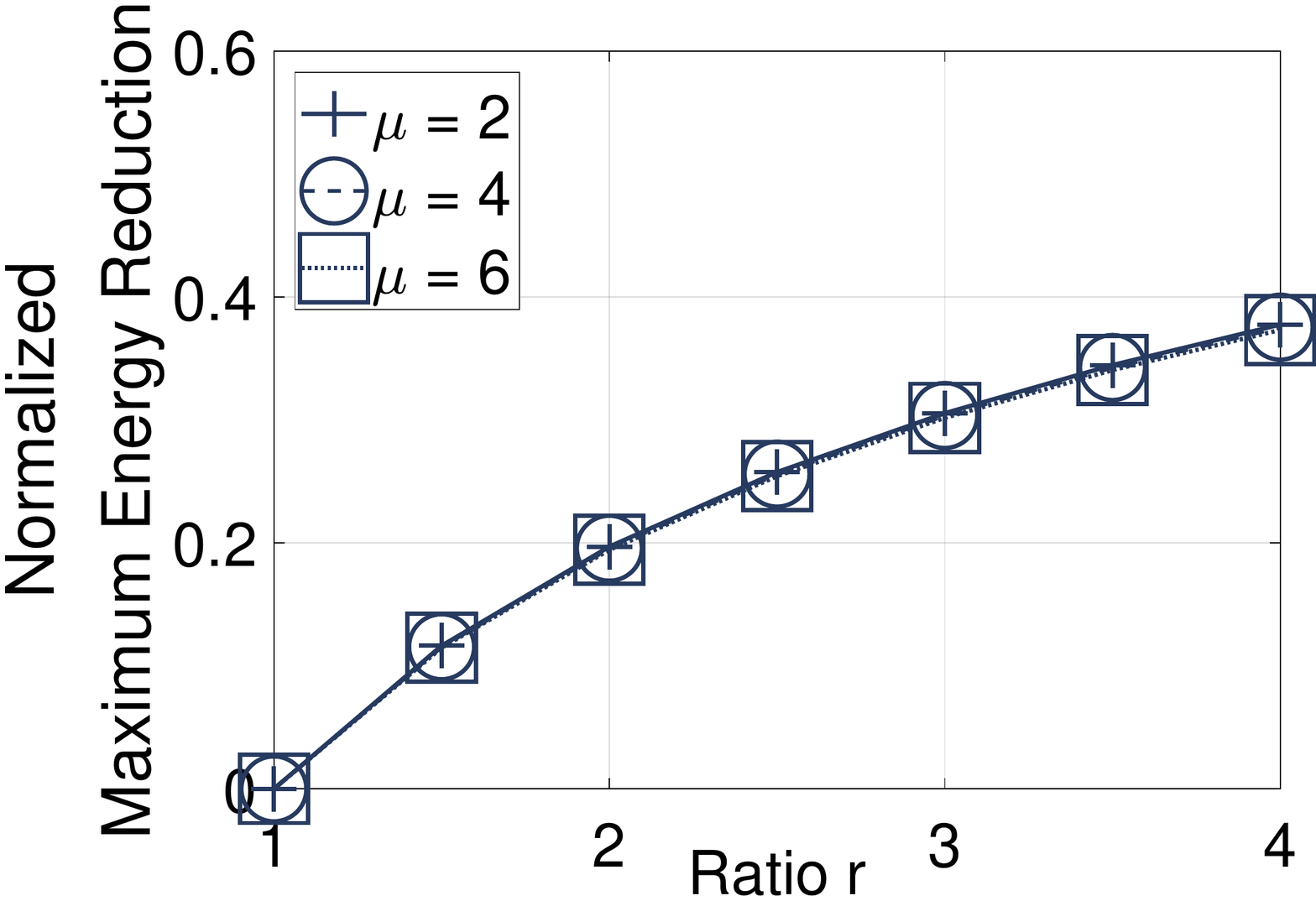}\\
	~~~~~(a)~~~~~~~~~~~~~~~~~~~~~~~~~~~~~~(b)\vspace{-2mm}\\
	\caption{Maximum energy reduction normalized by the energy consumption in noncooperation: (a) $\mu=2$; (b) $\sigma=10$.}\label{fig:example}\vspace{-4mm}
\end{figure}
\begin{example}\label{exampe:1}
	Let us consider the truncated normal distribution $F(x) = F(x;\mu,\sigma,a,b)$, which can be regarded as a normal distribution $N(\mu ,\sigma^2)$ that lies within the interval $[a,b]$ (please refer to  Appendix \ref{app:trun} and \cite{distribution} for details). Figure \ref{fig:example} shows the maximum energy reduction (normalized by the energy consumption in the  noncooperation case,  $W(0,0)$). From Figure \ref{fig:example}, we conclude as follows. \rev{(i) The energy reduction is higher when the variance $\sigma$ is larger. Intuitively, when the devices and tasks are more heterogeneous, the framework benefits more from exploiting the devices' and tasks' heterogeneities.} \rev{(ii) The energy reduction is not affected by the mean $\mu$, i.e., the average capacity of the devices.}
	(iii) Under a large variance $\sigma$ (e.g., $\sigma = 10$), doubling the  sharing devices fraction (i.e., $r=2$) leads to a maximum energy reduction of around $20\%$ of the energy consumed in noncooperation.
\end{example}

\subsubsection{\textbf{Caching}}
The analysis for caching is similar as that  for the communication/computation, where the details are given in  Appendix \ref{app:eqcache}. The expected energy reduction of a content  ${Z}(\alpha,Np)$ is as follows:\footnote{Similarly, under the homogeneous distribution settings in Section \ref{subsec:analysis-setting}, each content will have the same expected energy consumption, so we only need to study the expected energy consumption of a content.}
\begin{equation}\label{eq:cache}
{Z}(\alpha,Np)  = (1-\frac{\Mcache}{K})e^{-\alpha Np \frac{\Mcache}{K}}.
\end{equation}
Under this, the energy reduction due to 3C framework is  $\Delta Z(r,\alpha^{1C},Np) \triangleq Z(\alpha^{1C},Np)-Z(\alpha^{3C},Np)$. Then, the maximum energy reduction is given as follows.
\begin{theorem}[Maximum Energy Reduction of Caching]\label{thm:2}
	Under a coefficient $r\geq 1$, the maximum energy reduction due to the 3C framework is given by
	\begin{multline}\label{eq:reduction-caching}
	\max_{\alpha^{1C},p} \Delta Z(r,\alpha^{1C},Np)
	\\
	= \left(1-\frac{M^{ca}}{K}\right)\left(e^{-\frac{\ln r}{(r-1)}} - e^{-\frac{r\ln r}{ (r-1)}}\right).
	\end{multline}
\end{theorem}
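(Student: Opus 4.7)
The plan is to exploit the fact that $Z(\alpha,Np)$ has an extremely clean exponential form, so that $\Delta Z$ collapses to a one-dimensional optimization after a suitable change of variables. Concretely, I will substitute the closed form of $Z$ from equation \eqref{eq:cache} into the definition of $\Delta Z(r,\alpha^{1C},Np)$ and, in the regime where $r\alpha^{1C}\le 1$ so that $\alpha^{3C}=r\alpha^{1C}$, write
\begin{equation*}
\Delta Z(r,\alpha^{1C},Np) = \left(1-\frac{M^{ca}}{K}\right)\left(e^{-\alpha^{1C} Np \frac{M^{ca}}{K}} - e^{-r\alpha^{1C} Np \frac{M^{ca}}{K}}\right).
\end{equation*}
Since $\alpha^{1C}$ and $p$ appear only through the scalar product $\beta\triangleq \alpha^{1C}Np\,M^{ca}/K$, maximizing over $(\alpha^{1C},p)$ reduces to maximizing $g(\beta)\triangleq e^{-\beta}-e^{-r\beta}$ over $\beta\ge 0$.

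Next I would compute $g'(\beta) = -e^{-\beta}+re^{-r\beta}$, set it to zero, and solve $re^{-r\beta}=e^{-\beta}$, which yields the unique stationary point $\beta^\ast=\frac{\ln r}{r-1}$. I would confirm that this is a maximum by noting that $g(0)=0$, $g(\beta)\to 0$ as $\beta\to\infty$, and $g(\beta)>0$ for $\beta>0$ when $r>1$, so the interior critical point must be the global maximizer; alternatively, a direct second-derivative check gives $g''(\beta^\ast) = r(1-r)e^{-r\beta^\ast}<0$ for $r>1$. Substituting $\beta^\ast$ back into $g$ yields $g(\beta^\ast)=e^{-\ln r/(r-1)}-e^{-r\ln r/(r-1)}$, producing exactly the claimed expression \eqref{eq:reduction-caching}.

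The only delicate point is ensuring attainability: the stationary $\beta^\ast$ must be realizable by some admissible pair $(\alpha^{1C},p)$ with $\alpha^{1C}\in[0,1]$ and $r\alpha^{1C}\le 1$ (so that $\alpha^{3C}=r\alpha^{1C}$ and no clipping by the $\min$ occurs). I would handle this by observing that since $\beta=\alpha^{1C}Np\,M^{ca}/K$ with $Np$ free to take any positive value, one can always choose $\alpha^{1C}$ arbitrarily small (certainly with $r\alpha^{1C}\le 1$) and then pick $p$ so that $\beta=\beta^\ast$. Hence the interior optimum is achievable, and the supremum is attained. The boundary regime $r\alpha^{1C}>1$, where $\alpha^{3C}$ saturates to $1$, need not be analyzed for the maximum since the above interior point already delivers the value on the right-hand side; one can verify after the fact that the saturated regime cannot exceed it. The degenerate case $r=1$ gives $\Delta Z\equiv 0$ and the right-hand side tends to $0$ by l'Hôpital, so the formula extends continuously.

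The main obstacle I anticipate is not the calculus (which is routine) but rather the attainability/boundary bookkeeping: making the argument clean that the joint maximization over $(\alpha^{1C},p)$ really does reduce to the one-dimensional maximization over $\beta\ge 0$ without losing feasibility through the $\min\{r\alpha^{1C},1\}$ clipping. Once that reduction is justified, the rest of the proof is a short first-order-condition computation.
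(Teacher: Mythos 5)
Your proposal is correct, and it takes a genuinely different and more economical route than the paper. The paper proves Theorem~\ref{thm:2} by mirroring the two-stage machinery it built for Theorem~\ref{thm:maxenergy}: a first lemma finds, for each fixed $Np$, the maximizing $\alpha^{1C}$ (with a case split at a threshold $\tilde p = \frac{r\ln r}{(r-1)M^{ca}/K}$, below which the optimum sits at the clipping boundary $\alpha^{1C}=1/r$ and above which it is interior at $\tilde\alpha_{Np}=\frac{\ln r}{(r-1)Np\,M^{ca}/K}$), and a second lemma then optimizes over $p$, with first- and second-order-condition arguments in each variable separately. You instead notice that in the unclipped regime $Z$ depends on $(\alpha^{1C},p)$ only through the scalar $\beta=\alpha^{1C}Np\,M^{ca}/K$, so the joint maximization collapses to the one-dimensional problem $\max_{\beta\ge 0}\,(e^{-\beta}-e^{-r\beta})$, solved by $\beta^\ast=\frac{\ln r}{r-1}$; your attainability check (choose $\alpha^{1C}\le 1/r$ and tune $Np$) and your dismissal of the saturated regime (where $\alpha^{3C}=1$ gives $e^{-\beta_1}-e^{-Np\,M^{ca}/K}<e^{-\beta_1}-e^{-r\beta_1}\le g(\beta^\ast)$ since $Np<r\alpha^{1C}Np$) close the argument. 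The paper's nested lemmas are exactly the shadow of this product structure --- its $\tilde\alpha_{Np}$ and $\tilde p$ both correspond to $\beta=\beta^\ast$ --- so the two proofs reach the same optimizer; what the paper's route buys is uniformity with the Theorem~\ref{thm:maxenergy} proof, where the integral over the capacity distribution genuinely prevents such a collapse, while your route buys a short, self-contained calculus argument that makes the independence of the answer from $M^{ca}/K$ (after normalization) transparent.
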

The proof is given in Appendix \ref{app:thrm2}. \rev{The idea is to show that the energy reduction $\Delta Z(r,\alpha^{1C},Np)$  has a unique maximizer, which induces the maximum energy reduction in \eqref{eq:reduction-caching}.}

\rev{Theorem \ref{thm:2} shows the maximum energy reduction due to  the 3C framework in terms of cached content sharing.} For a better understanding of \eqref{eq:reduction-caching}, we normalize such a maximum energy reduction with respect to the energy consumption in the noncooperation case (i.e., $Z(0,0)$). The normalized energy reduction is given by $e^{-{\ln r}/{(r-1)}} - e^{-{r\ln r}/{ (r-1)}}$, which has a similar increasing concave shape as the curves in the two subfigures in Figure \ref{fig:example}. Based on the normalized energy reduction, we conclude as follows. (i) The  normalized maximum energy reduction is independent of the caching ratio ${M^{ca}}/{K}>0$. This means that  no matter how many contents that devices have cached, the maximum normalized energy reduction is fixed. (ii) Doubling the sharing device fraction (i.e., $r=2$) leads to a maximum energy reduction of around $25\%$ of the energy consumed in noncooperation.

\section{Simulation and Performance}\label{sec:experiment}

We compare the computation time and energy consumption between optimal and heuristic solutions. And we evaluate the energy reduction due to 3C framework under different D2D transmission energy and different devices' and tasks' heterogeneities. To emphasize, these simulations are based on a more realistic case, where Assumption \ref{ass:analysis} is relaxed.

We consider a scenario with a set of $N$ devices, who form pair-wise connections with a probability $p=0.3$. Each device has one task to execute. For each simulation setting, we perform 100 rounds  and show the average results. For each simulation round, we randomly generate the parameters of the device and task models, including devices' capacities, tasks' demands, and energy consumption coefficients. These parameters are randomly generated based on truncated normal distributions\cite{distribution}, with an identical variance $\sigma$ (which will be evaluated later) and different means. The detailed settings are in Appendix \ref{app:exp}. 

\vspace{-3mm}

\subsection{Comparison: Optimal and Heuristic Solutions}\label{subsec:compare}
We show how the device number $N$ affects the computation time and energy consumption of the optimal (named as ``Opt.") and the heuristic algorithm (named as ``Heu.").

Figure \ref{fig:scale} (a) shows how the total computation time changes in $N$.
For the case of ``Opt." (solving Problem $\text{(OPT-LINEAR)}$), the  computation time is small when $N$ is small (e.g., $N$ is less than $20$). However,  as the device number increases, the computation time of ``Opt." dramatically increases. In comparison,  the computation time of ``Heu." increases relatively slowly in $N$, i.e., the computation time of ``Heu." is $78.6\%$ smaller than that of ``Opt." when $N=27$.

Figure \ref{fig:scale} (b) shows the energy comparison between ``Opt." and ``Heu.". The energy is normalized by the energy consumed in the noncooperation  case (i.e., each device executes its task by itself). 
As $N$ increases, the energy gap between ``Opt." and ``Heu." slightly increases. When $N=27$, the normalized percentage difference of the energy between ``Heu." and ``Opt." is only around $11.2\%$.

\begin{figure}
	\centering
	\includegraphics[height=2.8cm]{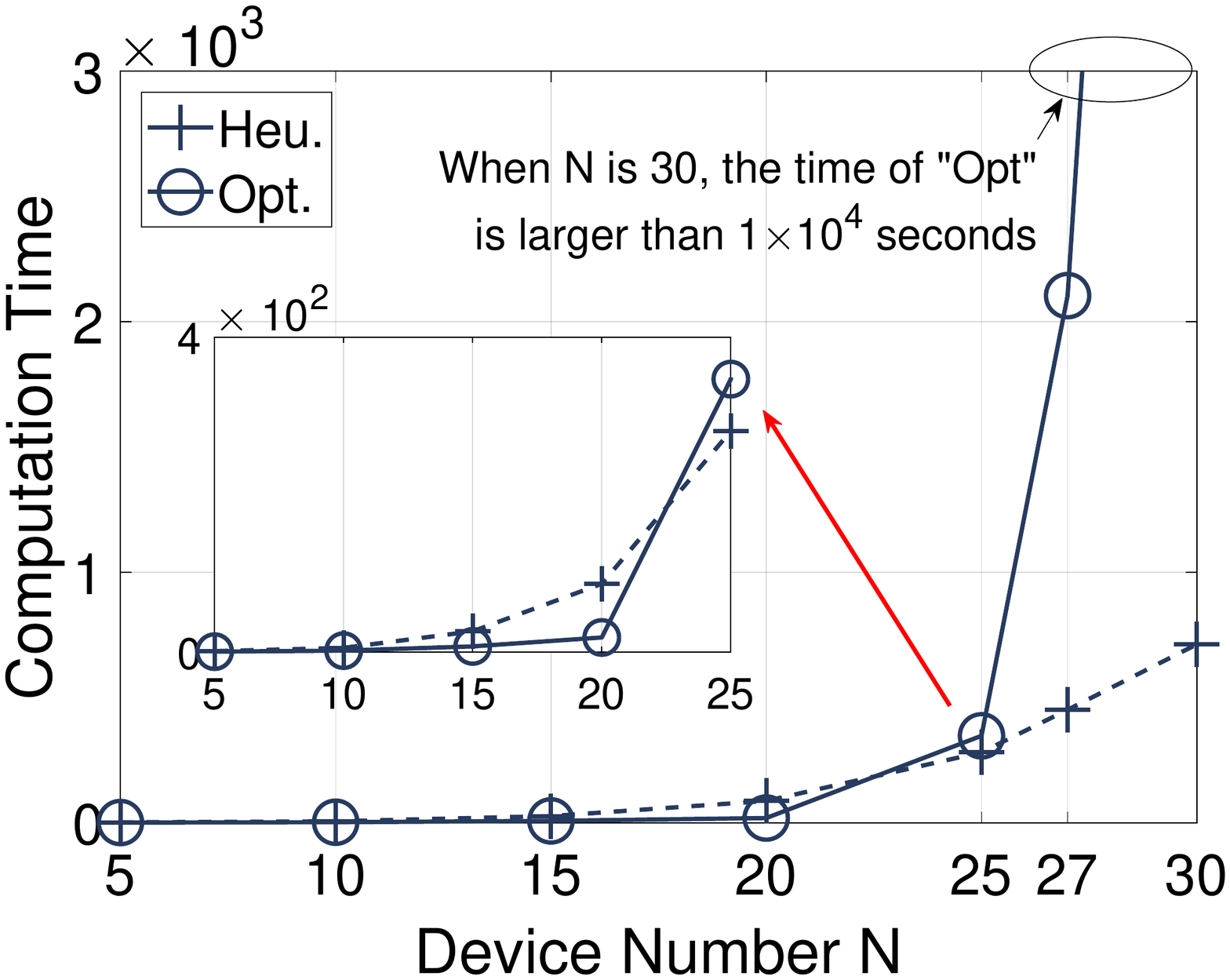}~~
	\includegraphics[height=2.8cm]{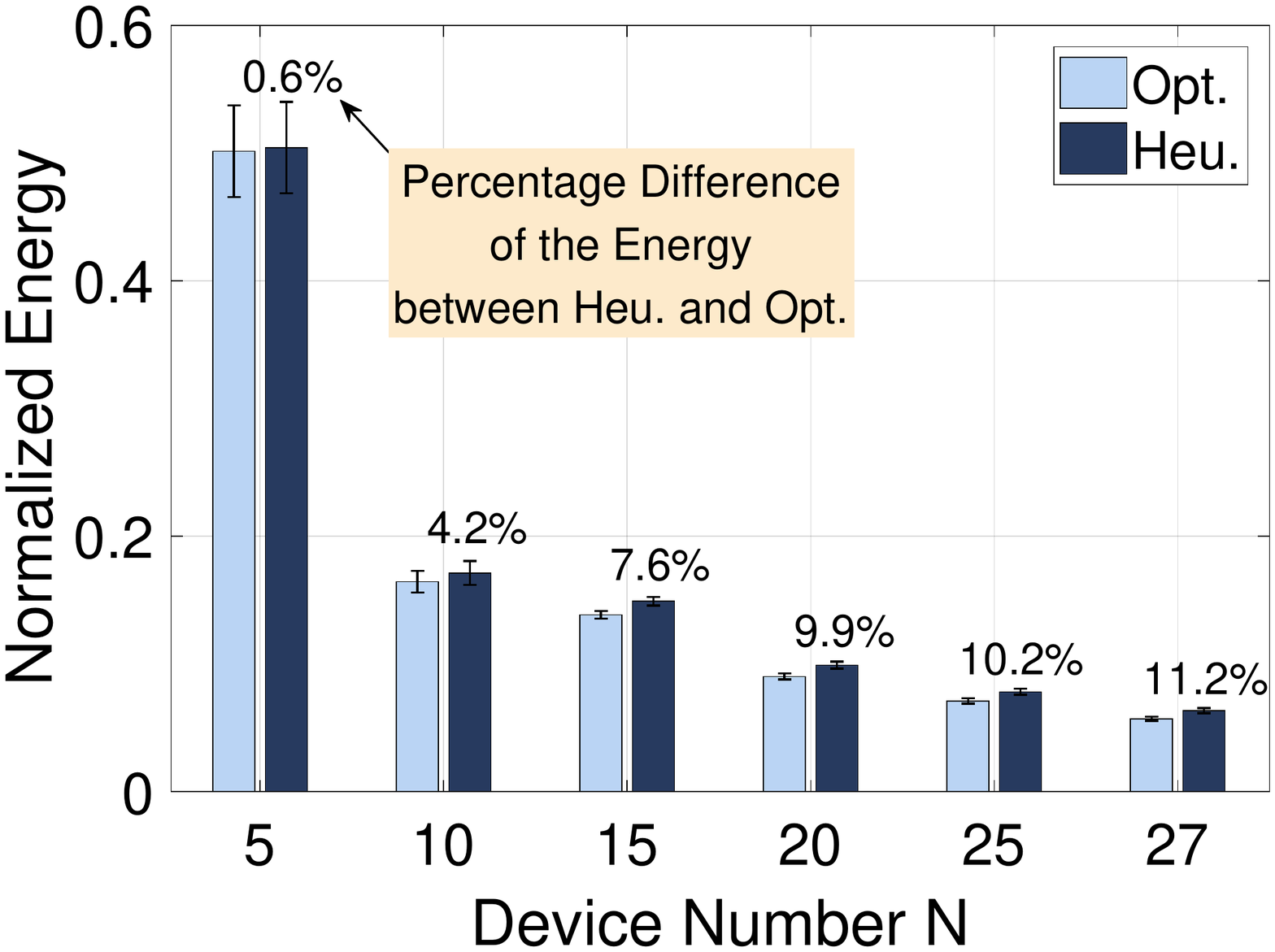}\\
	~~~~~~(a)~~~~~~~~~~~~~~~~~~~~~~~~~~~~~~~~~(b)~~\vspace{-2mm}\\
	\caption{Impact of $N$ on (a) computation time and  (b)  energy consumption.}\label{fig:scale}\vspace{-3mm}
	
\end{figure}

\vspace{-3mm}
\subsection{Comparison: 1C/2C Models and 3C Framework}
In this simulation, we let each device randomly selects a task among downloading, content sharing, and distributed data analysis. Then, we perform simulations in two cooperation settings: (i) ``1C/2C", where only the devices selecting the same kinds of tasks can cooperate; (ii) ``3C", where all the devices can cooperate.

In Figure \ref{fig:1c2c3c}, we compare the energy of the two cooperation settings under different D2D energy coefficients $\beta$ (the ratio of the D2D energy per unit time  to the downloading energy per unit time) and variances $\sigma$ (the variance for generating tasks and devices). \rev{Note that the energy consumption is normalized by the energy consumed in the noncooperation case, e.g., a normalized energy value 0.2 means that the cooperation (1C/2C or 3C) consumes 20\% of the energy consumed in the noncooperation.} The percentage reduction in the figure is the energy difference between ``1C/2C" and ``3C", normalized by the energy consumed in ``1C/2C".

In Figure \ref{fig:1c2c3c} (a), ``3C" can reduce the energy consumption by 83.8\%  when $\beta=0$, i.e., no additional energy consumption due to D2D transmission. Such an energy reduction decreases in $\beta$, but still achieves a value of 27.5\% when $\beta=2.0$, i.e., the  D2D energy per unit time is twice as large as the downloading energy per unit time.

In Figure \ref{fig:1c2c3c} (b), as the heterogeneity of devices and tasks (measured by the variance $\sigma$) increases, the energy reduction caused by 3C framework increases, which is consistent with the results in Example \ref{exampe:1} in Section \ref{sec:theory}. Intuitively, a higher heterogeneity can provide more opportunities for the devices to share resources and help each other. Hence, implementing the 3C framework is more beneficial when devices and tasks are more heterogeneous.

\begin{figure}
	\centering
	\includegraphics[height=2.8cm]{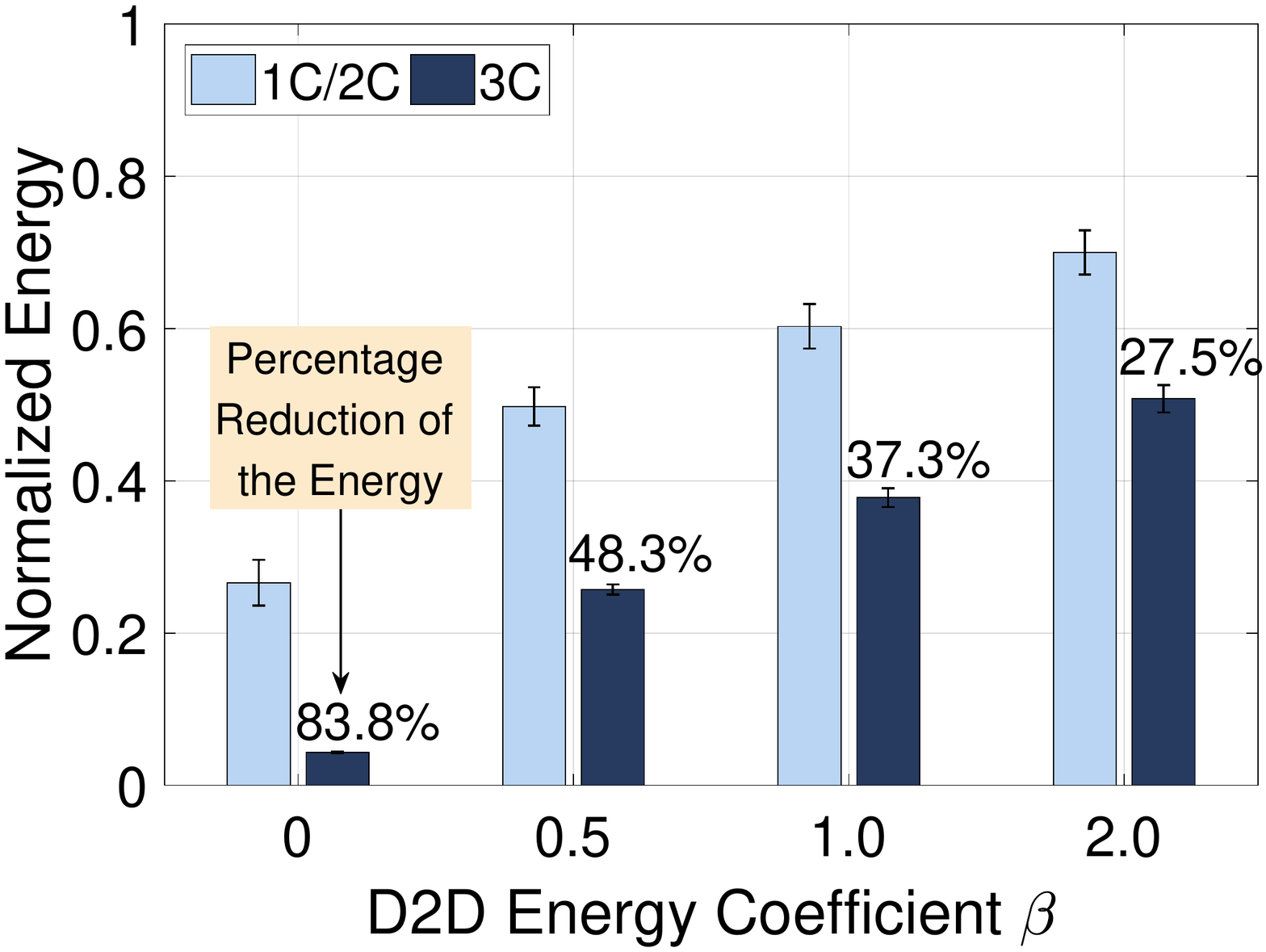}~~\includegraphics[height=2.8cm]{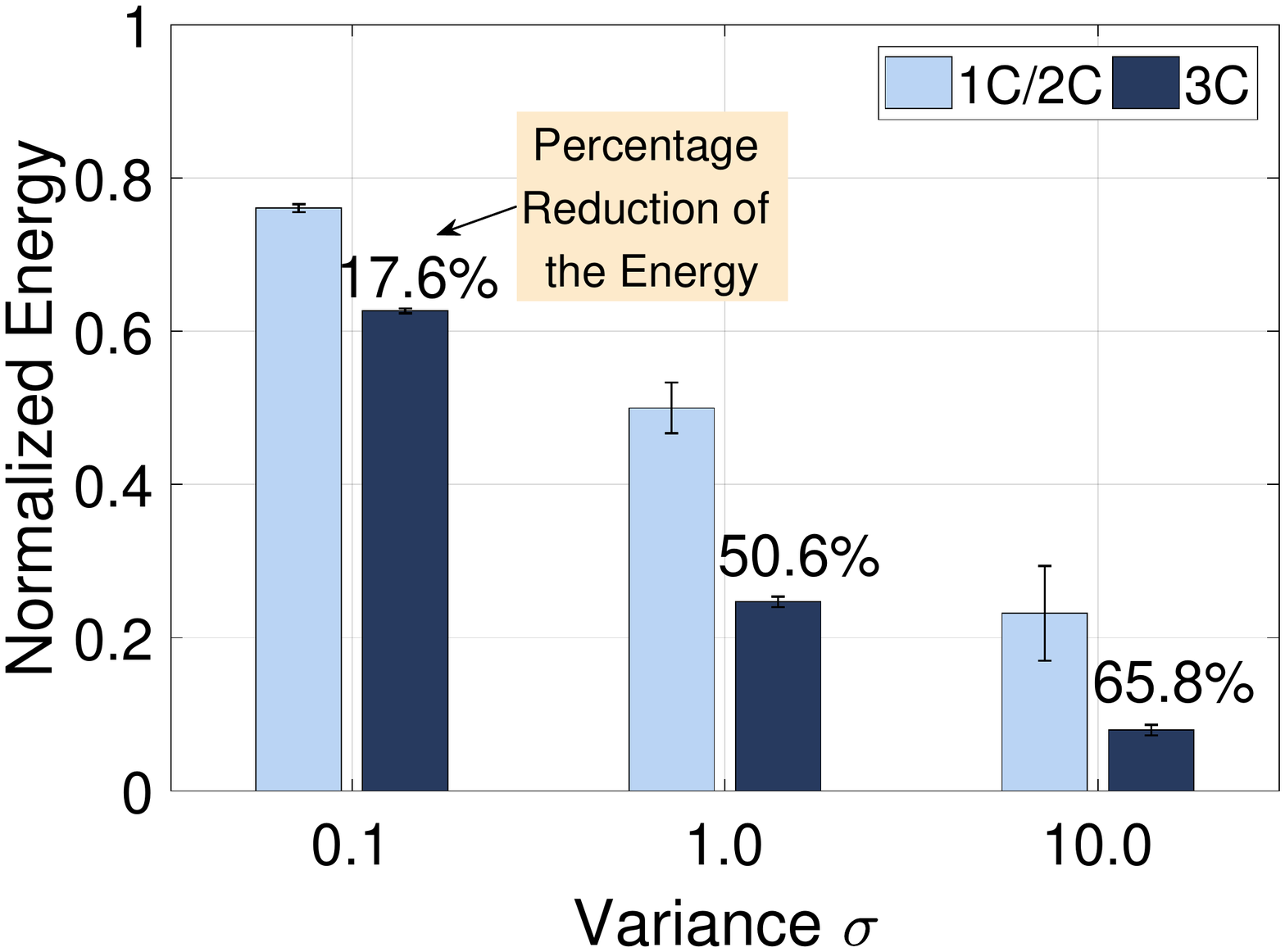}\\
	~~~~~~(a)~~~~~~~~~~~~~~~~~~~~~~~~~~~~~~~~~(b)~~\vspace{-2mm}\\
	\caption{Impact of D2D energy $\beta$ (under $\sigma = 1.0$) and variance $\sigma$ (under $\beta=0.5$) on the  normalized energy consumption.}\label{fig:1c2c3c}\vspace{-3mm}
\end{figure}

\section{Conclusion}\label{sec:conclude}
In this paper, we propose a general 3C framework that enables the joint 3C resource sharing among mobile edge devices, which potentially enhances the reliability and intelligence of Tactile Internet. This  ``resource-centric" framework generalizes existing  D2D resource sharing models, and provides a structure for future D2D resource sharing analysis. We theoretically and numerically show that the 3C framework can further exploit resource sharing potentials and improve resource utilization efficiency significantly. For future work,  \rev{it is interesting to design a distributed algorithm for the 3C framework, where the information collection and allocation scheduling are operated in a distributed fashion. 
	It is also interesting to consider the optimization of proactive caching, where a user may cache contents for the use of a future task.}

\vspace{-10mm}
\begin{IEEEbiography}
	[{\includegraphics[width=1in,height=1.25in,clip,keepaspectratio]{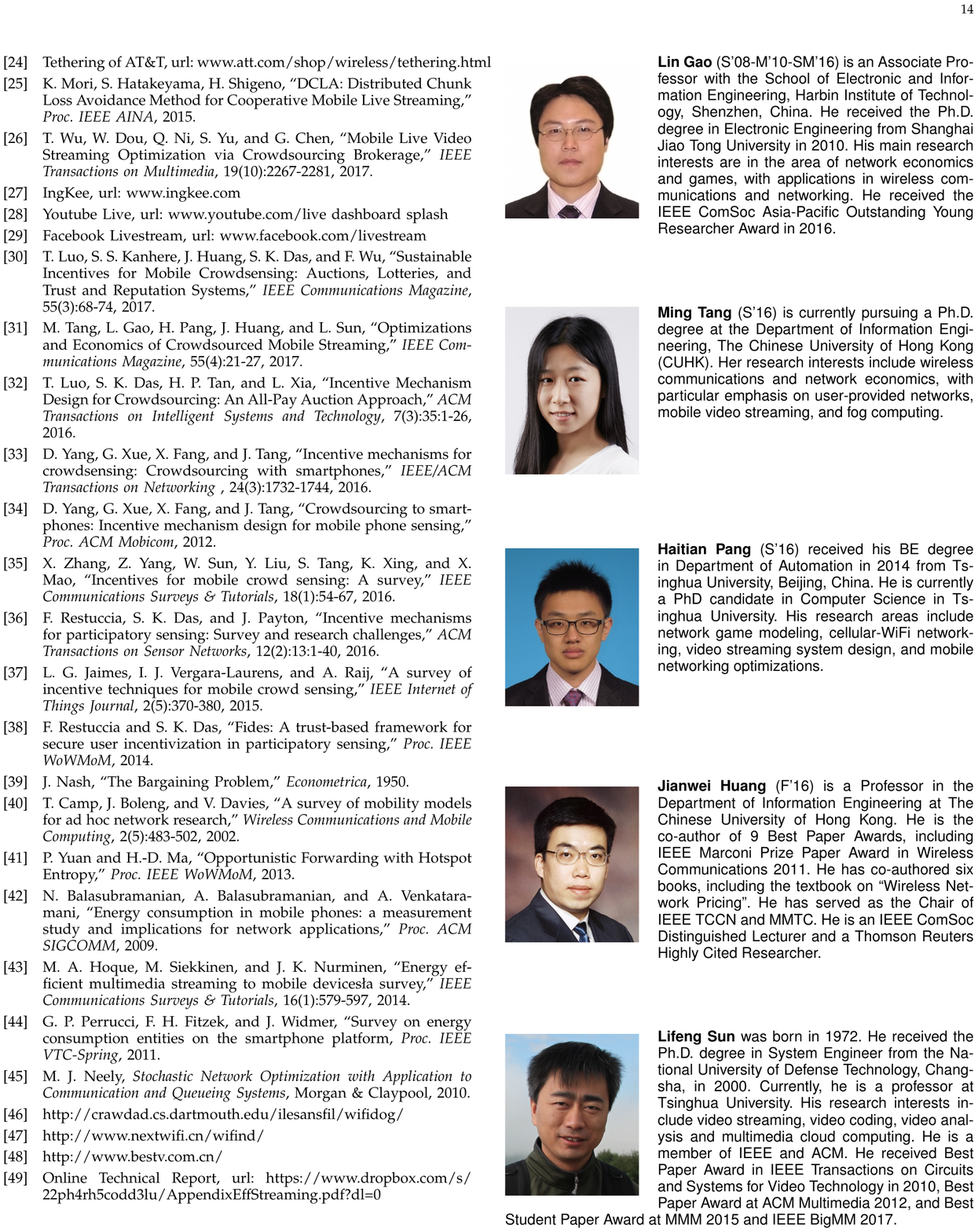}}]{Ming Tang}
	(S'16) is currently pursuing a PhD degree at the Department of Information Engineering, The Chinese University of Hong Kong. She was a visiting student with the Department of Management Science and Engineering, Stanford University, from September 2017 to February 2018. Her research interests include wireless communications and network economics, with particular emphasis on user-provided networks and fog computing. 
\end{IEEEbiography}
\vspace*{-2.3\baselineskip}
\begin{IEEEbiography}
	[{\includegraphics[width=1in,height=1.25in,clip,keepaspectratio]{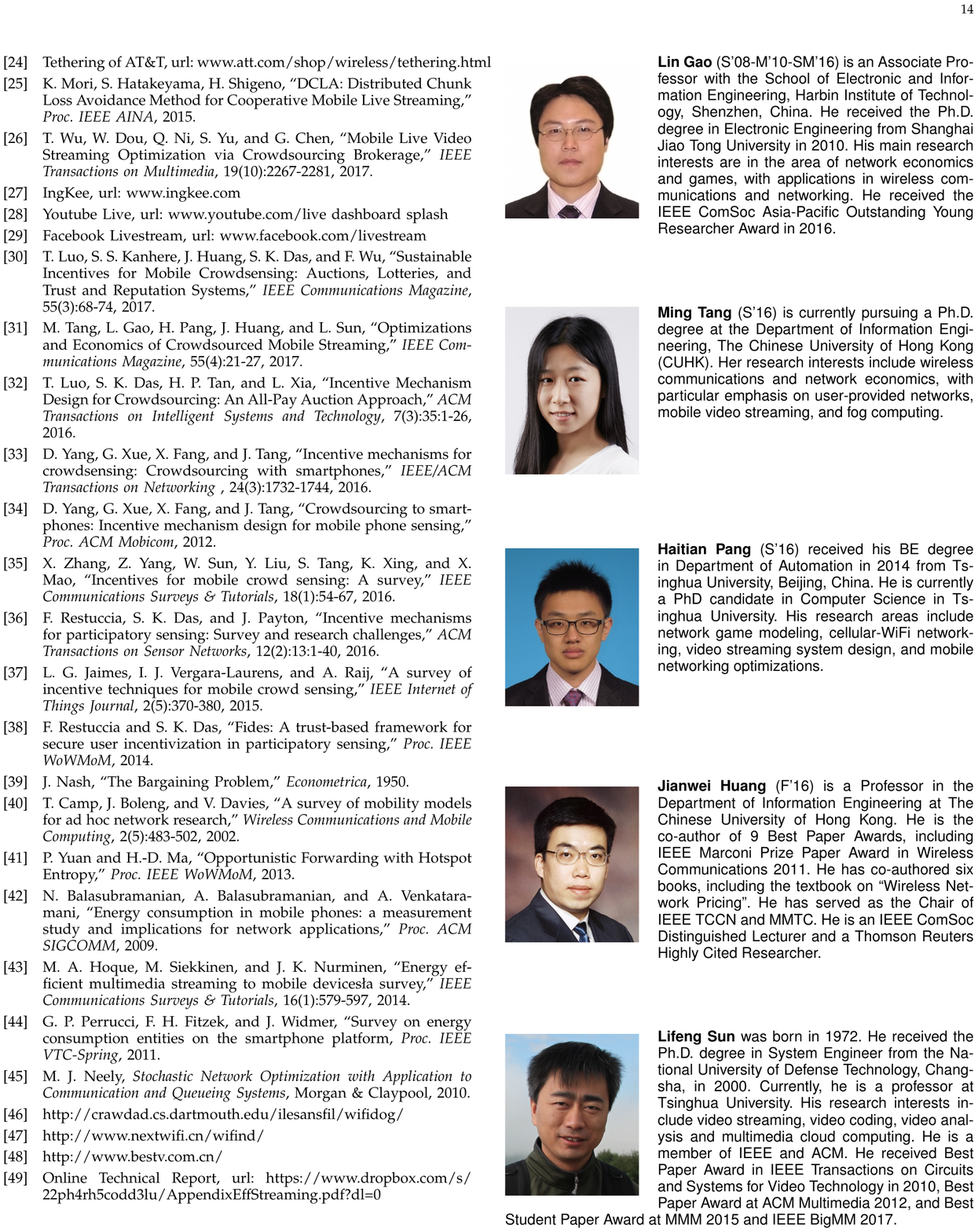}}]{Lin Gao}
	(S'08-M'10-SM'16) received the PhD degree in electronic engineering from Shanghai Jiao Tong Unviersity, in 2010. He is an Associate Professor with the School of Electronic and Information Engineering, Harbin Institute of Technology, Shenzhen, China. His main research interests are in the area of network economics and games, with applications in wireless communications and networking. He received the IEEE ComSoc Asia-Pacific Outstanding Young Researcher Award in 2016.
\end{IEEEbiography}
\vspace*{-2.3\baselineskip}
\begin{IEEEbiography}
	[{\includegraphics[width=1in,height=1.25in,clip,keepaspectratio]{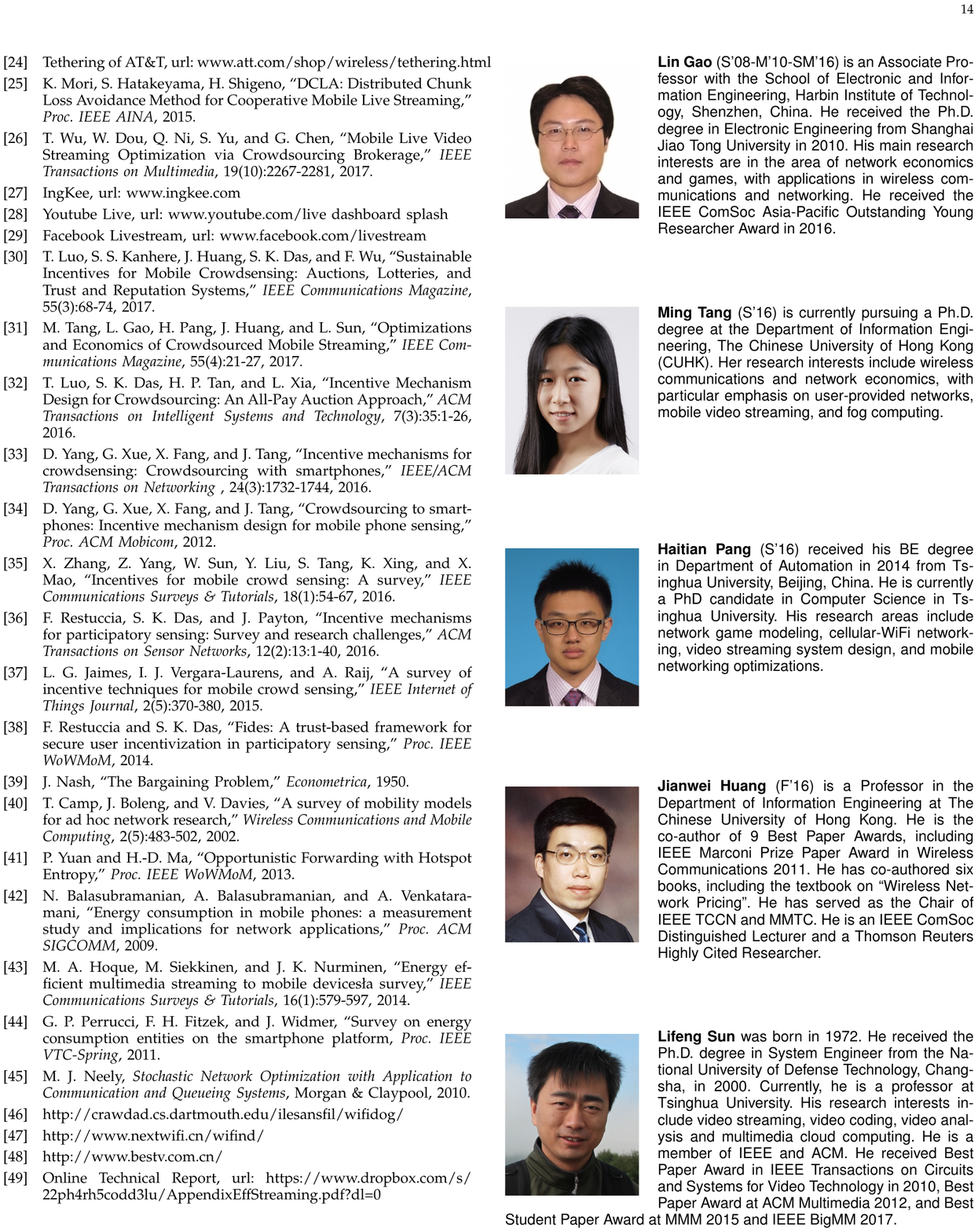}}]{Jianwei Huang}
(F'16)  is a Professor in the Department of Information Engineering at the Chinese University of Hong Kong. He is the co-author of 9 Best Paper Awards, including IEEE Marconi Prize Paper Award in Wireless Communications 2011. He has co-authored six books, including the textbook on "Wireless Network Pricing”. He has served as the Chair of IEEE Technical Committee on Cognitive Networks and Technical Committee on Multimedia Communications. He is an IEEE ComSoc Distinguished Lecturer and a Clarivate Analytics Highly Cited Researcher. More detailed information can be found at \url{http://jianwei.ie.cuhk.edu.hk/}.
\end{IEEEbiography}

\appendix

\section{Appendix}
\subsection{Proof for Proposition \ref{prp:1}}\label{app:prp-1}
We first present a lemma showing that all the extreme points of the feasible region in Problem $\text{(OPT-RELAX)}$ are integers. Then, we present a lemma showing that solving Problem $\text{(OPT-RELAX)}$ using Simplex method\cite{simplex} is guaranteed to  output an integer solution. Finally, we explain that Algorithm \ref{alg:suboptimal} is guaranteed to produce an integer solution that is within the feasible region of Problem $\text{(OPT)}$.

\begin{lemma}[Integer Extreme Points]\label{lem:integer}
	All the extreme points of the feasible region in Problem $\text{(OPT-RELAX)}$ are integers.
\end{lemma}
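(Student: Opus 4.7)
The approach is to invoke the classical integer-programming result: if the coefficient matrix $A$ of a linear system $A\boldsymbol{v}\leq \boldsymbol{b}$ is totally unimodular (TU) and $\boldsymbol{b}$ is integer, then every vertex of the corresponding polyhedron is integer-valued (see \cite{linearprogramming,unimodular}). My plan is to apply this result to Problem $\text{(OPT-RELAX)}$, which reduces the lemma to verifying (i) integrality of the right-hand sides of all the defining constraints and (ii) total unimodularity of the constraint matrix.

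For step (i), I would rewrite constraints \eqref{constr:allo-cpu}--\eqref{constr:flow-up} and \eqref{constr:control-in}, together with the relaxed box constraints $0\leq\boldsymbol{x},\boldsymbol{z}\leq 1$, in canonical form $A\boldsymbol{v}\leq \boldsymbol{b}$ by splitting each equality into a pair of inequalities. The right-hand sides are then either $0$, $1$, or the binary parameters $\din_{sk}$, $\dup_{sk}$, $\dstore_{sk}$, $\qstore_{nk}$, $\tilde{N}^{in}_{s,k\rightarrow n}$, $\tilde{N}^{cpu}_{s\rightarrow n}$, $\tilde{N}^{up}_{s,k\rightarrow n}$, all of which are integer by definition.

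For step (ii), I would establish TU block-by-block via the Ghouila-Houri criterion. The core of the matrix is the network-flow balancing subsystem \eqref{constr:flow-in}--\eqref{constr:flow-up}: for each triple $(s,k,X)$ with $X\in\{in,ca,up\}$, each $z$-variable $z^{X}_{s,k\rightarrow i,j}$ enters the balance row at node $j$ with coefficient $+1$ and the balance row at node $i$ with coefficient $-1$. This is exactly the node--arc incidence matrix of a directed graph on $\mathcal{N}$, the textbook example of a TU matrix. The allocation rows \eqref{constr:allo-cpu}--\eqref{constr:allo-up}, the capacity row \eqref{constr:cap-in}, and the control rows \eqref{constr:control-in} all contribute only $\{-1,0,+1\}$ entries with disjoint-column patterns on the $x$-variables. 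I would verify, for any subset $R$ of rows, a $\pm 1$ signing of $R$ (the classical network-flow signing inside each flow family, combined with a uniform $+1$ signing of the allocation/capacity/control rows) whose column sums lie in $\{-1,0,+1\}$, and invoke Ghouila-Houri to conclude that the full matrix is TU.

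The main obstacle I foresee is the coupling introduced by the $x$-variables, which appear both in the allocation/capacity/control rows and as source/sink terms in the flow-balance rows of all three flow families (in, ca, up). A single column, say that of $\xin_{s,k\rightarrow i}$, may therefore carry nonzero entries in several heterogeneous row blocks simultaneously, and the careful accounting required to show that the Ghouila-Houri signing produces aggregate column sums of magnitude at most one across all blocks is the technical heart of the argument. Because each $x$-variable enters each flow family at a single ``internal'' node $i$ (either as $\xin_{s,k\rightarrow i}\din_{sk}$, $\xcpu_{s\rightarrow i}\din_{sk}$, or $\xup_{s,k\rightarrow i}\dup_{sk}$) and at most once in each allocation/capacity/control row, I expect this bookkeeping to go through. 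Once TU is established, the lemma follows immediately from the integer-programming theorem cited above.
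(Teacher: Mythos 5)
Your overall strategy (integral right-hand sides plus a totally unimodular constraint matrix) is the same family of argument the paper uses, but you take a genuinely harder route: you keep the flow-balancing constraints \eqref{constr:flow-in}--\eqref{constr:flow-up} in the system and try to prove total unimodularity of the \emph{full} matrix, including the node--arc incidence blocks, via Ghouila--Houri. The paper instead first substitutes \eqref{constr:flow-in}--\eqref{constr:flow-up} into the objective, so that the $\boldsymbol{z}$-variables disappear from all non-box constraints and the remaining matrix $\left[\boldsymbol{A}~\boldsymbol{0}\right]$ (built only from \eqref{constr:allo-cpu}--\eqref{constr:cap-in} and \eqref{constr:control-in}) has at most two nonzeros per column; TU then follows from the Heller--Tompkins two-nonzeros-per-column criterion rather than a general Ghouila--Houri signing. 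That elimination step is what makes the paper's bookkeeping tractable.

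The gap in your version is exactly the step you flag as ``the technical heart'' and then only assert will ``go through'': the Ghouila--Houri signing is never exhibited, and your preliminary column accounting is already inaccurate in ways that matter. The column of $\xdown_{s,k\rightarrow n}$ carries a nonzero entry in \emph{every} one of the $S$ capacity constraints \eqref{constr:cap-in} that share the pair $(n,k)$ (one for each task $s$), all with the same sign --- not ``at most once'' in that block --- so any admissible signing of a row subset containing several of these rows must split their signs, which then propagates constraints to the $\xin_{s,k\rightarrow n}$ columns and onward to the allocation and flow rows. Similarly, $\xcpu_{s\rightarrow i}$ appears not ``at a single internal node in each flow family'' but in the balance row at node $i$ of up to $3K$ distinct commodity families (one per requested input, cached, and uploaded content of task $s$), each of whose row signs is already pinned down relative to its own $\boldsymbol{z}$-columns. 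Whether a globally consistent signing exists for every row subset is precisely the question, and nothing in the proposal answers it; as written, the lemma is not established. To repair the argument you would either need to carry out the signing explicitly for an arbitrary row subset, or follow the paper and first remove the $\boldsymbol{z}$-variables and flow constraints from the system so that the two-nonzeros-per-column criterion applies.
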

\begin{proof}
	To prove this lemma, we first transform Problem $\text{(OPT-RELAX)}$ into an equivalent problem. Then, we prove that the equivalent problem has only integer extreme points.
	
	First, we substitute constraints $\eqref{constr:flow-in}\sim\eqref{constr:flow-up}$ into the objective function, leading to an equivalent optimization problem in the following form:
	\begin{equation}\label{eq:problem4c}
		\begin{aligned}
			& {\text{minimize}} & & \boldsymbol{h}_1^{\mathrm{T}}\boldsymbol{x} + \boldsymbol{h}_2^{\mathrm{T}}{\boldsymbol{z}}\\
			& \text{subject to} & &  \boldsymbol{a}\leq \left[\boldsymbol{A}~\boldsymbol{0}\right]\left[\begin{array}{c}\boldsymbol{x}\\{\boldsymbol{z}}\end{array}\right] \leq  \boldsymbol{b}\\
			&&& \boldsymbol{c}\leq \left[\begin{array}{c}\boldsymbol{x}\\{\boldsymbol{z}}\end{array}\right] \leq \boldsymbol{d}\\
			& \text{variables}
			& & \boldsymbol{x},{\boldsymbol{z}}\in [0,1] \\
		\end{aligned}
	\end{equation}
	The first constraint corresponds to  $\eqref{constr:allo-cpu}\sim\eqref{constr:cap-in}$, and the second constraint corresponds to $\eqref{constr:control-in}$. 
	
	Then, we show that the problem \eqref{eq:problem4c} has only integral extreme points (so as Problem  $\text{(OPT-RELAX)}$). The idea is to show that  $\boldsymbol{a}, \boldsymbol{b},\boldsymbol{c},\boldsymbol{d}$ contain only integers and $\left[\boldsymbol{A}~\boldsymbol{0}\right]$ is totally unimodular \cite{unimodular}. Through checking the constraints, we directly have $\boldsymbol{a}, \boldsymbol{b},\boldsymbol{c},\boldsymbol{d}$ contain only integers. According to \cite{unimodular2}, matrix $\left[\boldsymbol{A}~\boldsymbol{0}\right]$ is totally unimodular, because  (i) all the elements belong to $\{+1,-1,0\}$, (ii) each column of the matrix contains at most two non-zero elements (only $x^{in}_{s,k\rightarrow n}$ has two non-zero elements from constraints \eqref{constr:allo-in} and \eqref{constr:cap-in}, while the others only have at most one), 
	(iii) for the only variables with two non-zero elements $x^{in}_{s,k\rightarrow n}$ from constraints \eqref{constr:allo-in} and \eqref{constr:cap-in}, the two non-zero elements have the same sign and can be separated into two disjoint sets (of rows) by separating constraints \eqref{constr:allo-in} and \eqref{constr:cap-in} into two subsets of $\boldsymbol{x}$ by rows.	
\end{proof}

\begin{lemma}[Integer Output of Problem $\text{(OPT-RELAX)}$]\label{lem:integer-output} If solving Problem $\text{(OPT-RELAX)}$ using Simplex method\cite{simplex}, the output solution is an integer solution.
\end{lemma}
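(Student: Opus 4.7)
The plan is to combine the classical property of the Simplex method with the integrality of extreme points established in Lemma \ref{lem:integer}. Specifically, for any linear program whose feasible region is a nonempty bounded polytope (which is our case, since all variables lie in $[0,1]$ and the feasible region is nonempty by Assumption \ref{ass:noncooperation}), the Simplex method is guaranteed to terminate at an optimal basic feasible solution, and every basic feasible solution of a linear program corresponds to an extreme point of the feasible polytope. This is a standard textbook result from \cite{simplex,linearprogramming} that I would invoke directly rather than re-derive.

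First I would observe that Problem $\text{(OPT-RELAX)}$ is a linear program with a bounded feasible region (since $\boldsymbol{x},\boldsymbol{z}\in[0,1]$), and that the feasible region is nonempty because the noncooperation assignment (each device performs its own tasks) satisfies all constraints in $\eqref{constr:allo-cpu}$--$\eqref{constr:flow-up}$ and $\eqref{constr:control-in}$ whenever the control parameters are not set to exclude self-allocations. Consequently, an optimal solution exists, and the Simplex method returns one such optimal solution located at an extreme point of the feasible polytope.

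Next I would invoke Lemma \ref{lem:integer}, which asserts that every extreme point of the feasible region of Problem $\text{(OPT-RELAX)}$ is integral. Combining the two facts, the output of the Simplex method lies at an extreme point, and every extreme point has integer coordinates, so the output is necessarily an integer solution. This yields the claimed statement.

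The main obstacle is conceptually minor here: the substantive work has already been done in Lemma \ref{lem:integer} (establishing total unimodularity of the constraint matrix). The remaining task is simply to connect that structural result with the algorithmic behavior of Simplex. The one subtlety worth flagging in the write-up is the possibility of multiple optima: even if the LP has non-extreme optimal solutions (e.g., along a face of the polytope), the Simplex method specifically returns a basic feasible solution, hence an extreme point, so integrality is preserved. I would make this point explicit to avoid any ambiguity about which optimal solution is returned.
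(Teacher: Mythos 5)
Your proposal is correct and follows essentially the same route as the paper's own proof: establish that Problem $\text{(OPT-RELAX)}$ is a feasible bounded LP (via the noncooperation point), note that the Simplex method terminates at a vertex of the feasible polytope, and conclude integrality from Lemma \ref{lem:integer}. Your explicit remark about multiple optima is a minor but welcome clarification that the paper leaves implicit.
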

\begin{proof}
	Problem $\text{(OPT-RELAX)}$ is an LP problem that has bounded variables and is feasible, where the feasibility  can be proved by showing that noncooperation is always a feasible point of Problem $\text{(OPT-RELAX)}$ (as assumed in Assumption \ref{ass:noncooperation}). So there always exists a vertex that is the optimal solution of Problem $\text{(OPT-RELAX)}$\cite{unimodular}. One the other hand, Simplex method solves LP by traversing the edges between vertexes on the feasible region, such that the output solution is always a vertex. Hence, the output solution (using Simplex method to solve Problem $\text{(OPT-RELAX)}$) is an integer solution.	
\end{proof}

Finally, we explain that Algorithm \ref{alg:suboptimal} is guaranteed to produce an integer solution that is within the feasible region of Problem $\text{(OPT)}$. There will be two claims: first, Algorithm \ref{alg:suboptimal} always has an output; second, if  Algorithm \ref{alg:suboptimal} has an output, the output is a feasible solution of Problem $\text{(OPT)}$. The first claim is directly held under Assumption \ref{ass:noncooperation}. The second claim is true because (i) the output is an integer solution satisfying constraints $\eqref{constr:allo-cpu}\sim\eqref{constr:flow-up}$ (according to Lemma \ref{lem:integer-output}), and (ii) the output satisfies the delay constraint $\eqref{eq:delay}$.

\subsection{Proof for Proposition \ref{prp:2}}\label{app:prp-2}
We prove the two claims in this proposition one by one.

First, the energy consumption of the heuristic algorithm output is no larger than that of the noncooperation case for the following reason. The output of Algorithm \ref{alg:suboptimal} is an optimal solution to Problem $(\text{OPT-RELAX})$ under  a particular set of control parameters $\tilde{\boldsymbol{N}}^{in},\tilde{\boldsymbol{N}}^{cpu},\tilde{\boldsymbol{N}}^{up}$. Under the same set of control parameters, the noncooperation case is also a feasible point of Problem $(\text{OPT-RELAX})$. This implies that the energy consumption of the heuristic algorithm output is no larger than that of the noncooperation case; otherwise, the output cannot be an optimal solution of Problem $(\text{OPT-RELAX})$ under the set of control parameters. 

Second, when there is no delay constraint,  the heuristic algorithm output is an optimal solution of the original problem $\text{(OPT)}$ for the following reason. 
When there is no delay constraint, Algorithm \ref{alg:suboptimal} terminates in the first iteration,  so that the output of Algorithm \ref{alg:suboptimal} is the optimal solution of Problem $(\text{OPT-RELAX})$ under all the control parameters are equal to ones. We refer to such version of Problem $(\text{OPT-RELAX})$  as the basic version of Problem $(\text{OPT-RELAX})$. On the other hand, when there is no delay constraint, comparing with Problem $(\text{OPT})$, the only modified part of the basic version of Problem $(\text{OPT-RELAX})$  is that it relaxes integer variables $\boldsymbol{x},\boldsymbol{z}\in\{0,1\}$ to be continuous ones $\boldsymbol{x},\boldsymbol{z}\in[0,1]$. As we proved in Lemma \ref{lem:integer-output}, if solving Problem $(\text{OPT-RELAX})$ using Simplex method, the optimal solution is an integer solution, i.e., $\boldsymbol{x},\boldsymbol{z}\in\{0,1\}$. This means that Problem $(\text{OPT})$ and the basic version of Problem $(\text{OPT-RELAX})$ are equivalent. As a result, when there is no delay constraint,   the heuristic algorithm output is an optimal solution of the basic version of  of Problem $(\text{OPT-RELAX})$, which is equivalent to the original Problem $\text{(OPT)}$.

\subsection{Proof for Proposition \ref{prp:3}}\label{app:prp-3}
We show that Algorithm \ref{alg:suboptimal} will always terminate within $S\times(N-1)$ iterations, where $S$ is the task number and $N$ is the device number.

We can consider the possible allocation of tasks to devices as a bipartite graph: a set of tasks, a set of devices, and a set of links from the tasks to the devices (a task is connected to a device if the task can be potentially allocated to the device). At the beginning of Algorithm \ref{alg:suboptimal}, the bipartite graph is complete (i.e.,  every task is connected to every device), since no allocation is prevented. In each iteration, some links is removed from the bipartite graph by adjusting the control parameters. The maximum iteration happens when (i) each iteration removes one link, (ii) all the links are removed excluding the links from tasks to their task owners, where the maximum iteration time is $S\times(N-1)$.

\subsection{Proof for Equation \eqref{eq:CPU-N-lower}}\label{app:eq27}
The  expected energy of a task whose owner has a degree $m$, i.e., $\hat{W}_m(\alpha,Np)$, can be simplified as follows:
\begin{equation}\label{eq:order}
	\begin{aligned}
		&\hat{W}_m(\alpha,Np)\\
		= &\sum_{\hat{N}=0}^{m}  C_{m}^{\hat{N}}\alpha^{\hat{N}}(1-\alpha)^{m-\hat{N}}\int_{\underline{Q}}^{\overline{Q}} \frac{1}{x} d(F(x))^{\hat{N}+1}\\
		&~~~~\text{\emph{(Integration by Parts)}}\\
		= &\sum_{\hat{N}=0}^{m}  C_{m}^{\hat{N}}\alpha^{\hat{N}}(1-\alpha)^{m-\hat{N}}\left(\frac{1}{\overline{Q}} +\int_{\underline{Q}}^{\overline{Q}} \frac{(F(x))^{\hat{N}+1}}{x^{2}} dx\right)\\
		&~~~~\text{\emph{(Distributive Property; Binomial Theorem)}}\\
		= & \frac{1}{\overline{Q}} + \int_{\underline{Q}}^{\overline{Q}} \frac{F(x)}{x^2}(1-\alpha+\alpha F(x))^{\hat{N}} dx.
	\end{aligned}
\end{equation}

Taking the expectation of $\hat{W}_m(\alpha,Np)$ over all possible degrees $m=\{0,...,\infty\}$, the  expected energy of each task is given as follows:
\begin{equation}
	\begin{aligned}
		&{W(\alpha,Np)} \\
		=& \sum_{m=0}^{\infty} P(degree=m)\hat{W}_m(\alpha,Np)\\	
		=& \sum_{m=0}^{\infty} \frac{(Np)^me^{-Np}\left(\frac{1}{\overline{Q}} + \int_{\underline{Q}}^{\overline{Q}} \frac{F(x)}{x^2}(1-\alpha+\alpha F(x))^{m} dx\right)}{m!} \\
		&~~~~\text{\emph{(Distributive Property; Taylor Series, $\sum_{m=0}^{\infty}\frac{x^m}{m!}=e^x$)}}\\
		=&\frac{1}{\overline{Q}} + \int_{\underline{Q}}^{\overline{Q}} e^{Np( F(x)-1)\alpha}F(x)x^{-2} dx.
	\end{aligned}
\end{equation}
This completes the proof of Equation \eqref{eq:CPU-N-lower}.

\subsection{Proof for Theorem \ref{thm:maxenergy}}\label{app:thm-maxenergy}
We first present two lemmas indicating the optimal $\alpha^{1C}$ and $p$  that maximizes $\Delta W(r,{\alpha^{1C},p})$ for any $r\geq 1$, then show the maximum energy reduction, i.e., $\max_{\alpha^{1C},Np} \Delta W(r,{\alpha^{1C},Np})$.

\begin{lemma}[Optimal $\alpha^{1C}$ under a Particular $Np$]\label{lemma:optimal-alpha}For any ratio $r\geq 1$, connection probability $p$, and distribution $f(x)$, there exists an $\alpha^{1C} = \alpha^*_{Np}$  that maximizes the energy reduction $W(r,{\alpha^{1C},p})$, i.e.,
	\begin{equation}
		\alpha^*_{Np} = \left\{\begin{array}{ll}
			{1}/{r}, & p\leq \tilde{p}\\
			\tilde{\alpha}_{Np},& p>\tilde{p}
		\end{array}\right.,
	\end{equation}
	where $\tilde{\alpha}_{Np}$ satisfies
	\begin{equation}\label{eq:theorem1-alpha0}
		[W(\tilde{\alpha}_{Np},Np)-W(r\tilde{\alpha}_{Np},Np)]_{\alpha} = 0.
	\end{equation}
	and  connection probability threshold $ \tilde{p}$  satisfies
	\begin{equation}
		\int^{\overline{Q}}_{\underline{Q}}(F(x)-1)F(x)\left(e^{ \frac{N \tilde{p}(F(x)-1)}{r}}-re^{ N \tilde{p}(F(x)-1)}\right) dx = 0.
	\end{equation}
\end{lemma}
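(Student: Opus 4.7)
The plan is to reduce the optimization of $\Delta W(r,\alpha^{1C},p) = W(\alpha^{1C},Np) - W(\min\{r\alpha^{1C},1\},Np)$ over $\alpha^{1C}\in[0,1]$ to a one-variable calculus problem on $[0,1/r]$, and then identify $\tilde{p}$ as the threshold at which the interior optimum hits the boundary. First I would split the domain according to whether $r\alpha^{1C}\le 1$. On the upper piece $\alpha^{1C}\in(1/r,1]$ we have $\alpha^{3C}=1$, so $\Delta W = W(\alpha^{1C},Np) - W(1,Np)$. Using the closed form \eqref{eq:CPU-N-lower}, the derivative
\begin{equation*}
\frac{\partial W}{\partial \alpha} = \int_{\underline{Q}}^{\overline{Q}} Np(F(x)-1)\,e^{Np(F(x)-1)\alpha}\,F(x)\,x^{-2}\,dx
\end{equation*}
is strictly negative because $F(x)-1\le 0$. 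Hence $W(\cdot,Np)$ is strictly decreasing in $\alpha$, and so $\Delta W$ is strictly decreasing in $\alpha^{1C}$ on $(1/r,1]$. The maximum over this piece is therefore attained at the boundary $\alpha^{1C}=1/r$, and the problem reduces to maximizing $g(\alpha)\triangleq W(\alpha,Np)-W(r\alpha,Np)$ on $[0,1/r]$.

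Next I would differentiate $g$ and obtain
\begin{equation*}
g'(\alpha) = \int_{\underline{Q}}^{\overline{Q}} Np(F(x)-1)\,F(x)\,x^{-2}\,\bigl[e^{Np(F(x)-1)\alpha} - r\,e^{Np(F(x)-1)r\alpha}\bigr]\,dx.
\end{equation*}
At $\alpha=0$ the bracket equals $1-r\le 0$ and the prefactor is non-positive, so $g'(0)\ge 0$ (strict whenever $r>1$). At $\alpha=1/r$, the sign of $g'(1/r)$ depends on $p$; setting $g'(1/r)=0$ and cancelling the common factor $Np$ yields precisely the integral equation that defines $\tilde{p}$ in the lemma. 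So the characterization of $\tilde{p}$ is just the first-order condition at the boundary $1/r$. Furthermore, I would verify that $g'(1/r)$ is monotonically decreasing in $p$ (by differentiating under the integral sign, using that both exponentials and their ratio depend monotonically on $p$ via $Np(F(x)-1)\le 0$), which guarantees uniqueness of $\tilde{p}$ and gives the sign dichotomy: $g'(1/r)\ge 0$ iff $p\le\tilde{p}$.

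From here the two cases fall out provided $g'$ changes sign at most once on $[0,1/r]$. For $p\le\tilde{p}$, both $g'(0)\ge 0$ and $g'(1/r)\ge 0$ together with that single-crossing property give $g'\ge 0$ throughout, so the maximum is at $\alpha^{*}=1/r$. For $p>\tilde{p}$, one has $g'(0)>0$ and $g'(1/r)<0$, and the intermediate value theorem together with single-crossing furnishes a unique interior maximizer $\tilde{\alpha}_{Np}\in(0,1/r)$ satisfying the first-order condition \eqref{eq:theorem1-alpha0}. The main obstacle will be establishing this single-crossing property of $g'$. A clean route would be to show $g$ is strictly concave on $[0,1/r]$, but direct computation of $g''$ produces an integrand that is not obviously sign-definite because it compares $e^{Np(F(x)-1)\alpha}$ with $r^{2}e^{Np(F(x)-1)r\alpha}$ pointwise in $x$, and the crossover $\alpha$-value varies with $x$. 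I would instead argue via a pointwise analysis of the inner bracket: fixing $x$, the function $\alpha\mapsto e^{y\alpha}-re^{yr\alpha}$ with $y=Np(F(x)-1)\le 0$ vanishes at $\alpha=\ln r/[|y|(r-1)]$ and is strictly decreasing in $\alpha$, so each $x$-slice of the integrand crosses zero at most once and in the same direction; combining with the sign of the prefactor would yield the required single-crossing of $g'$. If that argument proves too delicate, a back-up is to parametrize $g$ via the substitution $u=e^{-Np\alpha}$ and work with a function whose concavity structure is transparent.
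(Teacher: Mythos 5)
Your overall skeleton matches the paper's: reduce to maximizing $g(\alpha)=W(\alpha,Np)-W(r\alpha,Np)$ on $[0,1/r]$ (since $\Delta W$ is decreasing for $\alpha^{1C}>1/r$), note $g'(0)\geq 0$, read off $\tilde{p}$ from the first-order condition at the boundary $\alpha=1/r$, and split into the boundary-maximizer and interior-maximizer cases. The gap is exactly where you anticipated it, and your proposed fix does not work. First, the pointwise claim is false: for $y=Np(F(x)-1)\leq 0$ the map $\alpha\mapsto e^{y\alpha}-re^{yr\alpha}$ has derivative $y\bigl(e^{y\alpha}-r^{2}e^{yr\alpha}\bigr)$, which at $\alpha=0$ equals $y(1-r^{2})=|y|(r^{2}-1)\geq 0$; the bracket is therefore \emph{increasing} through its unique zero at $\ln r/\bigl(|y|(r-1)\bigr)$ and only becomes decreasing for $\alpha>2\ln r/\bigl(|y|(r-1)\bigr)$. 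Second, even after correcting the direction (each $x$-slice of the integrand of $g'$ does cross once, from $+$ to $-$, because of the nonpositive prefactor $y$), the crossing location depends on $x$, and an integral of functions that each cross zero once at different points need not itself cross zero only once — one slice can dip deeply negative and recover toward $0^{-}$ while another is still positive, producing multiple sign changes of $g'$. So single-crossing of $g'$ is not established. The same issue afflicts your claim that $g'(1/r)$ is monotonically decreasing in $p$: after factoring out $Np$, the $p$-derivative of the remaining integral has integrand proportional to $(F-1)^{2}Fx^{-2}\bigl(e^{u/r}/r-re^{u}\bigr)$ with $u=Np(F(x)-1)$, and the bracket changes sign over the range of $u$, so differentiating under the integral does not give a sign.

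The paper closes both holes with a different device: it evaluates the second derivative (in $\alpha$, respectively in $Np$) \emph{only at critical points}, applies the first mean value theorem for definite integrals to pull out a factor $Np(F(\xi)-1)\leq 0$, and then substitutes the first-order condition $\int Np(F-1)Fx^{-2}e^{Np(F-1)\alpha}dx=r\int Np(F-1)Fx^{-2}e^{Np(F-1)r\alpha}dx$ into the remaining integral, which collapses it to $r(1-r)$ times a sign-definite integral. This shows the second derivative is $\leq 0$ at every stationary point, and "every critical point is a local maximum" forces the critical point to be unique — which is all that is needed for the sign dichotomy, without proving global concavity or global monotonicity. You would need to import this (or an equivalent uniqueness argument) to make your cases $p\leq\tilde{p}$ and $p>\tilde{p}$ go through.
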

\begin{proof}
	The proof path is as follows. We first prove two claims. \emph{Claim (1):} for any $Np$, there exists a unique $\tilde{\alpha}_{Np}$ that maximizes ${W({\alpha},Np)} - {W(r{\alpha},Np)}$, and it satisfies
	\begin{equation}\label{eq:optimal-alpha}
		[{W(\alpha,Np)} - {W(r{\alpha},Np)}]_{\alpha}\left\{\begin{array}{ll}
			> 0,&   \alpha<\tilde{\alpha}_{Np}\\
			= 0,&   \alpha=\tilde{\alpha}_{Np}\\
			< 0,& \alpha>\tilde{\alpha}_{Np}\end{array}\right..
	\end{equation}
	\emph{Claim (2):} there exists a unique $\tilde{p}$ that satisfies
	\begin{equation}\label{eq:optimal-p}
		[{W(1/r, N\tilde{p})} - {W(1, N\tilde{p})}]_{\alpha}\left\{\begin{array}{ll}
			>0,& p<\tilde{p}\\
			=0,& p=\tilde{p}\\
			<0,& p>\tilde{p}\\
		\end{array}
		\right..
	\end{equation}
	Then, using these two claims, we then prove Lemma \ref{lemma:optimal-alpha}.
	
	\emph{Claim (1):} 
	We first prove that for any $Np$ there exists an $\tilde{\alpha}_{Np}$ that is a local maximizer of ${W({\alpha},Np)} - {W(r{\alpha},Np)}$. Then, we show that such $\tilde{\alpha}_{Np}$  is the global maximizer and satisfies \eqref{eq:optimal-alpha}.
	
	First, we prove that  there exists an $\tilde{\alpha}_{Np}$  that is a local maximizer of ${W({\alpha},Np)} - {W(r{\alpha},Np)}$. Taking the first-order derivative of ${W(\tilde{\alpha},Np)} - {W(r\tilde{\alpha},Np)}$ with respect to $\alpha$,
	\begin{equation}
		\lim_{\alpha\rightarrow 0} [{W(\alpha,Np)} - {W(r\alpha,Np)}]_{\alpha} >0,
	\end{equation}
	\begin{equation}
		\lim_{\alpha\rightarrow \infty}[{W(\alpha)} - {W(r\alpha)}]_{\alpha} \rightarrow 0^-.
	\end{equation}
	According to Intermediate Value Theorem, there exists at least an $\tilde{\alpha}_{Np}$ such that 
	\begin{multline}\label{eq:theorem1-alpha}
		[{W(\tilde{\alpha}_{Np},Np)} - {W(r\tilde{\alpha}_{Np},Np)}]_{\alpha}\\=\int^{\overline{Q}}_{\underline{Q}}\left(e^{ N p(F(x)-1)\tilde{\alpha}_{Np}}-re^{ N p(F(x)-1)r\tilde{\alpha}_{Np}}\right)\\
		\times Np(F(x)-1)F(x)x^{-2} dx=0,\end{multline} 
	which is the $\tilde{\alpha}_{Np}$ that satisfies \eqref{eq:theorem1-alpha0}. In addition, there exist $\Delta_1$ and $\Delta_2$ such that 
	\begin{equation}[{W(\tilde{\alpha}_{Np},Np)} - {W(r\tilde{\alpha}_{Np},Np)}]_{\alpha}>0, \alpha\in(\tilde{\alpha}_{Np}-\Delta_1,\tilde{\alpha}_{Np}),\end{equation}
	\begin{equation}[{W(\tilde{\alpha}_{Np},Np)} - {W(r\tilde{\alpha}_{Np},Np)}]_{\alpha}<0, \alpha\in(\tilde{\alpha}_{Np},\tilde{\alpha}_{Np}+\Delta_2).\end{equation}
	which implies that the $\tilde{\alpha}_{Np}$ is a local maximizer of ${W({\alpha},Np)} - {W(r{\alpha},Np)}$. 
	
	Then, 	we prove that the local maximizer $\tilde{\alpha}_{Np}$ is unique, so that it is a global maximizer, and it satisfies \eqref{eq:optimal-alpha}. The second-order derivative of ${W(\alpha,Np)} - {W(r\alpha,Np)}$ with respect to $\alpha$ is given by
	\begin{multline}\label{eq:Wgap-derivative2}
		[{W(\alpha,Np)} - W(r\alpha,Np)]_{\alpha\alpha} \\
		= \int^{\overline{Q}}_{\underline{Q}}\left(e^{ N p(F(x)-1)\alpha}-r^2 e^{ N p(F(x)-1)r\alpha}\right)\\
		\times\left(N p(F(x)-1)\right)^2F(x)x^{-2}dx.
	\end{multline}
	According to First Mean Value Theorem for Definite Integrals, there exists a $\xi\in[\overline{Q},\underline{Q}]$ such that 
	\begin{multline}
		[{W(\alpha,Np)} - W(r\alpha,Np)]_{\alpha\alpha}  \\= N p(F(\xi)-1) \int^{\overline{Q}}_{\underline{Q}}N p(F(x)-1)F(x)x^{-2}\\  \times \bigg(e^{ N p(F(x)-1)\alpha}-r^2e^{ N p(F(x)-1)r\alpha}\bigg)dx.
	\end{multline}
	Recall that, at $\tilde{\alpha}_{Np}$, the first-order derivative $[{W(\tilde{\alpha}_{Np},Np)} - {W(r\tilde{\alpha}_{Np},Np)}]_{\alpha}=0$. By substituting it, the second order derivative at $\tilde{\alpha}_{Np}$ is given by
	\begin{multline}
		[{W(\tilde{\alpha}_{Np},Np)} - W(r\tilde{\alpha}_{Np},Np)]_{\alpha\alpha} \\
		= N p(F(\xi)-1) \int^{\overline{Q}}_{\underline{Q}}N p(F(x)-1)F(x)x^{-2} \\ \times e^{ N p(F(x)-1)r\tilde{\alpha}_{Np}} r(1-r)dx\leq 0.
	\end{multline}
	As a result of $[{W(\tilde{\alpha},Np)} - W(r\tilde{\alpha},Np)]_{\alpha\alpha} \leq 0$, the $\tilde{\alpha}_{Np}$ has to be unique. If it is not unique, there must exist at least one other $\hat{\alpha}\neq \tilde{\alpha}_{Np}$ satisfying $[{W(\hat{\alpha},Np)} - {W(r\hat{\alpha},Np)}]_{\alpha}=0$ such that $[{W(\hat{\alpha},Np)} - W(r\hat{\alpha},Np)]_{\alpha\alpha}>0$, which constradits $[{W(\hat{\alpha},Np)} - W(r\hat{\alpha},Np)]_{\alpha\alpha}\leq   0.$
	Hence, the $\tilde{\alpha}_{Np}$ is unique, and it is the global maximizer of ${W(\alpha,Np)} - {W(r{\alpha},Np)}$. This always implies that $[{W(\alpha,Np)} - {W(r{\alpha},Np)}]_{\alpha}> 0$ if  $\alpha<\tilde{\alpha}_{Np}$; and  $[{W(\alpha,Np)} - {W(r{\alpha},Np)}]_{\alpha}< 0$ if  $\alpha>\tilde{\alpha}_{Np}$.
	
	\emph{Claim (2):} We first prove that there exists a  $\tilde{p}$ such that $ [{W(1/r, N\tilde{p})} - {W(1, N\tilde{p})}]_{\alpha}=0$. We then show that the $\tilde{p}$ is unique, and satisfies \eqref{eq:optimal-p}.
	
	First, we prove that there exists a  $\tilde{p}$ such that $ [{W(1/r, N\tilde{p})} - {W(1, N\tilde{p})}]_{\alpha}=0$. The proof idea is similar as above. Checking the limitation of $[{W(1/r, Np)} - {W(1, Np)}]_{\alpha}$, we have 
	\begin{equation}
		\lim_{Np\rightarrow 0} [{W(1/r, Np)} - {W(1, Np)}]_{\alpha} >0,
	\end{equation}
	\begin{equation}
		\lim_{Np\rightarrow \infty} [{W(1/r, Np)} - {W(1, Np)}]_{\alpha} \rightarrow 0^-. 
	\end{equation}
	According to Intermediate Value Theorem, there exists at least a $\tilde{p}$ such that 
	\begin{multline}
		[{W(1/r, N\tilde{p})} - {W(1, N\tilde{p})}]_{\alpha}=	\int^{\overline{Q}}_{\underline{Q}} N\tilde{p}(F(x)-1)F(x)x^{-2}\\
		\times\left(e^{ N \tilde{p}(F(x)-1)/r}-re^{ N \tilde{p}(F(x)-1)}\right) dx=0\end{multline}
	According to First Mean Value Theorem for Definite Integrals, this equality can be represented as 
	\begin{equation}
		\int^{\overline{Q}}_{\underline{Q}}(F(x)-1)F(x) \left(e^{\frac{ N \tilde{p}(F(x)-1)}{r}}-re^{ N \tilde{p}(F(x)-1)}\right)dx=0.\end{equation}	
	
	Then, we prove that the $\tilde{p}$ is unique, and satisfies \eqref{eq:optimal-p}. Taking the first order derivative of $[{W(1/r, Np)} - {W(1, Np)}]_{\alpha}$ with respect to $Np$, we have 
	\begin{multline}
		[[{W(1/r, N{p})} - {W(1, N{p})}]_{\alpha}]_{Np} \\
		= \int^{\overline{Q}}_{\underline{Q}}\left(e^{ N p(F(x)-1)\frac{1}{r}}\frac{1}{r}-e^{ N p(F(x)-1)}r\right)\\
		\times Np(F(x)-1)^2F(x)x^{-2}dx.
	\end{multline}
	According to First Mean Value Theorem for Definite Integrals, there exists a $\xi\in[\overline{Q},\underline{Q}]$ such that 
	\begin{multline}
		[[{W(1/r, N{p})} - {W(1, N{p})}]_{\alpha}]_{Np} \\
		=N p(F(\xi)-1) \int^{\overline{Q}}_{\underline{Q}}(F(x)-1)F(x)x^{-2} \\ 
		\times 
		\left(e^{ N p(F(x)-1)\frac{1}{r}}\frac{1}{r} -e^{ N p(F(x)-1)}r\right) dx.
	\end{multline}
	Substituting ${W(1/r, N\tilde{p})} - {W(1, N\tilde{p})}]_{\alpha} = 0$, we have 
	\begin{multline}
		[[{W(1/r, N\tilde{p})} - {W(1, N\tilde{p})}]_{\alpha}]_{Np} \\
		=N \tilde{p}(F(\xi)-1) \int^{\overline{Q}}_{\underline{Q}}(F(x)-1)F(x)x^{-2} \\
		\times
		e^{ N \tilde{p}(F(x)-1)\frac{1}{r}}(\frac{1}{r} -1)dx\leq 0.
	\end{multline}
	Hence, similarly, as a result of $[[{W(1/r, N\tilde{p})} - {W(1, N\tilde{p})}]_{\alpha}]_{Np} \leq 0$, the $\tilde{p}$ has to be unique. If it is not unique, there must exist at least one other $\hat{p}\neq \tilde{p}$ satisfying $[{W(1/r, N\hat{p})} - {W(1, N\hat{p})}]_{\alpha}$ such that $[[{W(1/r, N\hat{p})} - {W(1, N\hat{p})}]_{\alpha}]_{Np}>0$, which contradicts $[[{W(1/r, N\hat{p})} - {W(1, N\hat{p})}]_{\alpha}]_{Np} \leq   0$. This shows the uniqueness of $\tilde{p}$, and shows that $\tilde{p}$ satisfies \eqref{eq:optimal-p}.
	
	Based on \emph{Claim (1)} and \emph{Claim (2)}, we now prove Lemma \ref{lemma:optimal-alpha}. Considering $\Delta W(r,{\alpha^{1C},p}) \triangleq W(\alpha^{1C},Np) - W(\alpha^{3C},Np)$ with $\alpha^{3C} = \min\{r\alpha,1\}$ under the following three cases:
	\begin{itemize}
		\item  $p<\tilde{p}$:  $[{W(1/r, Np)} - {W(1, Np)}]_{\alpha}>0$. According to \emph{Claim (1)}, this means that $\Delta W(r,{\alpha^{1C},p})$ is increasing in $\alpha^{1C}$ when $\alpha^{1C}<1/r$. Note that when $\alpha^{1C}\geq 1/r$, $\Delta W(r,{\alpha^{1C},p})$ is decreasing in $\alpha^{1C}$, as $W(\alpha^{1C},Np)$ is decreasing in $\alpha^{1C}$, and $W(\alpha^{3C},Np)$ is fixed. Hence, the optimal $\alpha^{1C}$ achieves at $\alpha^{1C}=\alpha^*_{Np} =1/r$. 
		\item  $p=\tilde{p}$: $[{W(1/r, Np)} - {W(1, Np)}]_{\alpha}= 0$, so  $\alpha^*_{Np} = 1/r$. 
		\item $p>\tilde{p}$:  $[{W(1/r, Np)} - {W(1, Np)}]_{\alpha}<0$. According to  \emph{Claim (1)},  $\alpha^*_{Np} = \tilde{\alpha}_{Np}\leq 1/r$. 
	\end{itemize}
\end{proof}
\begin{lemma}[Optimal $\alpha^{1C}$ and $Np$]\label{lemma:optimal-alpha-p}For any ratio $r\geq 1$ and distribution $f(x)$, the optimal $\alpha^{1C}$ and $Np$ that maximizes  $\Delta W(r,{\alpha^{1C},p})$ is as follows:
	\begin{equation}
		\alpha^{1C} = {\alpha}_{N\tilde{p}}^* = 1/r,
	\end{equation}
	and  $ \tilde{p}$  satisfies
	\begin{equation}
		\int^{\overline{Q}}_{\underline{Q}}(F(x)-1)F(x)\left(e^{ \frac{N \tilde{p}(F(x)-1)}{r}}-re^{ N \tilde{p}(F(x)-1)}\right) dx = 0.
	\end{equation}
\end{lemma}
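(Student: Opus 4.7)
The plan is to exploit Lemma \ref{lemma:optimal-alpha}, which already pins down the optimal $\alpha^{1C}$ as a function of $p$; the remaining task is then just a one-dimensional optimization of $\Phi(p) \triangleq \Delta W(r, \alpha^*_{Np}, p)$ over $p$. The cleanest route I see is a reparametrization $c \triangleq Np\,\alpha^{1C}$. By inspecting \eqref{eq:CPU-N-lower}, $W(\alpha, Np)$ depends on $(\alpha, Np)$ only through the product $Np\alpha$; since $\alpha^{3C} = \min\{r\alpha^{1C}, 1\}$ yields $Np\,\alpha^{3C} = \min\{rc, Np\}$, the energy reduction $\Delta W$ becomes a function of $(c, Np)$ alone, which is the key structural observation.

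Next, I would split the admissible range $c \in [0, Np]$ into two regimes. For $c \leq Np/r$ (i.e., $\alpha^{1C} \leq 1/r$), the objective reduces to
\begin{equation*}
g(c) \triangleq \int^{\overline{Q}}_{\underline{Q}} \left[e^{c(F(x)-1)} - e^{rc(F(x)-1)}\right] F(x) x^{-2}\, dx,
\end{equation*}
which is independent of $p$. The uniqueness argument used in Claim (1) of the proof of Lemma \ref{lemma:optimal-alpha} carries over verbatim to show that $g$ has a unique maximizer $\tilde{c}$, with $g'(c) > 0$ on $[0, \tilde{c})$ and $g'(c) < 0$ on $(\tilde{c}, \infty)$. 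For $c > Np/r$ (so that $\alpha^{3C} = 1$ is saturated) the objective becomes $\int [e^{c(F(x)-1)} - e^{Np(F(x)-1)}] F(x) x^{-2}\, dx$, which is strictly decreasing in $c$ because $F(x) - 1 \leq 0$. Hence for any fixed $p$ the inner maximizer is $c^\star(p) = \min\{\tilde{c}, Np/r\}$, and $\Phi(p) = g(\min\{\tilde{c}, Np/r\})$.

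Combining the two regimes finishes the argument. On $p \in [0, \tilde{c}r/N]$ the inner argument equals $Np/r \leq \tilde{c}$ and $g$ is strictly increasing there, so $\Phi$ is strictly increasing; for $p \geq \tilde{c}r/N$ the inner argument saturates at $\tilde{c}$ and $\Phi \equiv g(\tilde{c})$. Thus the smallest $p$ attaining the maximum is $p^\star = \tilde{c}r/N$, at which $\alpha^{1C} = c^\star/(Np^\star) = 1/r$. Identifying $p^\star$ with $\tilde{p}$ in the statement comes from translating the first-order condition $g'(\tilde{c}) = 0$ through $\tilde{c} = N\tilde{p}/r$, which yields exactly the integral equation in the lemma and coincides with the transition condition already derived in Claim (2) of the proof of Lemma \ref{lemma:optimal-alpha}. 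The main obstacle I anticipate is spotting the reparametrization $c = Np\alpha$ that collapses the apparent two-variable optimization into a one-variable problem whose regime boundary coincides precisely with the interface $\alpha^{1C} = 1/r$; once this structure is visible, the strict monotonicity in the constrained regime and the flatness in the unconstrained regime are both immediate consequences of the unique-maximizer property of $g$, so no machinery beyond what was already developed for Lemma \ref{lemma:optimal-alpha} is needed.
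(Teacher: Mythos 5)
Your proof is correct in substance but takes a genuinely different, and arguably cleaner, route than the paper. The paper keeps the two variables $(\alpha^{1C},p)$ separate: it invokes Lemma \ref{lemma:optimal-alpha} to split into the cases $p\leq\tilde{p}$ and $p>\tilde{p}$, and in each case differentiates $\Delta W(r,\alpha^*_{Np},Np)$ with respect to $Np$ directly, showing the derivative is nonnegative in the first regime and (after substituting the first-order condition for $\tilde{\alpha}_{Np}$) exactly zero in the second, so that $p=\tilde{p}$ attains the maximum. Your reparametrization $c=Np\,\alpha$ exploits the fact that $W$ in \eqref{eq:CPU-N-lower} depends on $(\alpha,Np)$ only through their product, which collapses the problem to maximizing a single univariate function $g(\min\{\tilde c, Np/r\})$; this makes the flatness of $\Delta W$ for $p>\tilde p$ and the monotone increase for $p<\tilde p$ immediate consequences of the unique-maximizer property of $g$, rather than facts to be verified by separate derivative computations. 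What the paper's route buys is that it reuses Lemma \ref{lemma:optimal-alpha} as a black box; what yours buys is transparency about why the optimum sits exactly at the interface $\alpha^{1C}=1/r$, $\tilde c = N\tilde p/r$.

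One caveat: your first-order condition $g'(\tilde c)=0$ reads
\begin{equation*}
\int^{\overline{Q}}_{\underline{Q}}(F(x)-1)F(x)\,x^{-2}\left(e^{\frac{N\tilde{p}(F(x)-1)}{r}}-re^{N\tilde{p}(F(x)-1)}\right)dx=0,
\end{equation*}
which retains the weight $x^{-2}$, whereas the condition displayed in the lemma omits it. The paper passes from the weighted to the unweighted form by an appeal to the first mean value theorem for integrals, a step that is not actually justified because the sign-changing factor cannot be pulled out of the integral; your weighted version is the correct stationarity condition, so you should not claim it ``yields exactly the integral equation in the lemma'' --- it yields the weighted variant, and the discrepancy lies in the paper's statement rather than in your argument.
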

\begin{proof}
	We will discuss the two cases $p\leq\tilde{p}$ and  $p\geq\tilde{p}$ one by one, and show that under each of these two cases,  $\Delta W(r,{\alpha}_{Np}^*,p)$ is maximized at $p=\tilde{p}$.
	
	Under $p\leq\tilde{p}$, according to Lemma \ref{lemma:optimal-alpha}, $\alpha^*_{Np} = 1/r$. Then, $\Delta W(r,{\alpha}_{Np}^*,Np)$ is given by 
	\begin{equation}
		\Delta W(r,{\alpha}_{Np}^*,Np) = {W(1/r, Np)} - {W(1, Np)}.
	\end{equation}
	Taking the first-order derivative of ${W(1/r, Np)} - {W(1, Np)}$ with respect to $Np$, we have 
	\begin{multline}
		[{W(1/r, Np)} - {W(1, Np)}]_{Np} = \int^{\overline{Q}}_{\underline{Q}}(F(x)-1)F(x)x^{-2} \\
		\times \left(e^{ N p(F(x)-1)\frac{1}{r}}\frac{1}{r} -e^{ N p(F(x)-1)}\right) dx\geq 0.
	\end{multline}
	This shows that $\Delta W(r,{\alpha}_{Np}^*,Np)$ is maximized at $p=\tilde{p}$.
	
	Under $p\leq\tilde{p}$,  $\alpha^*_{Np} = \tilde{\alpha}_{Np}\leq 1/r$, which is $\Delta W(r,{\alpha}_{Np}^*,Np) = W(\tilde{\alpha}_{Np},Np)- W(r\tilde{\alpha}_{Np},Np)$. Taking derivative of $\Delta W(r,{\alpha}_{Np}^*,Np)$ with respect to $Np$, we have 
	\begin{multline}
		[\Delta W(r,{\alpha}_{Np}^*,Np)]_{Np} = \tilde{\alpha}_{Np} \int^{\overline{Q}}_{\underline{Q}}(F(x)-1)F(x)x^{-2} \\
		\times \left(e^{ N p(F(x)-1)\tilde{\alpha}_{Np}}-re^{ N p(F(x)-1)\tilde{\alpha}_{Np} r}\right) dx.
	\end{multline}
	On the other hand, since that 
	\begin{multline}
		[\Delta W(r,{\alpha}_{Np}^*,Np)]_{\alpha} =  Np \int^{\overline{Q}}_{\underline{Q}}(F(x)-1)F(x)x^{-2} \\
		\times  \left(e^{ N p(F(x)-1)\tilde{\alpha}_{Np}}-re^{ N p(F(x)-1)\tilde{\alpha}_{Np} r}\right) dx = 0.
	\end{multline}
	Hence, $[\Delta W(r,{\alpha}_{Np}^*,Np)]_{Np} = 0$, so that $\Delta W(r,{\alpha}_{Np}^*,Np)$ is fixed as $Np$ changes. We can say that $p=\tilde{p}$ maximizes $\Delta W(r,{\alpha}_{Np}^*,Np)$.
	
	To sum up,  the $\alpha^{1C} = {\alpha}_{N\tilde{p}}^* = 1/r$ and the $\tilde{p}$  maximizes  $\Delta W(r,{\alpha^{1C},p})$.
\end{proof}

Based on Lemma \ref{lemma:optimal-alpha-p}, the $\alpha^{1C}  = 1/r$ and $p=\tilde{p}$ satisfying $\int^{\overline{Q}}_{\underline{Q}}(F(x)-1)F(x)\left(e^{{N \tilde{p}(F(x)-1)}/{r}}-re^{ N \tilde{p}(F(x)-1)}\right) dx = 0$ maximizes  $\Delta W(r,{\alpha^{1C},p})$. Through plugging in these $\alpha^{1C}$ and the $\tilde{p}$, we obtain the maximum energy reduction as in Theorem \ref{thm:maxenergy}.

\subsection{Truncated Normal Distribution}\label{app:trun}
Intuitively, a truncated normal distribution $f(x;\mu,\sigma,a,b) $ is a normal distribution $N(\mu,\sigma)$ but lies within range $(a,b)$. 
Formally, according to paper \cite{distribution}, the truncated distribution is defined as follows:
\begin{equation}
	f(x;\mu,\sigma,a,b) = \frac{\phi(\frac{x-\mu}{\sigma})}{\sigma\left(\Phi\left(\frac{b-\mu}{\sigma}\right)-\Phi\left(\frac{a-\mu}{\sigma}\right)\right)}, \forall x\in(a,b),
\end{equation}
where functions $\phi(\xi)$ and $\Phi(\xi) $ are given by
\begin{equation}
	\phi(\xi) = \frac{1}{\sqrt{2\pi}}e^{-\frac{1}{2}\xi^2},
\end{equation}
\begin{equation}
	\Phi(\xi) = \frac{1}{2}\left(1+\frac{2}{\sqrt{\pi}}\int^{\frac{\xi}{\sqrt{2}}}_{0}e^{-t^2}dt\right).
\end{equation}

\subsection{Analysis and Proof for Equation \eqref{eq:cache}}\label{app:eqcache}
The analysis for caching will be similar as it for communication/computation. We will first formulate the expected energy consumption, then introduce the analysis idea. In the random group $G(N,p)$, suppose each device joins the cooperative system with a probability $\alpha\in[0,1]$. Under  these case, retrieving each content will have an expected energy of $Z(\alpha,Np)$.\footnote{To clarify, if the requested content cannot be found in the cache of the devices in the cooperative system, the content will be downloaded by some devices. So the expected energy here corresponds to the downloading energy. In addition, under the homogeneous distribution settings in Section \ref{subsec:analysis-setting}, retrieving any content will have the same expected energy consumption, so we only need to study the expected energy consumption of a content.} Under the 1C model, let us denoted the probability that each device joins the caching sharing model as $\alpha^{1C}\in[0,1]$; under the 3C framework, the corresponding probability is $\alpha^{3C}=\min\{r\alpha^{1C},1\}$, where $r\geq 1$ is a coefficient reflecting the ratio of the increased cooperation opportunities. We will compute the energy reduction $\Delta Z(r,\alpha^{1C},Np) \triangleq Z(\alpha^{1C},Np)-Z(\alpha^{3C},Np)$ for $r\geq1$. 

In the random graph, for a device with degree $m$, the probability that $\hat{N}$ of his neighbors implement the content sharing  is $P(\hat{N}|m) = C_{m}^{\hat{N}}\alpha^{\hat{N}}(1-\alpha)^{m-\hat{N}}$, and the expected probability that the content has not been cached by these $\hat{N}$ devices and itself is $\left(1-\frac{\Mcache}{K}\right)^{\hat{N}+1}$. Taking the expectation over $\hat{N}=\{0,1,...,m\}$, the expected probability that the content has to be downloaded is given by 
\begin{equation}
	\hat{Z}_m(\alpha,Np) = \sum_{\hat{N}=0}^{m} P(\hat{N}|m) \left(1-\frac{\Mcache}{K}\right)^{\hat{N}+1}.
\end{equation}
Taking the expectation of $\hat{Z}(\alpha,m)$ over all degrees $m=\{0,1,...,\infty\}$,  the expected probability that the content has to be downloaded is 
\begin{equation}\label{eq:caching-energy}
	{Z}(\alpha,Np)  = (1-\frac{\Mcache}{K}) e^{-\alpha Np \frac{\Mcache}{K}},
\end{equation}
which is the Equation \eqref{eq:cache}. To clarify, we do not consider the sharing of downloading resources here, so the expected energy consumption of a content is the  expected probability that the content has to be downloaded multiplied by an expected downloading energy of a device. Normalizing the expected downloading energy to be one, \eqref{eq:caching-energy} is the  expected energy consumption of a content.

\subsection{Proof for Theorem \ref{thm:2}}\label{app:thrm2}
We first present two lemmas indicating the optimal $\alpha^{1C}$ and $p$  that maximizes $\Delta Z(r,{\alpha^{1C},p}) \triangleq {Z}(\alpha^{1C},Np) - {Z}(\alpha^{3C},Np) $ for any $r\geq 1$, then show the maximum energy reduction, i.e., $\max_{\alpha^{1C},Np} \Delta Z(r,{\alpha^{1C},Np})$.

\begin{lemma}[Optimal $\alpha^{1C}$ under a Particular $Np$]\label{lemma:caching-alpha}
	For any ratio $r$, connection probability $p$, and distribution $f(x)$, there exists a $\alpha^{1C}=\alpha^*_{Np}$ that maximizes the  reduction $\Delta Z(r,{\alpha^{1C},Np})$, i.e.,
	\begin{equation}
		\alpha^*_{Np} = \left\{\begin{array}{ll}
			\frac{1}{r}, & Np\leq \frac{r\ln r }{(r-1)\frac{\Mcache}{K}}\\
			\frac{\ln r}{(r-1)Np \frac{\Mcache}{K}}, & Np> \frac{r\ln r}{(r-1)\frac{\Mcache}{K}}
		\end{array}\right.
	\end{equation}
\end{lemma}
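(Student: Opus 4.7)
The plan is to exploit the closed-form expression \eqref{eq:cache} for $Z(\alpha,Np)$ to write $\Delta Z(r,\alpha^{1C},Np)$ as an explicit piecewise smooth function of $\alpha^{1C}$, and then maximize it by a case analysis on whether the unconstrained first-order condition lies in the interval $[0,1/r]$ or not. Since $\alpha^{3C}=\min\{r\alpha^{1C},1\}$, the energy reduction splits as
\begin{equation*}
\Delta Z(r,\alpha^{1C},Np)=\left(1-\tfrac{\Mcache}{K}\right)\!\begin{cases}e^{-\alpha^{1C}Np\frac{\Mcache}{K}}-e^{-r\alpha^{1C}Np\frac{\Mcache}{K}},&\alpha^{1C}\le 1/r,\\[2pt]e^{-\alpha^{1C}Np\frac{\Mcache}{K}}-e^{-Np\frac{\Mcache}{K}},&\alpha^{1C}>1/r.\end{cases}
\end{equation*}

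First, I would dispose of the regime $\alpha^{1C}>1/r$: here the second term is constant while the first is strictly decreasing in $\alpha^{1C}$, so $\Delta Z$ is strictly decreasing on $(1/r,1]$. Hence any maximizer lies in $[0,1/r]$, and by continuity it suffices to search on the closed interval $[0,1/r]$ using the first branch of the expression above.

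Next, on $\alpha^{1C}\in[0,1/r]$, I would differentiate with respect to $\alpha^{1C}$, obtaining (after dividing out the positive constant $(1-\Mcache/K)Np\frac{\Mcache}{K}$) the factor $-e^{-\alpha^{1C}Np\Mcache/K}+r\,e^{-r\alpha^{1C}Np\Mcache/K}$. Setting this to zero and taking logarithms gives the unique stationary point
\begin{equation*}
\tilde{\alpha}=\frac{\ln r}{(r-1)Np\frac{\Mcache}{K}}.
\end{equation*}
The derivative is positive at $\alpha^{1C}=0$ (it equals $r-1\ge 0$) and strictly decreasing in $\alpha^{1C}$ on $[0,\infty)$, so $\tilde{\alpha}$ is the unique interior maximizer of the first branch on $[0,\infty)$ and $\Delta Z$ is concave-like on this branch.

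The final step is to intersect $\tilde{\alpha}$ with the constraint $\alpha^{1C}\le 1/r$. The condition $\tilde{\alpha}\le 1/r$ is equivalent, after rearrangement, to $Np\ge \frac{r\ln r}{(r-1)\Mcache/K}$. If this inequality holds, then $\tilde{\alpha}$ is feasible and is the desired maximizer. Otherwise, $\tilde{\alpha}>1/r$, the derivative on the first branch remains positive throughout $[0,1/r]$, and since the function is strictly decreasing for $\alpha^{1C}>1/r$, the boundary point $\alpha^{1C}=1/r$ is the maximizer. Combining the two cases yields exactly the expression claimed for $\alpha^*_{Np}$.

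The main obstacle is really just bookkeeping at the kink $\alpha^{1C}=1/r$: one must verify that the two pieces agree in value there (they both give $(1-\Mcache/K)(e^{-Np\Mcache/(rK)}-e^{-Np\Mcache/K})$) and that the left-derivative at the kink matches the sign analysis used above, so that the boundary point is indeed a legitimate maximizer in the subthreshold regime. Everything else is routine single-variable calculus on a smooth function.
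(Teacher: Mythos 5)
Your proposal is correct in substance and reaches the stated answer, and its skeleton matches the paper's: both characterize the unconstrained maximizer of $Z(\alpha,Np)-Z(r\alpha,Np)$ and then handle the cap $\alpha^{3C}=\min\{r\alpha^{1C},1\}$, i.e., clip at $\alpha^{1C}=1/r$ (the paper phrases the clipping via a threshold $\tilde{p}=\frac{r\ln r}{(r-1)\Mcache/K}$ on the connection probability, which is exactly your condition $\tilde{\alpha}\le 1/r$ rearranged). Where you differ is in execution: the paper declares the proof ``similar to'' its communication/computation lemma and implicitly reuses the Intermediate Value Theorem plus first-mean-value-theorem-for-integrals machinery needed there for a general $F(x)$, whereas you exploit the closed form \eqref{eq:cache} to get existence, uniqueness, and the explicit formula $\tilde{\alpha}=\frac{\ln r}{(r-1)Np\Mcache/K}$ by elementary calculus. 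For the caching case your route is cleaner and actually supplies the omitted details.

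One intermediate claim is false as written and should be repaired, though the repair is one line. You assert that the derivative factor $g(\alpha)\propto re^{-r\alpha c}-e^{-\alpha c}$ (with $c=Np\Mcache/K$) is strictly decreasing on $[0,\infty)$. It is not: $g'(\alpha)=c\bigl(e^{-\alpha c}-r^{2}e^{-r\alpha c}\bigr)$ becomes positive once $\alpha>\frac{2\ln r}{(r-1)c}$, so $g$ decreases and then increases back toward $0^{-}$. What you actually need — and what is true — is only that $g$ changes sign exactly once, from positive to negative, at $\tilde{\alpha}$. This follows immediately from the factorization $g(\alpha)=e^{-\alpha c}\bigl(re^{-(r-1)\alpha c}-1\bigr)$: the first factor is positive and the second is strictly decreasing with a unique zero at $\tilde{\alpha}$. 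With that substitution, your stationary-point uniqueness, the ``derivative remains positive throughout $[0,1/r]$ when $\tilde{\alpha}>1/r$'' step, and the kink bookkeeping all go through, and the case split on $Np\lessgtr\frac{r\ln r}{(r-1)\Mcache/K}$ yields the lemma.
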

The proof idea is similar as it for Lemma \ref{lemma:optimal-alpha}, and we omit the details. Specifically, we first prove two claims. \emph{Claim (1):} for any $Np$, there exists a unique $\tilde{\alpha}_{Np}$ that maximizes ${Z({\alpha},Np)} - {Z(r{\alpha},Np)}$, and it satisfies
\begin{equation}\label{eq:optimal-alpha2}
	[{Z(\alpha,Np)} - {Z(r{\alpha},Np)}]_{\alpha}\left\{\begin{array}{ll}
		> 0,&   \alpha<\tilde{\alpha}_{Np}\\
		= 0,&   \alpha=\tilde{\alpha}_{Np}\\
		< 0,& \alpha>\tilde{\alpha}_{Np}\end{array}\right.,
\end{equation}
where 
\begin{equation}
	\tilde{\alpha}_{Np} = \frac{\ln r}{(r-1)Np \frac{\Mcache}{K}}.
\end{equation}
\emph{Claim (2):} there exists a unique $\tilde{p} $ that satisfies
\begin{equation}\label{eq:optimal-p2}
	[{Z(1/r, N\tilde{p})} - {Z(1, N\tilde{p})}]_{\alpha}\left\{\begin{array}{ll}
		>0,& p<\tilde{p}\\
		=0,& p=\tilde{p}\\
		<0,& p>\tilde{p}\\
	\end{array}
	\right.,
\end{equation}
where 
\begin{equation}
	\tilde{p}= \frac{r\ln r}{(r-1)\frac{\Mcache}{K}}.
\end{equation}
Then, using these two claims, we can prove Lemma \ref{lemma:caching-alpha}.

\begin{lemma}[Optimal $\alpha^{1C}$ and $\tilde{p}$]\label{lemma:optimal-alpha-p2} For any ratio $r\geq1$ and distribution $f(x)$, the optimal $\alpha^{1C}$ and $\tilde{p}$ that maximizes $\Delta Z(r,{\alpha^{1C},p})$ is as follows:
	\begin{equation}
		\alpha^{1C} = \alpha^*_{N\tilde{p}} = 1/r,~
		\tilde{p}= \frac{r\ln r}{(r-1)\frac{\Mcache}{K}}.
	\end{equation}	
\end{lemma}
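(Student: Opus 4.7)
The plan is to follow the same two-case structure used in the proof of Lemma \ref{lemma:optimal-alpha-p}, exploiting the fact that Lemma \ref{lemma:caching-alpha} already gives us the inner maximizer $\alpha^*_{Np}$ in closed form. I would substitute $\alpha^*_{Np}$ into $\Delta Z(r,\alpha^{1C},p)$, turning the bivariate maximization into a univariate one over $Np$, and then show that the resulting profile is non-decreasing for $Np \leq N\tilde{p}$ and constant for $Np \geq N\tilde{p}$, so that the maximum is attained at $Np = N\tilde{p}$ with the associated $\alpha^{1C} = \alpha^*_{N\tilde{p}} = 1/r$.

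First, for the regime $Np \leq N\tilde{p}$, Lemma \ref{lemma:caching-alpha} gives $\alpha^*_{Np} = 1/r$ and therefore $\alpha^{3C} = \min\{r\alpha^*_{Np},1\} = 1$. Using the closed form $Z(\alpha,Np) = (1-M^{ca}/K)\,e^{-\alpha Np\,M^{ca}/K}$, I would compute
\begin{equation*}
\Delta Z(r,\alpha^*_{Np},p) = (1-M^{ca}/K)\bigl(e^{-Np\,M^{ca}/(rK)} - e^{-Np\,M^{ca}/K}\bigr),
\end{equation*}
differentiate in $Np$, and check that the derivative is non-negative precisely when $Np \leq r\ln r/((r-1)M^{ca}/K) = N\tilde{p}$. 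This monotonicity implies that over the first regime the maximum is attained at the right endpoint $Np = N\tilde{p}$.

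Next, for the regime $Np \geq N\tilde{p}$, Lemma \ref{lemma:caching-alpha} gives $\alpha^*_{Np} = \ln r/((r-1)Np\,M^{ca}/K)$, so that $\alpha^*_{Np} Np\,M^{ca}/K = \ln r/(r-1)$ and $r\alpha^*_{Np} Np\,M^{ca}/K = r\ln r/(r-1)$, both of which are independent of $Np$. Plugging these into $Z$, I get
\begin{equation*}
\Delta Z(r,\alpha^*_{Np},p) = (1-M^{ca}/K)\bigl(e^{-\ln r/(r-1)} - e^{-r\ln r/(r-1)}\bigr),
\end{equation*}
which is a constant on this regime. Combining the two regimes, the maximum of $\Delta Z(r,\alpha^{1C},p)$ over $(\alpha^{1C},p)$ is achieved at $p = \tilde{p}$ with $\alpha^{1C} = 1/r$, which is exactly the claim of the lemma.

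I do not expect any real obstacle here: the caching objective $Z(\alpha,Np)$ is an exponential in the product $\alpha Np$, so the inner maximizer either saturates at $1/r$ or sits on the hyperbola $\alpha Np = \ln r/((r-1)M^{ca}/K)$ on which $\Delta Z$ is automatically flat. The only routine calculation is the monotonicity check on the first regime, which reduces to verifying $\ln r \geq Np\,M^{ca}(r-1)/(rK)$, i.e., $Np \leq N\tilde{p}$. This is essentially the same mechanism that made Lemma \ref{lemma:optimal-alpha-p} work, but in a cleaner form because the caching energy admits an explicit closed form rather than a general integral over $F(x)$.
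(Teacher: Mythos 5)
Your proposal is correct and follows essentially the same route the paper takes (and merely sketches): substitute the inner maximizer $\alpha^*_{Np}$ from Lemma \ref{lemma:caching-alpha}, then check the two regimes $p\leq\tilde p$ and $p\geq\tilde p$, showing $\Delta Z$ is non-decreasing in $Np$ on the first (the derivative sign condition $\ln r \geq Np\,M^{ca}(r-1)/(rK)$ checks out) and constant on the second because $\alpha^*_{Np}Np$ is fixed on the hyperbola. Your write-up actually supplies the explicit calculations the paper omits, but the mechanism is identical to the paper's proof of Lemma \ref{lemma:optimal-alpha-p}.
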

The proof idea is similar as it for Lemma \ref{lemma:optimal-alpha-p}, and we omit the details. Specifically, we can check each of the case $p\leq \tilde{p}$ and $p\geq \tilde{p}$, and show that $p\tilde{p}$ maximizes $\Delta Z(r,{\alpha^*_{Np},p})$.

Based on Lemma \ref{lemma:optimal-alpha-p2}, $\alpha^{1C} = 1/r$ and ${p}= {r\ln r}/((r-1){\Mcache}/{K})$ maximizes $\Delta Z(r,{\alpha^{1C},p})$. Through plugging in the $\alpha^{1C}$ and the ${p}$, we obtain Theorem \ref{thm:2}.

\subsection{Simulation and Performance}\label{app:exp}
We consider a scenario of with a set of $N$ devices, who form pair-wise connections with probability $p$. Each device has one task to execute. For each experiment, we perform 100 times (if not specified) and show the average results. In each time of an experiment, we randomly generate the parameters of the device and task models, including devices' capacities, tasks' demands (computation requirement and content sizes), and energy consumption coefficients.
These parameters follows truncated normal distribution \cite{distribution}, which is a normal distribution but lies within range $(a,b)$, i.e., for $x\in(a,b)$,
\begin{equation}
	f(x;\mu,\sigma,a,b) = \frac{\phi(\frac{x-\mu}{\sigma})}{\sigma\left(\Phi\left(\frac{b-\mu}{\sigma}\right)-\Phi\left(\frac{a-\mu}{\sigma}\right)\right)},
\end{equation}
where functions $\phi(\xi)$ and $\Phi(\xi) $ are given by
\begin{equation}
	\phi(\xi) = \frac{1}{\sqrt{2\pi}}e^{-\frac{1}{2}\xi^2},
\end{equation}
\begin{equation}
	\Phi(\xi) = \frac{1}{2}\left(1+\frac{2}{\sqrt{\pi}}\int^{\frac{\xi}{\sqrt{2}}}_{0}e^{-t^2}dt\right).
\end{equation}
The distribution range $(a,b)$ of the parameters are shown in Table \ref{table:para}. The relative values of the maximum downloading, uploading, and D2D transmission capacities are based on references \cite{down-cap} and \cite{wifi-direct-cap}, and the relative values of the maximum energy per time are based on paper \cite{energy} and \cite{up-energy}. We pick the same value for the maximum computation capacity and demand, so that performing a computation subtask is one second on average. In addition, for each of these parameters, we set $\mu=(a+b)/2$ and $\sigma=1.0$ (if not specified), under which the distribution is similar to a uniform distribution within range $(a,b)$. In addition, the content sizes are set to be one. Each device caches a random number of contents in its local cache and requests a random number of content input and content output. The uploading contents and caching contents are randomly selected from the content output. Moreover, each tasks' delay constraints are randomly generated parameters that can guarantee that noncooperation decision is always in the feasible set of Problem $(\text{OPT})$.
\begin{table}[t]
	\caption{Parameter Settings in Experiments}\label{table:para}
	\begin{center}
		\begin{tabular}{cccccc}
			\hline
			Parameter & $(a,b)$ & Parameter & $(a,b)$ &Parameter & $(a,b)$ \\
			\hline
			$\Qdown$ &$(0,10)$ &$\Qup$&$(0,4)$&$\Qd2d$ &$(0,50)$\\
			$\Qcpu$ &$(0,10)$&$\Dcpu$& $(0,10)$ &  $\edown$ & $(0,2.8)$\\
			$\ecpu$&$(0,1.2)$&$\eup$&$(0,2.8)$&$\ed2d$&$(0,0.8)$\\
			\hline			
		\end{tabular}
	\end{center}
\end{table}

\end{document}